\newcommand{\de}{\ensuremath{\partial}}
\newcommand{\dee}{\ensuremath{\textrm{d}}}
\newcommand{\fd}[2]{\ensuremath{ \frac{\dee #1}{\dee #2}}}
\newcommand{\fdf}[1]{\ensuremath{ \frac{\dee}{\dee #1}}}
\newcommand{\inty}[4]{\ensuremath{ \int_{#1}^{#2} \! #3 \, \dee#4 }}
\newcommand{\field}[1]{\mathbb{#1}}
\newcommand{\ip}[2]{\ensuremath{ \left< #1 | #2 \right> } }
\theoremstyle{plain}
\newtheorem{theorem}{Theorem}[section]
\newtheorem{definition}[theorem]{Definition}
\newtheorem{property}[theorem]{Property}
\newtheorem{remark}[theorem]{Remark}
\newtheorem{lemma}[theorem]{Lemma}
\newtheorem{corollary}[theorem]{Corollary}
\newtheorem{proposition}[theorem]{Proposition}
\newtheorem{example}[theorem]{Example}
\newtheorem*{problem}{Problem}
\newtheorem{difficulty}[]{Difficulty}
\numberwithin{equation}{section}
\let\oldhat\hat
\renewcommand{\hat}[1]{\oldhat{\boldsymbol{#1}}}
\title[Propagation through a one-dimensional band crossing]{Wavepackets in inhomogeneous periodic media: propagation through a one-dimensional band crossing} 
\author{Alexander Watson and Michael I. Weinstein}
\begin{document}

\begin{abstract}
We consider a model of an electron in a crystal moving under the influence of an external electric field: Schr\"{o}dinger's equation in one spatial dimension with a potential which is the sum of a periodic function $V$ and a smooth function $W$. We assume that the period of $V$ is much shorter than the scale of variation of $W$ and denote the ratio of these scales by $\epsilon$. We consider the dynamics of \emph{semiclassical wavepacket} asymptotic (in the limit $\epsilon \downarrow 0$) solutions which are spectrally localized near to a \emph{crossing} of two Bloch band dispersion functions of the periodic operator $- \frac{1}{2} \de_z^2 + V(z)$. We show that the dynamics is qualitatively different from the case where bands are well-separated: at the time the wavepacket is incident on the band crossing, a second wavepacket is `excited' which has \emph{opposite} group velocity to the incident wavepacket. We then show that our result is consistent with the solution of a `Landau-Zener'-type model. 
\end{abstract}

\maketitle


\section{Introduction} \label{sec:introduction}
\noindent In this work we study the non-dimensionalized, semi-classically scaled, time-dependent Schr\"{o}dinger equation for $\psi^\epsilon (x,t):\field{R} \times [0,\infty) \rightarrow \field{C}$:
\begin{equation} \label{eq:original_equation}
\begin{split}
	&i \epsilon \de_t \psi^\epsilon = - \frac{1}{2} \epsilon^2 \de_x^2 \psi^\epsilon + V\left(\frac{x}{\epsilon}\right)\psi^\epsilon + W(x) \psi^\epsilon\ \equiv\ H^\epsilon\ \psi^\epsilon	\\
	&\psi^\epsilon(x,0) = \psi^\epsilon_0(x).
\end{split}
\end{equation}
Here, $\epsilon$ is a positive real parameter which we assume to be small. We assume throughout that the function $V$ is smooth and 1-periodic so that:
\begin{equation} \label{eq:potential_periodic}
	V(z + 1) = V(z) \text{ for all $z \in \field{R}$},
\end{equation}
and that $W$ is smooth with all derivatives uniformly bounded (this assumption may be relaxed; see Remark 1.2 of \cite{2017WatsonWeinsteinLu}). Equation (\ref{eq:original_equation}) is the independent-particle approximation in condensed matter physics \cite{ashcroft_mermin} for the dynamics of an electron in a crystal described by periodic potential $V$, under the influence of an external electric field generated by a `slowly varying' potential $W$.

Let $E_n$ denote the $n$th Bloch band dispersion function of the periodic operator $- \frac{1}{2} \de_z^2 + V(z)$. It is known that \cite{carles_sparber,2017WatsonWeinsteinLu} for any uniformly \emph{isolated}, or non-degenerate, band $E_n$ (see Figure \ref{fig:lowest_bands_cos_2pix}) there exists a family of explicit asymptotic solutions of (\ref{eq:original_equation}) known as \emph{semiclassical wavepackets} which, for any fixed positive integer $N$, approximate exact solutions up to `Ehrenfest time' $t \sim \ln 1/\epsilon$ up to errors of order $\epsilon^{N}$ in $L^2_x$. The center of mass and average quasi-momentum of these solutions evolve (up to errors of size $o(1)$) along classical trajectories generated by the `Bloch band' Hamiltonian $\mathcal{H}_n := E_n(p)+W(q)$. We refer to such an asymptotic solution as a \emph{wavepacket associated with the band $E_n$}. The `Ehrenfest' time-scale of validity of the asymptotics is known to be the general limiting time-scale of  validity of wavepacket, or coherent state, approximations \cite{schubert_vallejos_toscano}. These results generalize to $d$-dimensional analogs of \eqref{eq:original_equation} \cite{2017WatsonWeinsteinLu,carles_sparber}, see also \cite{1991Gerard,1997GerardMarkowichMauserPoupaud,1996PoupaudRinghofer}.

\begin{figure}
\includegraphics[scale=.4]{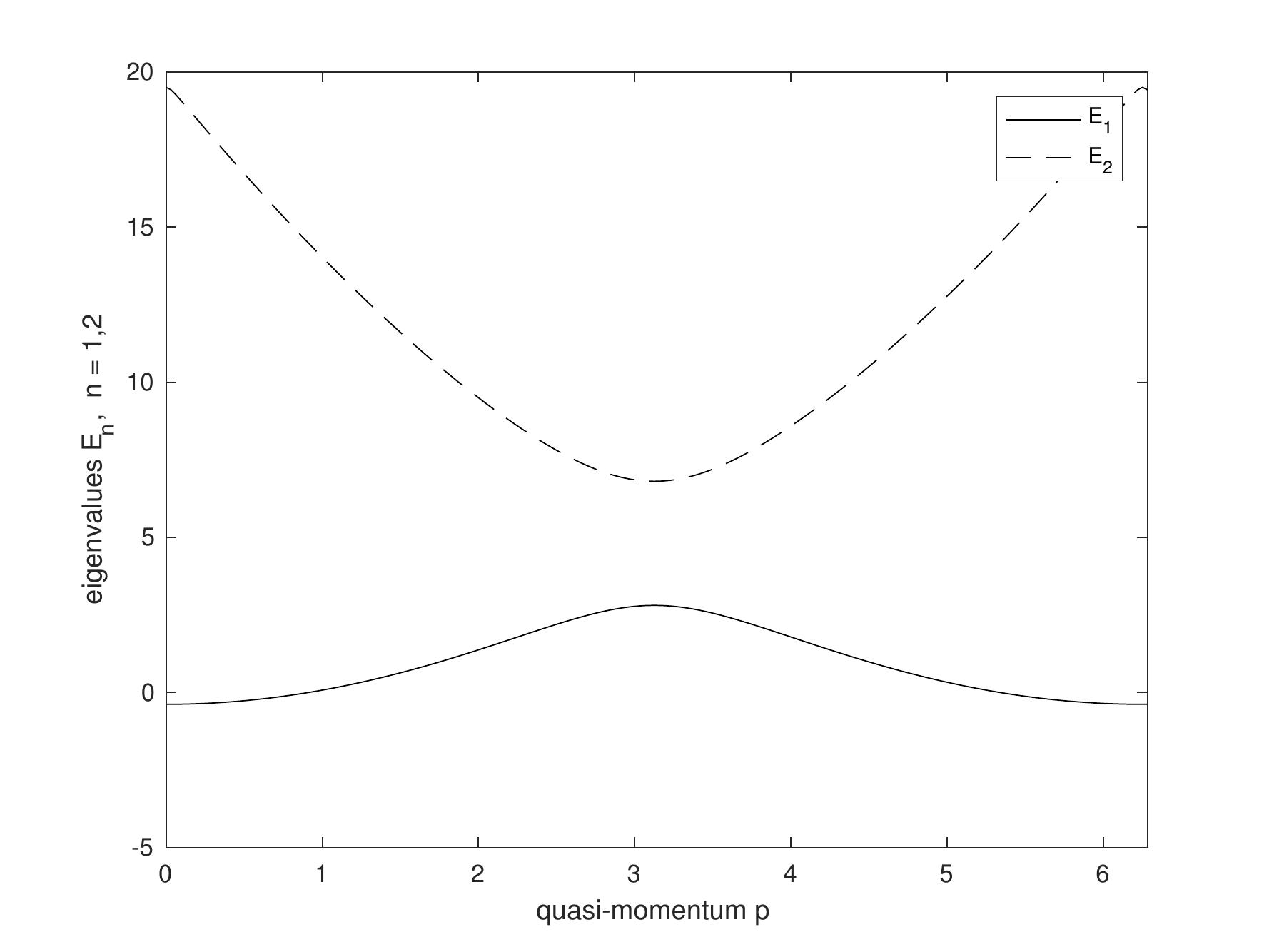}
\caption{ Plot of the two lowest Bloch band dispersion functions $E_1(p), E_2(p)$ for the 1-periodic potential  $V(z) = 4\cos(2\pi z)$. Both bands are \emph{isolated} from each other and all other bands: for all $p \in [0,2\pi]$, $G(E_2(p)) > 0$ and $G(E_1(p)) > 0$ where $G(E_n(p))$ is the spectral band gap function \eqref{eq:nth_spectral_gap_function}. Consequently, the maps $p \mapsto E_1(p), E_2(p)$ are smooth. }
\label{fig:lowest_bands_cos_2pix}
\end{figure}

In this work we consider the following question concerning the dynamics of wave-packets in a situation where two Bloch bands are \emph{not} isolated: 
\begin{problem}
Consider equation (\ref{eq:original_equation}) with initial conditions given by a wavepacket associated with a band $E_n$, for which the classical trajectories associated with $\mathcal{H}_n$ pass through 
 a point in phase space, $(q^*,p^*)$,  where the Bloch band $E_n$ is degenerate; see Figure \ref{fig:crossing_bands}. How are the dynamics different from the isolated band case?\end{problem}
\noindent More precisely, suppose that two bands $E_n(p), E_{n+1}(p)$ touch at a quasi-momentum $p^*$ in the Brillouin zone, but are otherwise non-degenerate in a neighborhood of $p^*$ (see Figure \ref{fig:crossing_bands}). We study a wavepacket associated with the band $E_n$ initially localized in phase space on a classical trajectory $(q(t),p(t))$ generated by $\mathcal{H}_n$ which encounters the crossing after some finite time $t^*$: for some $t^* > 0$, $\lim_{t \uparrow t^*} p(t) = p^*$\footnote{The scenario we consider is \emph{non-generic}: for generic periodic potentials in one spatial dimension, all bands are isolated \cite{reed_simon_4,kuchment,1976Simon}. Non-trivial periodic potentials exhibiting band crossings may be explicitly displayed (see Section \ref{examples}).}.

\begin{figure} 
\includegraphics[scale=.4]{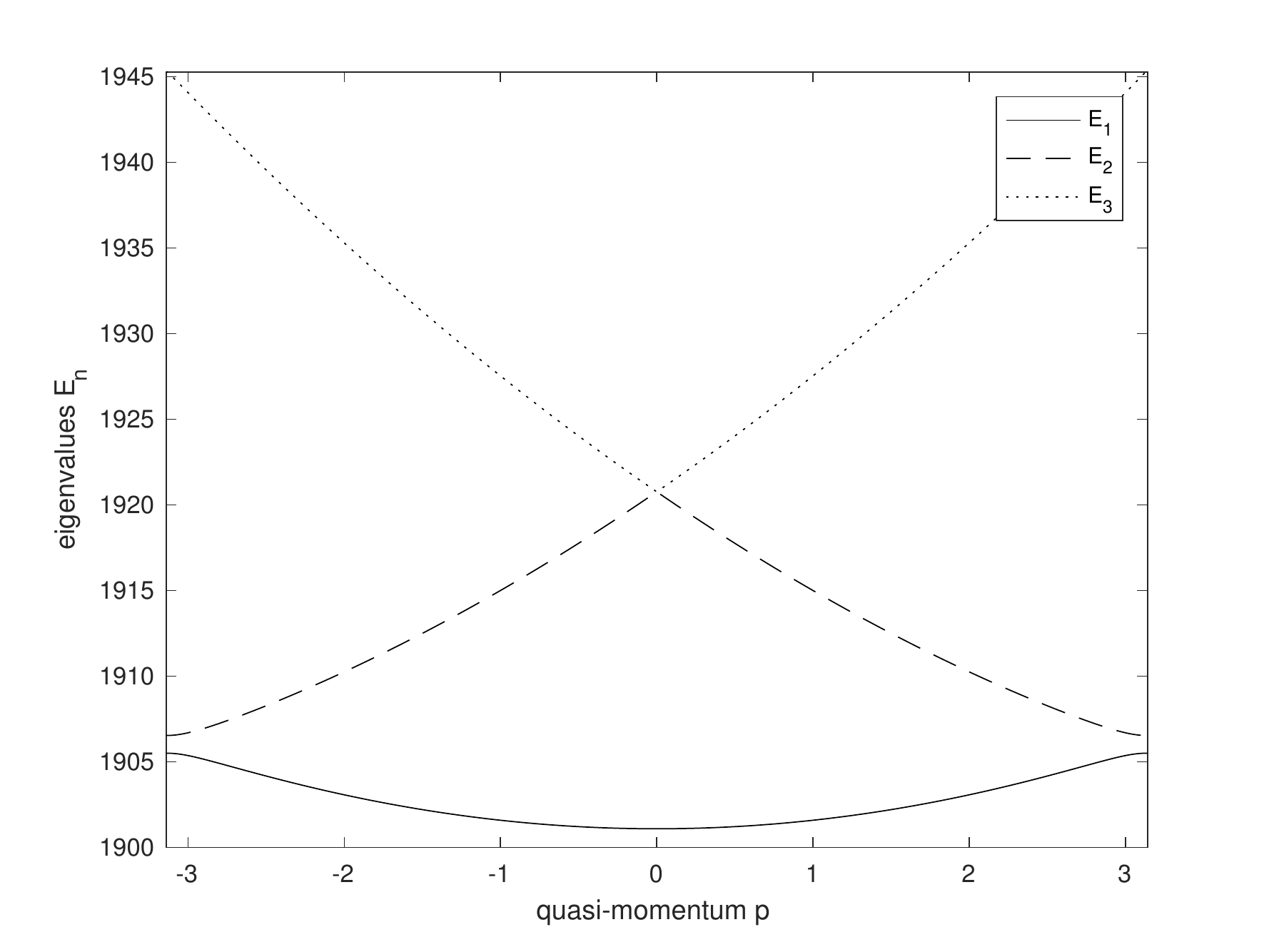}
\caption{Plot of the three lowest Bloch band dispersion functions for $V(z) = \wp_{1/2,i\omega'}(z + i\omega')$, the `one-gap' potential (see Example \ref{ex:n_gap_potentials}), with $\omega' = .8$. The band $E_1(p)$ is isolated over the whole Brillouin zone $[-\pi,\pi]$, but the bands $E_2(p), E_3(p)$ are \emph{degenerate} at $p = 0$. For this choice of potential, for \emph{all} integers $n \geq 2$ the band $E_n(p)$ is degenerate with the band $E_{n+1}(p)$ at either $p = 0$ or $p = \pi$. The one and only  gap is shown.}
\label{fig:crossing_bands}
\end{figure}

We now give a rough statement of our main results; a more precise statement is given in Section \ref{sec:statement_of_results}. Assume that an `incident' wavepacket is \emph{driven} through the crossing so that $\lim_{t \uparrow t^*} \dot{p}(t) = \lim_{t \uparrow t^*} \de_q W(q(t)) \neq 0$. For a precise set up, see the Band Crossing Scenario (Property \ref{classical_path_assumption}). Our main results address: 
\begin{enumerate}
\item {\bf Quantifying the breakdown of `single-band' description as $t \uparrow t^*$; Theorem \ref{th:limit_of_single_band_approximation}}: Fix any positive integer, $N$. For $t \ll t^*$,  the solution of (\ref{eq:original_equation}) can be represented as a wavepacket associated with the band $E_n$ with errors which are ${O}((\sqrt\epsilon)^{N})$ in $L^2(\mathbb{R})$. As $t \uparrow t^*$, this `single-band' description fails to capture the dynamics of the PDE to any order in $\sqrt\epsilon$ higher than order $(\sqrt\epsilon)^0=1$, since it does not incorporate an excited wave associated with the band $E_{n+1}$ whose norm grows to be of the order $\sqrt{\epsilon}$ as  $t$ approaches $t^*$ on the non-adiabatic time-scale $s=(t-t^*)/\sqrt\epsilon$. 
\item {\bf  Coupling of degenerate bands and excitation of a reflected wave-packet; Theorem \ref{th:band_crossing_theorem}:} For $t \sim t^*$ and for $t \gg t^*$ the solution of (\ref{eq:original_equation}) is well-approximated by the sum of two semiclassical wavepackets: a `transmitted' wavepacket associated with the band $E_{n+1}$ with $L^2$-norm of order $1$ and a `reflected' wavepacket associated with the band $E_n$ with $L^2$-norm of order $\sqrt\epsilon$ (Figure \ref{fig:centers_of_mass}). The size of the error terms is $o(\sqrt\epsilon)$ in $L^2(\mathbb{R})$.  The expansion is constructed via a rigorous matched-asymptotic analysis in which the `transmitted'  and `reflected' wave-packets evolve on the emergent non-adiabatic time-scale $s=(t-t^*)/\sqrt\epsilon$. 
\end{enumerate}
In Appendix \ref{app:consistency} we show that these results are consistent with the solution of an appropriate `Landau-Zener'-type model.
\begin{figure}
\includegraphics[scale=.4]{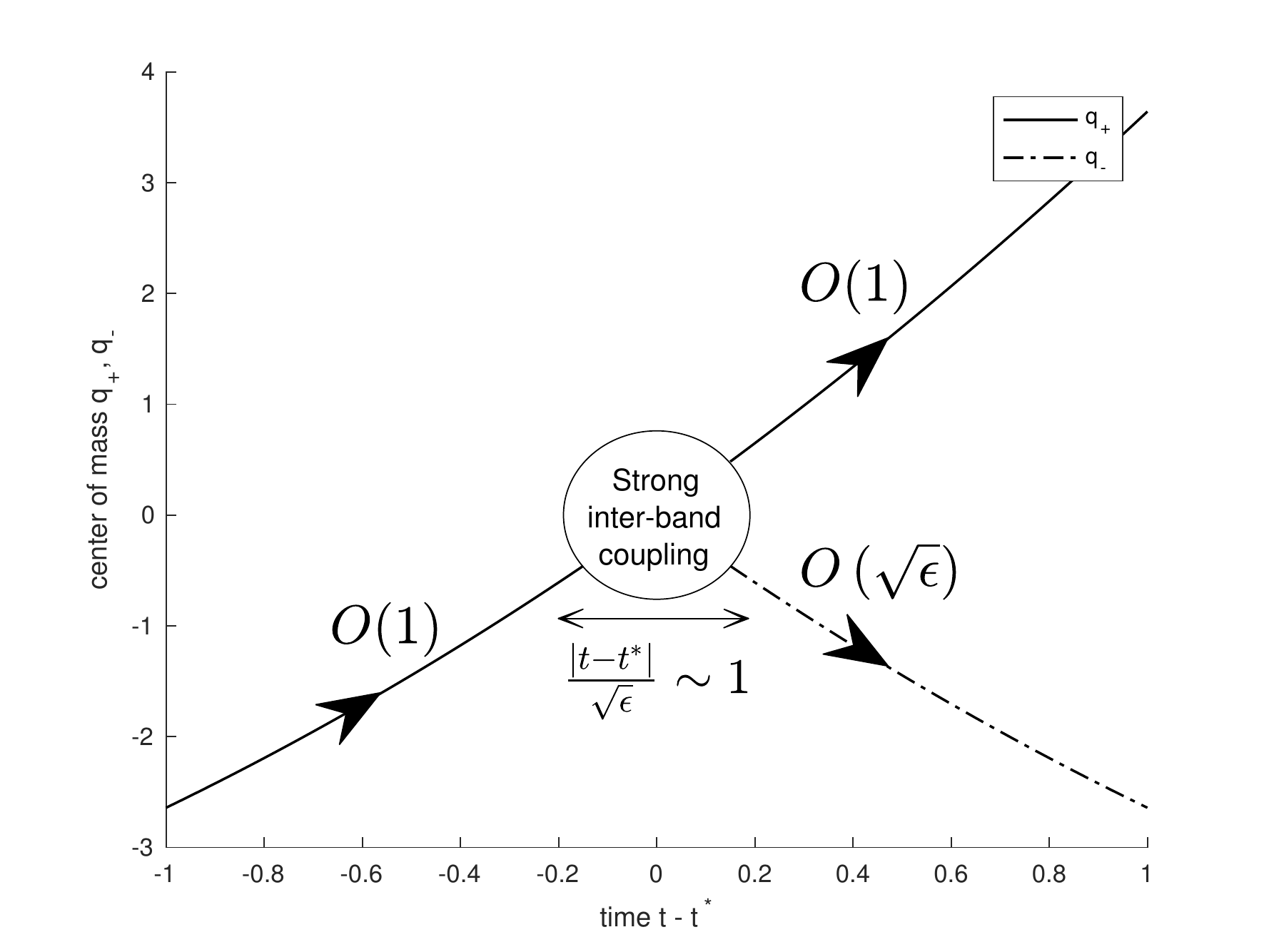}
\caption{Schematic of  center of mass location versus time of the `incident/transmitted' wavepacket $q_+(t)$ and the `excited/reflected' wavepacket $q_-(t)$, which satisfy \eqref{eq:+_band_classical_system} and \eqref{eq:-_band_classical_system}. As $t$ approaches $t^*$ the average quasi-momentum of the incident wavepacket is \emph{degenerate}, {\it i.e.} $p_+(t^*) = p^*$ where $E_+(p^*) = E_-(p^*)$. Inter-band coupling is non-negligible, occurs over the emergent non-adiabatic time-scale $s := \frac{t - t^*}{\sqrt\epsilon}=\mathcal{O}(1)$  and  excites a  second wavepacket. The size in $L^2$ of the `excited' wavepacket is smaller than that of the `incident' wavepacket by a factor of $\sqrt{\epsilon}$ and proportional to the `coupling coefficient' $\ip{\chi_-(\cdot;p^*)}{\de_p \chi_+(\cdot;p^*)}$. Here $E_\pm(p), \chi_\pm(z;p)$ refer to the \emph{smooth continuations} of the band eigenpairs $E_n(p), E_{n+1}(p), \chi_n(z;p), \chi_{n+1}(z;p)$ through the crossing (see Property \ref{band_crossing_assumption} and Figure \ref{fig:crossing_bands_smooth}). Such continuations always exist at one-dimensional band crossings (Theorem \ref{th:all_crossings_smooth}).}
\label{fig:centers_of_mass}
\end{figure}

The proofs of Theorems \ref{th:limit_of_single_band_approximation} and \ref{th:band_crossing_theorem} rely on the existence of \emph{smooth continuations} of the Bloch band dispersion functions $E_n, E_{n+1}$ through the crossing point $p^*$; see Property \ref{band_crossing_assumption} and Figure \ref{fig:crossing_bands_smooth}.  Such continuations always exist in one spatial dimension and more generally at arbitrary codimension 1 eigenvalue band crossings (see Theorem \ref{th:all_crossings_smooth}, and \cite{hagedorn} for the general case). The key tools in our proof are the method of matched asymptotic expansions (following Hagedorn \cite{hagedorn}) combined with the basic result of Carles and Sparber \cite{carles_sparber} on semiclassical wavepacket solutions in the presence of a periodic background potential.  

Our proof does not readily generalize to cases where no smooth continuation of eigenvalue bands exists; for example at `conical', or `Dirac', points which are codimension 2 eigenvalue band crossings \cite{fefferman_weinstein_diracpoints,fefferman_weinstein,fefferman_leethorp_weinstein,fefferman_leethorp_weinstein_2,2017FeffermanLee-ThorpWeinstein,2017FeffermanLee-ThorpWeinstein_2}. The dynamics of semiclassical wavepackets at such crossings was studied by Hagedorn in the context of the Born-Oppenheimer approximation of molecular physics \cite{hagedorn}. Adapting his methods to the present context is the subject of ongoing work. A model of the dynamics at a `conical' codimension 2 Bloch band degeneracy was derived in \cite{2016Fermanian-KammererMehats}. 
\begin{figure}
\includegraphics[scale=.4]{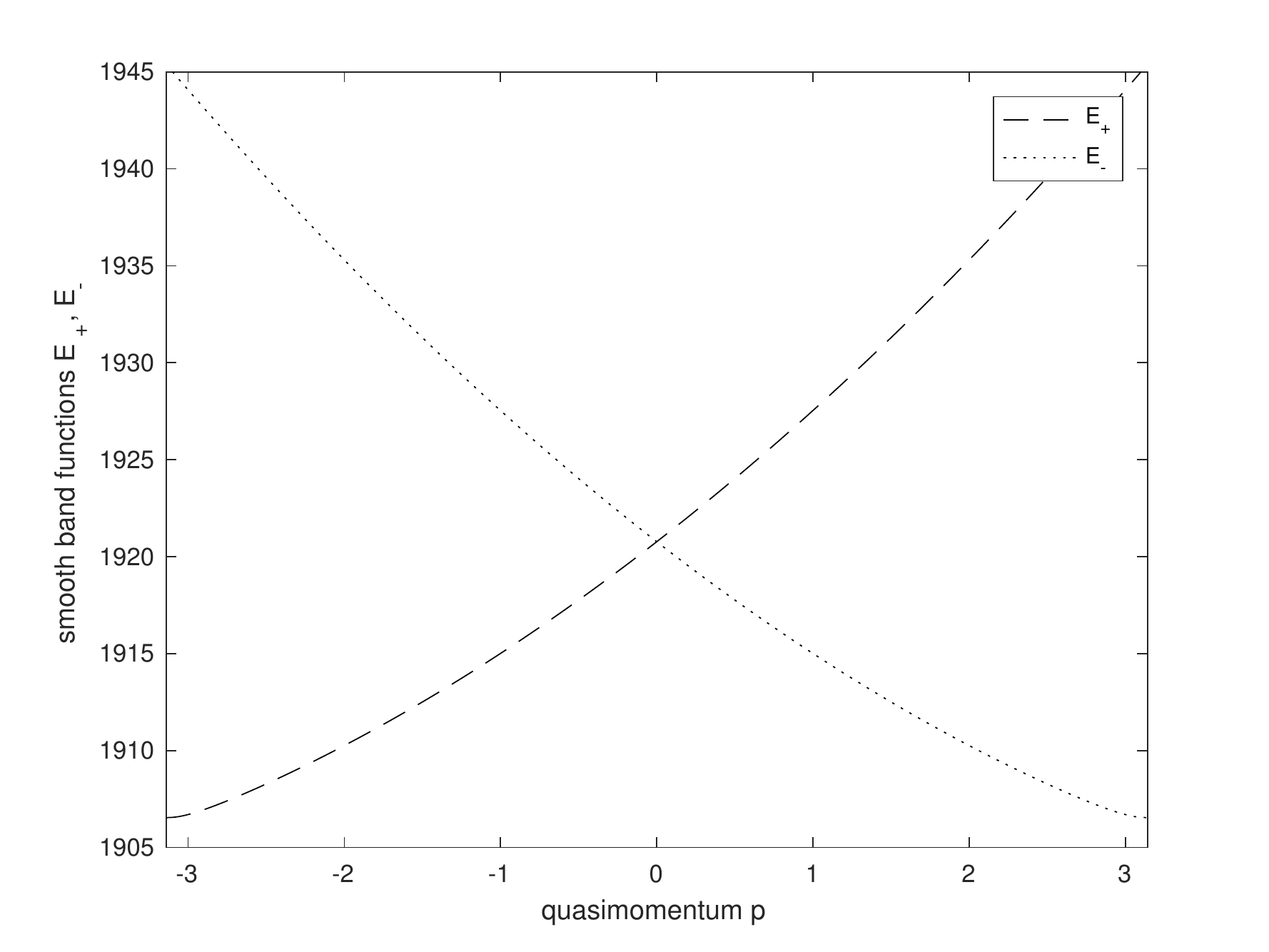}
\caption{Plot of the maps $E_+(p), E_-(p)$ defined by \eqref{eq:def_smooth_bands} with $n = 2$ and where $E_2(p), E_3(p)$ are the second and third lowest Bloch band dispersion functions when $V(z) = \wp_{1/2,i\omega'}(z + i\omega')$, the `one-gap' potential with $\omega' = .8$. The lowest three Bloch bands of this potential are shown in Figure \ref{fig:crossing_bands}.} 
\label{fig:crossing_bands_smooth}
\end{figure}

Quantum dynamics at an eigenvalue band crossing was studied by Landau \cite{1932Landau} and Zener \cite{1932Zener} in the 1930s. Their simplified model, involving an explicitly time-dependent Hamiltonian, captures many of the essential features of the dynamics in a neighborhood of the crossing.

For a review of their work, see \cite{Nakamura}. A rigorous proof of their main result, the `Landau-Zener' formula for the probability of an inter-band transition, was given by Hagedorn \cite{1991Hagedorn} for the case of an `avoided' crossing. Validity of the formula for true crossings was then proved by Joye \cite{1994Joye}. In Appendix \ref{app:consistency} we show that our main theorem (Theorem \ref{th:band_crossing_theorem}) is consistent with the solution of an appropriate `Landau-Zener'-type simplified model of the present problem. 

The propagation of Wigner measures through crossings (both true and avoided) in the context of the Born-Oppenheimer approximation has been studied by Fermanian-Kammerer, Lasser, and others \cite{2002Fermanian-KammererGerard,2003Fermanian-KammererGerard,2005LasserTeufel,2008Fermanian-KammererLasser,2013ChaiJinLi,2013Fermanian-KammererGerardLasser,2014ChaiJinLiMorandi,2017Fermanian-KammererLasser}. The dynamics of semiclassical wavepackets at an avoided crossing in this context was studied by Hagedorn and Joye \cite{hagedorn_joye}. A discussion of crossing phenomena from the perspective of normal forms and microlocal analysis was given by Colin de Verdiere et al. \cite{1999ColindeVerdiere,2003ColinDeVerdiere,2004ColinDeVerdiere}.

\subsection{Notation} \label{sec:notation}
\begin{itemize}
\item We introduce a natural set of function spaces which encode both smoothness and spatial decay. For every $l \in \field{N}$:
\begin{equation} \label{eq:sigma_l_spaces}
	\Sigma^l(\field{R}) := \left\{ f \in L^2(\field{R}) : \| f \|_{\Sigma^l} := \sum_{|\alpha| + |\beta| \leq l} \| y^\alpha (- i \de_y)^\beta f(y) \|_{L^2_y} < \infty \right\}.
\end{equation}
\item The space of Schwartz functions $\mathcal{S}(\field{R})$ is the space of functions defined as:
\begin{equation}
	\mathcal{S}(\field{R}) := \cap_{l \in \field{N}} \Sigma^l(\field{R}). 
\end{equation}
\item We will refer throughout to the space of $L^2$-integrable functions which are $1-$periodic:
\begin{equation}
	L^2_{per} := \left\{ f \in L^2_{loc}(\field{R}) : f(z+1) = f(z) \text{ at almost every $z \in \field{R}$} \right\}.
\end{equation}
\item For functions of period $1$, the  Brillouin zone $\mathcal{B}$ may be chosen to be any real interval of length $2\pi$. To arrange that the band degeneracy
 occurring at quasi-momentum $p^*=\pi$ is located at 
an interior point of $\mathcal{B}$, we fix $\mathcal{B} := [0,2\pi]$. 
\item Conventions for  $L^2$-inner products and induced norms:
\begin{equation}
	\ip{ f }{ g }_{L^2(\mathcal{D})} := \inty{\mathcal{D}}{}{ \overline{f(x)} g(x) }{x}, \quad \| f \|_{L^2(\mathcal{D})} := \ip{f}{f}_{L^2(\mathcal{D})}^{1/2} 
\end{equation}
For brevity, when $\mathcal{D} = \field{R}$ we omit the domain of integration:
\begin{equation}
	\ip{f}{g}_{L^2} := \inty{\field{R}}{}{ \overline{f(x)} g(x) }{x}, \quad \| f \|_{L^2} := \ip{f}{f}_{L^2}^{1/2},
\end{equation}
and when $\mathcal{D} = [0,1]$ we omit all subscripts:
\begin{equation}
	\ip{f}{g} := \inty{[0,1]}{}{ \overline{f(x)} g(x) }{x}, \quad \| f \| := \ip{f}{f}^{1/2}.
\end{equation}
\item We make the following convention for the Fourier transform and its inverse:
\begin{equation}
	\mathcal{F}_x \{ f(x) \}(\xi) := \inty{- \infty}{\infty}{ e^{- 2 \pi i \xi x} f(x) }{x}, \quad \mathcal{F}^{-1}_{\xi} \{ g \}(x) := \inty{- \infty}{\infty}{ e^{2 \pi i \xi x} g(\xi) }{\xi}.
\end{equation}
\end{itemize}

\subsection*{Acknowledgements} The authors wish to thank George Hagedorn, Jianfeng Lu, and Christof Sparber for stimulating discussions. This research was supported in part by National Science Foundation Grant Nos. DMS-1412560, DMS-1620418 and Simons Foundation Math + X Investigator Award \#376319 (Michael I. Weinstein).

\section{Review of Floquet-Bloch theory and the isolated band theory of  wavepackets} \label{sec:statement_of_results}
\subsection{Floquet-Bloch theory}
In order to state our results we require some background on the spectral theory of the Schr\"{o}dinger operator:
\begin{equation} \label{eq:periodic_operator}
	H := - \frac{1}{2} \de^2_z + V(z)
\end{equation}
where $V$ is 1-periodic (see \cite{kuchment, reed_simon_4} for proofs and details). Consider the family of self-adjoint eigenvalue problems parameterized by the real parameter $p$: 
\begin{equation} \label{eq:quasi_periodic_eigenvalue_problem}
\begin{split}
	&H \Phi(z;p) = E(p) \Phi(z;p)	\\	
	&\Phi(z + 1;p) = e^{i p} \Phi(z;p) \text{ for all } z \in \field{R}.
\end{split}
\end{equation}
Because of the explicit $2 \pi$-periodicity of the boundary condition, there is no loss of generality in restricting our attention to $p \in \mathcal{B}$, where $\mathcal{B}$ is any real interval of length $2 \pi$. $\mathcal{B}$ is usually fixed to be $[-\pi,\pi]$ or $[0,2\pi]$ and referred to as the Brillouin zone. The eigenvalue problem \eqref{eq:quasi_periodic_eigenvalue_problem} is equivalent (by the transformation $\Phi(z;p) = e^{i p z} \chi(z;p)$) to the family of self-adjoint eigenvalue problems with 1-periodic boundary conditions:
\begin{equation} \label{eq:reduced_eigenvalue_problem}
\begin{split}
	&H(p) \chi(z;p) = E(p) \chi(z;p)	\\	
	&\chi(z + 1;p) = \chi(z;p) \text{ for all } z \in \field{R} \\
	&H(p) := \frac{1}{2} ( p - i \de_z )^2 + V(z).
\end{split}
\end{equation}
For fixed $p$, the spectrum of the equivalent operators \eqref{eq:quasi_periodic_eigenvalue_problem} and \eqref{eq:reduced_eigenvalue_problem} is real and discrete and the eigenvalues can be ordered with multiplicity: 
\begin{equation}
	E_1(p) \leq E_2(p) \leq ... \leq E_n(p) \leq ... \, .
\end{equation}
The maps $p \mapsto E_n(p)$, for $p$ varying over $\mathcal{B}$, are known as the spectral band functions. Their graphs are Lipschitz continuous and are called the dispersion curves of $H$. The set of all dispersion curves as $p$ varies over $\mathcal{B}$ is called the band structure of $H$ (\ref{eq:periodic_operator}). The associated normalized eigenfunctions $\chi_n(z;p)$ of \eqref{eq:reduced_eigenvalue_problem} are a basis of the space:
\begin{equation}
	L^2_{per} := \left\{ f \in L^2_{loc} : f(z + 1) = f(z) \text{ at almost every $z \in \field{R}$} \right\},
\end{equation} 
and any function in $L^2(\field{R})$ may be expressed as a superposition of \emph{Bloch waves}:
\begin{equation} \label{eq:bloch_waves}
	\left\{ \Phi_n(z;p) = e^{i p z} \chi_n(z;p) : n \in \field{N}, p \in \mathcal{B} \right\}. 
\end{equation}
The $L^2$-spectrum of the operator (\ref{eq:periodic_operator}) is the union of the real intervals swept out by the spectral band functions $E_n(p)$:
\begin{equation}
	\sigma(H)_{L^2(\field{R}^d)} = \cup_{n \in \field{R}} \left\{ E_n(p) : p \in \mathcal{B} \right\}.
\end{equation}
We define a measure of the spectral gap or separation at quasimomentum $p \in \mathcal{B}$ between $E_n$ and all other spectral band functions satisfying (\ref{eq:reduced_eigenvalue_problem}):
\begin{equation} \label{eq:nth_spectral_gap_function}
	G(E_n(p)) := \min_{m \neq n} | E_n(p) - E_{m}(p) |. 
\end{equation}
We make the following definitions:
\begin{definition} \label{def:definition_isolated_degenerate}
Let $E_n(p)$ denote an eigenvalue band of either of the equivalent eigenvalue problems \eqref{eq:quasi_periodic_eigenvalue_problem}, \eqref{eq:reduced_eigenvalue_problem} and let $\tilde{p} \in \mathcal{B}$. If:
\begin{equation}
	G(E_n(\tilde{p})) > 0,
\end{equation}
then we will say that $E_n(p)$ is \emph{isolated} at $\tilde{p}$. If:
\begin{equation}
	G(E_n(\tilde{p})) = 0,
\end{equation}
then we will say that $E_n(p)$ is involved in a \emph{Bloch band degeneracy} or \emph{band crossing} at $\tilde{p}$.
\end{definition}

\subsection{Isolated band theory}
\begin{property}[Isolated Band Property] \label{isolated_band_assumption}
Let $E_n$ denote a band dispersion function satisfying (\ref{eq:reduced_eigenvalue_problem}) for $p \in \mathcal{B}$. Let  $t_0 < t_1\le\infty$ and $q_{0}, p_{0} \in \field{R} \times \mathcal{B}$ be such that the equations of motion of the classical Hamiltonian $\mathcal{H}_n(p,q) := E_n(p) + W(q)$:
\begin{align} 
	\label{eq:classical_system}&\dot{q}(t) = \de_p E_n(p(t)), &&\dot{p}(t) = - \de_q W(q(t)) 	\\
	\nonumber&q(t_0) = q_0 &&p(t_0) = p_{0}
\end{align}
have a unique smooth solution $(q(t),p(t))$ for $t \in [t_0,t_1)$ such that $E_n$ is isolated along the trajectory $(q(t),p(t))$ for $t \in [t_0,t_1)$; i.e: 
\begin{equation}
\label{min-gap}
	M(t_0,t_1) := \inf_{t \in [t_0,t_1)} G(E_n(p(t))) > 0,
\end{equation}
where $G(E_n(p))$ is defined by (\ref{eq:nth_spectral_gap_function}). 
\end{property} 
For arbitrary constant $S_0 \in \field{R}$ we let $S(t)$ denote the classical action along the path $(q(t),p(t))$: 
\begin{equation} \label{eq:action_integral}
	S(t) = S_0 + \inty{t_0}{t}{ p(t') \de_{p} E_n(p(t')) - E_n(p(t')) - W(q(t')) }{t'}.
\end{equation} 
For arbitrary $a^0_0(y) \in \mathcal{S}(\field{R})$, let $a^0(y,t)$ denote the unique solution of Schr\"{o}dinger's equation with a time-dependent harmonic oscillator Hamiltonian depending parametrically on the classical trajectory $(q(t),p(t))$ with initial data specified at $t_0$ by $a^0_0(y)$:
\begin{equation} 
\begin{split} \label{eq:envelope_equation} 
	&i \de_t a^0(y,t) = \mathscr{H}(t) a^0(y,t),	\\
	&\mathscr{H}(t) := \frac{1}{2} \de_p^2 E_n(p(t)) (- i \de_y)^2 + \frac{1}{2} \de^2_q W(q(t)) y^2 + \de_q W(q(t)) \mathcal{A}_n(p(t)),	\\
	&a^0(y,t_0) = a^0_0(y).
\end{split}
\end{equation}
Here, $p\in\mathcal{B}\mapsto\mathcal{A}_n(p)$ denotes the $n$-th band Berry connection (see Section \ref{sec:notation} for conventions regarding inner products and norms):  
\begin{equation} \label{eq:berry_connection}
	\mathcal{A}_{n}(p) := i \ip{\chi_n(\cdot;p))}{\de_{p}\chi_n(\cdot;p)}.
\end{equation}
Since the $\chi_n(z;p)$ are assumed normalized: for all $p\in\field{R}$, $\| \chi_n(\cdot,p) \| = 1$,
it follows that  $\mathcal{A}_n(p)$ is real-valued. The term $\de_q W(q(t)) \mathcal{A}_n(p(t))$ in \eqref{eq:envelope_equation} depends only on $t$ and leads to an overall phase shift in the solution of \eqref{eq:envelope_equation} known as Berry's phase.
\begin{remark} \label{rem:adiabatic_gauge}
Given any path $p(t)$ through parameter space which does not self-intersect (i.e. $p(t_1) = p(t_2) \implies t_1 = t_2$ for all $t_1, t_2$) it is possible to choose phases of the eigenfunctions $\chi_n(z;p)$ in such a way that the Berry connection \eqref{eq:berry_connection} is zero when evaluated along the curve $p(t)$ for all $t$. This choice is known as the \emph{adiabatic gauge}. See Proposition 3.1 of \cite{hagedorn}, for example. In dimensions greater than $1$, the integral of the Berry connection along a closed path through parameter space is not, in general, zero and is physically observable \cite{berry}.
\end{remark} 


We now state a mild refinement of the result of Carles-Sparber \cite{carles_sparber} which we find more 
directly applicable: 
\begin{theorem}[Order $1$ wave-packet] \label{th:isolated_band_theorem_1}
Let $(q(t),p(t))$ denote the classical trajectory generated by the Hamiltonian $\mathcal{H}_n(p,q)=E_n(p)+W(q)$,
where $p\mapsto E_n(p)$ denotes the $n^{th}$ spectral band function for the periodic Schr\"{o}dinger operator  $-\frac12 \partial_z^2+V(z)$.
Assume that band $E_n$ satisfies the Isolated Band Property \ref{isolated_band_assumption} along the trajectory 
 $(q(t),p(t))$ for $t\in[t_0,t_1)$, {\it i.e.}  $M(t_0,t_1)>0$; see \eqref{min-gap}. 
 
 Let $S(t)$ be as in \eqref{eq:action_integral} with $S_0 \in \field{R}$ and $a^0(y,t)$ be the unique solution of \eqref{eq:envelope_equation} with initial data $a_0^0(y)\in\mathcal{S}(\mathbb{R})$. 
  
   Then, for sufficiently small $\epsilon > 0$ the following holds. Let $\psi^\epsilon(x,t)$ denote the unique solution of the initial value problem (\ref{eq:original_equation}) with approximate `Bloch wavepacket' initial data given at $t = t_0$:
\begin{equation} 
\begin{split}
	&i \epsilon \de_t \psi^\epsilon = H^\epsilon \psi^\epsilon 		\\
	&\psi^\epsilon(x,t_0) = \epsilon^{-1/4}\ e^{i S_0/\epsilon} e^{i p_0 (x - q_0) / \epsilon}\ a^0_0\left(\frac{x - q_0}{\sqrt{\epsilon}}\right)\ \chi_n\left(\frac{x}{\epsilon};p_0\right). 
\end{split}
\end{equation} 
For $t \in [t_0,t_1)$, the solution evolves as a modulated `Bloch wavepacket' plus a corrector $\eta^\epsilon(x,t)$:
\begin{equation} \label{eq:wavepacket_evolution}
	\psi^\epsilon(x,t) = \epsilon^{-1/4}\ e^{i S(t)/\epsilon}\ e^{i p(t) (x - q(t)) / \epsilon}\ a^0\left(\frac{x - q(t)}{\sqrt{\epsilon}},t\right)\ \chi_n\left(\frac{x}{\epsilon};p(t)\right)\ +\ \eta^\epsilon(x,t)
\end{equation}
where the leading order term is of order 1 in $L^2(\field{R})$ and the corrector $\eta^\epsilon$ satisfies:
\begin{equation} \label{eq:bound_on_eta_1}
	\| \eta^\epsilon(\cdot,t) \|_{L^2} \leq C e^{c ( t-t_0)} \sqrt{\epsilon},\qquad t_0\le t< t_1.
\end{equation}
The constants $C > 0, c > 0$ depend on $M(t_0,t_1)$  and the initial data specified at $t_0$, are independent of 
$\epsilon$ and do not depend otherwise on $t_0$ and $ t_1$. Moreover, the constant $C$ in \eqref{eq:bound_on_eta_1} satisfies $C\uparrow\infty$ as $M(t_0,t_1)\downarrow0$.

In particular,  if  $M(t_0,\infty)>0$ then 
\begin{equation} \label{eq:bound_on_ehrenfest_timescale}
	\sup_{t \in [t_0,\tilde{C} \ln 1/\epsilon]} \| \eta^\epsilon(\cdot,t) \|_{L^2} = o(1), 
\end{equation}
where $\tilde{C}$ is any constant such that $\tilde{C} < \frac{1}{2 c}$.
\end{theorem}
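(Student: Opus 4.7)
The plan is to verify \eqref{eq:wavepacket_evolution} by defining
\[
\phi^\epsilon(x,t) := \epsilon^{-1/4}\, e^{i S(t)/\epsilon}\, e^{i p(t)(x-q(t))/\epsilon}\, a^0\!\left(\tfrac{x-q(t)}{\sqrt{\epsilon}},t\right)\chi_n\!\left(\tfrac{x}{\epsilon};p(t)\right),
\]
setting $\eta^\epsilon := \psi^\epsilon - \phi^\epsilon$, and showing that the residual $r^\epsilon := (i\epsilon\partial_t - H^\epsilon)\phi^\epsilon$ satisfies $\|r^\epsilon(\cdot,t)\|_{L^2}\le C\epsilon^{3/2}e^{c(t-t_0)}$ on $[t_0,t_1)$. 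Since $\eta^\epsilon$ solves $(i\epsilon\partial_t - H^\epsilon)\eta^\epsilon = -r^\epsilon$ with zero initial data and $H^\epsilon$ is self-adjoint, Duhamel's principle and unitarity of the propagator give
\[
\|\eta^\epsilon(\cdot,t)\|_{L^2} \le \frac{1}{\epsilon}\int_{t_0}^{t}\|r^\epsilon(\cdot,s)\|_{L^2}\,ds,
\]
and the claimed bound \eqref{eq:bound_on_eta_1} follows.

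First I would conjugate out the rapid phase. Writing $y = (x-q(t))/\sqrt{\epsilon}$ and peeling off $e^{iS(t)/\epsilon}e^{ip(t)(x-q(t))/\epsilon}$ sends $-i\epsilon\partial_x \mapsto p(t) - i\sqrt{\epsilon}\partial_y$, so the kinetic-plus-periodic part of $H^\epsilon$ acts on $a^0(y,t)\chi_n(x/\epsilon;p(t))$ as $H(p(t) - i\sqrt{\epsilon}\partial_y)$. Taylor expanding this shifted operator in powers of $\sqrt{\epsilon}$ and expanding $W(q(t)+\sqrt{\epsilon}y)$ about $q(t)$, the residual organizes as a power series in $\sqrt{\epsilon}$. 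The order $\epsilon^{-1}$ contribution cancels by the choice \eqref{eq:action_integral} of $S(t)$; the order $\epsilon^{-1/2}$ contribution cancels by Hamilton's equations \eqref{eq:classical_system} for $(q(t),p(t))$, using $\partial_p E_n(p) = \langle \chi_n(\cdot;p), (p-i\partial_z)\chi_n(\cdot;p)\rangle$; and the order $\epsilon^0$ contribution vanishes precisely on account of the envelope equation \eqref{eq:envelope_equation}, where the Berry-connection term $\partial_q W(q(t))\mathcal{A}_n(p(t))$ arises when the $\partial_t\chi_n(x/\epsilon;p(t)) = \dot p(t)\partial_p\chi_n$ piece is projected along $\chi_n$ using $i\langle\chi_n,\partial_p\chi_n\rangle = \mathcal{A}_n$. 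The analytic input for all these cancellations is that $p\mapsto\chi_n(\cdot;p)$ is smooth in $L^2_{\mathrm{per}}$ with bounds depending on $1/M(t_0,t_1)$; this is standard analytic perturbation theory for the simple isolated eigenvalue $E_n(p)$.

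Next I would estimate the $L^2_x$-norm of the uncancelled residual. The remaining terms are all of the form $\epsilon^{3/2}$ times products of polynomials in $y$, bounded $1$-periodic functions of $z=x/\epsilon$ (finitely many $\partial_p$-derivatives of $\chi_n$), and $a^0(y,t)$ or $\partial_y a^0(y,t)$. Undoing the change of variables, the $\epsilon^{-1/4}$ prefactor and the Jacobian from $y=(x-q)/\sqrt\epsilon$ combine to give an $L^2_x$ bound of order $\epsilon^{3/2}$ times finitely many $\Sigma^l$-seminorms of $a^0(\cdot,t)$. Since $\mathscr{H}(t)$ in \eqref{eq:envelope_equation} is a time-dependent quadratic Hamiltonian (a metaplectic generator) plus a scalar, $a^0(\cdot,t)$ remains Schwartz for all $t$ with $\Sigma^l$-norms controlled by the symplectic fundamental matrix of the associated linear system; these norms are finite and grow at most exponentially in $t$, which is the source of the factor $e^{c(t-t_0)}$. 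Combining, $\|r^\epsilon(\cdot,t)\|_{L^2}\le C\epsilon^{3/2}e^{c(t-t_0)}$ with $C$, $c$ depending on the initial envelope $a_0^0$, on $\sup_{t\in[t_0,t_1)}\|\partial_q^j W(q(t))\|$ for small $j$, and on $M(t_0,t_1)$ via the $\chi_n$-estimates; the dependence on $M(t_0,t_1)$ is through inverse powers which blow up as $M(t_0,t_1)\downarrow 0$.

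The Ehrenfest-time statement \eqref{eq:bound_on_ehrenfest_timescale} is then an immediate consequence: when $M(t_0,\infty)>0$ the constants are uniform in $t_1$, and choosing $t\le\tilde C\ln(1/\epsilon)$ with $\tilde C<1/(2c)$ gives $\|\eta^\epsilon\|_{L^2}\le C\epsilon^{1/2-\tilde C c}=o(1)$. The main obstacles I anticipate are bookkeeping ones rather than conceptual: correctly identifying the Berry-connection contribution in the order $\epsilon^0$ cancellation (this requires care because the phase of $\chi_n(\cdot;p)$ is fixed only up to a $p$-dependent unimodular factor and the gauge is not the adiabatic gauge of Remark \ref{rem:adiabatic_gauge}), and propagating weighted $\Sigma^l$ bounds for $a^0$ with explicit dependence on the derivatives of $E_n$ and $W$ along the trajectory so that the constants $C,c$ have the stated structure.
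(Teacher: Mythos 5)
Your overall skeleton — Duhamel plus unitarity of $e^{-itH^\epsilon/\epsilon}$ to reduce everything to a residual estimate, and a two-scale expansion of the conjugated operator in powers of $\sqrt\epsilon$ — is the same strategy the paper uses (Lemma \ref{lem:key_lemma}). But there is a genuine error in the order count: the residual of the \emph{bare} ansatz $\phi^\epsilon$ is $O_{L^2_x}(\sqrt\epsilon)$, not $O(\epsilon^{3/2})$. At order $\sqrt\epsilon$ the conjugated operator produces the term $\sqrt\epsilon\,\bigl((-i\de_y)a^0\bigr)\,\bigl(p(t)-i\de_z-\de_pE_n(p(t))\bigr)\chi_n(z;p(t))$. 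Hamilton's equations together with the Feynman--Hellmann identity $\de_pE_n(p)=\ip{\chi_n(\cdot;p)}{(p-i\de_z)\chi_n(\cdot;p)}$ only kill the \emph{projection of this term onto $\chi_n$}; by \eqref{eq:derivative_identity} the full term equals $\sqrt\epsilon\,((-i\de_y)a^0)\,(H(p)-E_n(p))\de_p\chi_n$, whose component orthogonal to $\chi_n$ is nonzero in general. No choice of $S$, $(q,p)$ or $a^0$ removes it; it can only be cancelled by augmenting the ansatz with the corrector $\sqrt\epsilon\,(-i\de_y)a^0\,\de_p\chi_n$ appearing in $\text{WP}^{1,\epsilon}$ \eqref{eq:WP_notation_1}. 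With your bare ansatz the Duhamel bound gives $\frac{1}{\epsilon}\int\|r^\epsilon\|\sim\epsilon^{-1/2}$, which proves nothing. A second symptom of the same problem: the effective mass $\de_p^2E_n$ in the envelope equation \eqref{eq:envelope_equation} cannot arise from the bare ansatz (the kinetic term there contributes $\tfrac12(-i\de_y)^2$ with unit mass); it emerges only from the interaction of the $\de_p\chi_n$ corrector with the order-$\sqrt\epsilon$ operator via the second-derivative identity \eqref{eq:second_derivative_identity}. So your claimed order-by-order cancellations cannot all hold as stated.

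The repair is the one the paper (following \cite{carles_sparber,2017WatsonWeinsteinLu}) actually uses: construct a \emph{corrected} approximate solution $\psi^\epsilon_{app}$ containing the off-diagonal correctors at orders $\sqrt\epsilon$ and $\epsilon$ (solving the cell problems with $(H(p)-E_n)^{-1}P^\perp$, bounded by $1/M(t_0,t_1)$), so that its residual is genuinely $O(\epsilon^{3/2})$; apply Lemma \ref{lem:key_lemma} to get $\|\psi^\epsilon-\psi^\epsilon_{app}\|_{L^2}=O(\sqrt\epsilon)$ (the initial-data mismatch $\|\psi^\epsilon_0-\psi^\epsilon_{app}(\cdot,t_0)\|_{L^2}$ is also $O(\sqrt\epsilon)$ since the correctors carry explicit factors of $\sqrt\epsilon$); and then discard the correctors by the triangle inequality, since $\|\psi^\epsilon_{app}-\phi^\epsilon\|_{L^2}=O(\sqrt\epsilon)$. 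An alternative fix — showing that the surviving $O(\sqrt\epsilon)$ residual is orthogonal to $\chi_n$ fiberwise and integrates to something $O(\epsilon^{3/2})$ by a non-stationary-phase/second-Duhamel argument using the gap $M(t_0,t_1)$ — would also work, but you neither state nor prove it. The rest of your outline (propagation of $\Sigma^l$ bounds for $a^0$ under the quadratic Hamiltonian, the source of $e^{c(t-t_0)}$, the $1/M$ dependence, and the Ehrenfest-time corollary) is sound once the residual estimate is repaired.
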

\begin{remark}\label{ehrenfest-time}
 The timescale $t \sim \ln 1/\epsilon$ is known as `Ehrenfest time' and is known to be the general limit of applicability of wavepacket, or coherent state, approximations (see \cite{schubert_vallejos_toscano} and references therein).
\end{remark}
It is convenient at this point to introduce a short-hand notation for the leading order ($O(1)$ in $L^2$) `Bloch wavepacket' asymptotic solution associated with the band $E_n$ with centering along the classical trajectory $(q(t), p(t))$ and envelope function $a^0(y,t)$ \eqref{eq:wavepacket_evolution}:
\begin{equation} \label{eq:WP_notation_0}
\begin{split}
	&\text{WP}^{0,\epsilon}[S(t),q(t),p(t),a^0(y,t),\chi_n(z;p(t))](x,t) := 	\\
	&\epsilon^{-1/4}\ e^{i S(t)/\epsilon}\ e^{i p(t) (x - q(t)) / \epsilon}\ a^0\left(\frac{x - q(t)}{\sqrt{\epsilon}},t\right)\  \chi_n\left(\frac{x}{\epsilon};p(t)\right).
\end{split}
\end{equation}
In our analysis we require a refinement of Theorem 1.1 of \cite{2017WatsonWeinsteinLu} where it was demonstrated how to compute corrections to the asymptotic solution (\ref{eq:wavepacket_evolution}) in order to improve the error bound \eqref{eq:bound_on_eta_1} by a factor of $\sqrt{\epsilon}$. 

For any $a^1_0(y) \in \mathcal{S}(\field{R})$, let $a^1(y,t)$ denote the unique solution of the following inhomogeneous Schr\"{o}dinger equation with initial data specified at $t_0$ by $a^1_0(y)$ \emph{driven} by the solution $a^0(y,t)$ of \eqref{eq:envelope_equation}: 
\begin{equation} \label{eq:first_order_envelope_equation}
\begin{split}
	&\left(\ i \de_t\ -\ \mathscr{H}(t)\ \right) a^1(y,t) = \mathscr{I}(t) a^0(y,t), 	\\
	&\mathscr{I}(t) := \frac{1}{6} \de_p^3 E_n(p(t)) (- i \de_y)^3 + \frac{1}{6} \de_q^3 W(q(t)) y^3 	\\
	&+ \de_q W(q(t)) \de_p \mathcal{A}_n(p(t)) (- i \de_y) + \de^2_q W(q(t)) \mathcal{A}_n(p(t)) y, 	\\
	&a^1(y,t_0) = a^1_0(y).
\end{split}
\end{equation} 
Again, $\mathcal{A}_n(p)$ denotes the Berry connection, displayed in \eqref{eq:berry_connection}. We next introduce a convenient short-hand notation for the `Bloch wavepacket' asymptotic solution associated with the band $E_n$ including a first-order in $\sqrt\epsilon$ correction to $\text{WP}^{0,\epsilon}$ in \eqref{eq:WP_notation_0}:
\begin{equation} \label{eq:WP_notation_1}
\begin{split}
	&\text{WP}^{1,\epsilon}[S(t),q(t),p(t),a^0(y,t),a^1(y,t),\chi_n(z;p(t))](x,t) := 	\\
	&\epsilon^{-1/4} e^{i S(t) / \epsilon} e^{i p(t) (x - q(t)) / \epsilon} \left\{ a^0 \left( \frac{x - q(t)}{\sqrt{\epsilon}},t\right) \chi_n\left(\frac{x}{\epsilon};p(t)\right) \right.	\\
	&\left. + \sqrt{\epsilon} \left[ a^1 \left(\frac{x - q(t)}{\sqrt{\epsilon}},t\right) \chi_n\left(\frac{x}{\epsilon};p(t)\right) + (- i \de_y) a^0\left(\frac{x - q(t)}{\sqrt{\epsilon}},t\right) \de_p \chi_n\left(\frac{x}{\epsilon};p(t)\right) \right] \right\}.
\end{split}	
\end{equation}
Then, we have the following mild generalization of the result of Theorem 1.1 in \cite{2017WatsonWeinsteinLu}:
\begin{theorem}[Order 1 wave-packet with order $\sqrt\epsilon$ correction] \label{th:isolated_band_theorem_2}
Assume the same setting as in Theorem  \ref{th:isolated_band_theorem_1}, in particular that the Isolated  Band Property \ref{isolated_band_assumption} holds along the trajectory $(p(t),q(t))$ of the classical Hamiltonian $\mathcal{H}_n=E_n(p)+W(q)$ for $t\in[t_0,t_1)$, where $t_0<t_1\le\infty$. 

Let $S(t)$ be as in (\ref{eq:action_integral}) with initial action $S(0)=S_0\in\field{R}$.  Let $a^0(y,t)$ as in (\ref{eq:envelope_equation}) and  $a^1(y,t)$ be as in \eqref{eq:first_order_envelope_equation} with $a^0_0(y)$ and $ a^1_0(y) \in \mathcal{S}(\field{R})$. 

Then, for sufficiently small $\epsilon > 0$, we have that the unique solution $\psi^\epsilon(x,t)$ of the initial value problem (\ref{eq:original_equation}) with approximate `Bloch wavepacket' initial data with corrections proportional to $\sqrt{\epsilon}$ given at $t = t_0$:
\begin{equation} \label{eq:approximate_bloch_wp_data}
\begin{split}
	&i \epsilon \de_t \psi^\epsilon = H^\epsilon \psi^\epsilon	\\
	&\psi^\epsilon(x,t_0) = \text{\emph{WP}}^{1,\epsilon}[S_0,q_0,p_0,a^0_0(y),a^1_0(y),\chi_n(z;p_0)](x) + O_{L^2_x}(\epsilon)
\end{split}
\end{equation} 
evolves as a modulated `Bloch wavepacket' plus a corrector $\eta^\epsilon(x,t)$:
\begin{equation}
	\psi^\epsilon(x,t) = \text{\emph{WP}}^{1,\epsilon}[S(t),q(t),p(t),a^0(y,t),a^1(y,t),\chi_n(z;p(t))](x,t) + \eta^\epsilon(x,t)
\end{equation}
where the corrector $\eta^\epsilon$ satisfies, for $t \in [t_0,t_1)$, the bound:
\begin{equation}
	\| \eta^\epsilon(\cdot,t) \|_{L^2} \leq C e^{c t} \epsilon
\end{equation}
where the constants $C > 0, c > 0$ are as stated in Theorem \ref{th:isolated_band_theorem_1}.

Furthermore, it follows that if $M(t_0,\infty)>0$, then we have the following error bound on the Ehrenfest time-scale:
\begin{equation}
	\sup_{t \in [0,\tilde{C} \ln 1/\epsilon]} \| \eta^\epsilon(\cdot,t) \|_{L^2} = o(\sqrt{\epsilon}), 
\end{equation}
where $\tilde{C}$ is any constant such that $\tilde{C} < \frac{1}{2 c}$, cf. \eqref{eq:bound_on_ehrenfest_timescale}. 
\end{theorem}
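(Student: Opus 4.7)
The plan is to construct $\text{WP}^{1,\epsilon}$ via a multiple-scale expansion and then estimate the difference from the exact solution using $L^2$-stability of the Schr\"{o}dinger propagator. This mirrors the approach of Theorem 1.1 in \cite{2017WatsonWeinsteinLu}; the present statement is a mild generalization incorporating an arbitrary initial time $t_0$ and making explicit that the constants depend on the minimal spectral gap $M(t_0,t_1)$ of the Isolated Band Property.

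My first step is to substitute $\text{WP}^{1,\epsilon}$ into the residual operator $R^\epsilon := i\epsilon\partial_t - H^\epsilon$. Using the three scales $x$, $y = (x - q(t))/\sqrt{\epsilon}$, and $z = x/\epsilon$, and expanding the action of derivatives (via the chain rule, treating $y$ and $z$ as independent of $x$) in formal powers of $\sqrt{\epsilon}$, I would group the result by order. At order $\epsilon^0$ the residual vanishes because $\chi_n(\cdot;p)$ is an eigenfunction of $H(p)$ with eigenvalue $E_n(p)$. At order $\sqrt{\epsilon}$, contributions from differentiation of the action and of the phase combine with the identity $(H(p)-E_n(p))\partial_p\chi_n = -(p - i\partial_z - \partial_p E_n(p))\chi_n$, and the classical equations of motion \eqref{eq:classical_system} together with $\dot S = p\,\partial_p E_n - E_n - W$ ensure cancellation. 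At order $\epsilon$, the solvability condition obtained by projecting onto $\chi_n(\cdot;p(t))$ in $L^2([0,1])$ produces the harmonic-oscillator envelope equation \eqref{eq:envelope_equation}, with the Berry-connection term emerging from the inner product $\langle \chi_n, \partial_p\chi_n\rangle$. At order $\epsilon^{3/2}$ the analogous projection, now accounting for the presence of the $a^1$ term and of $(-i\partial_y a^0)\partial_p\chi_n$, yields the inhomogeneous equation \eqref{eq:first_order_envelope_equation}; the forcing $\mathscr{I}(t)a^0$ is determined by the cubic terms in the Taylor expansions of $E_n$ and $W$ about $(p(t),q(t))$, together with the $p$-derivative of the Berry connection.

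The second step is to bound the remaining residual. After the envelope equations are imposed, the leading uncancelled term has the form $r^\epsilon(x,t) = \epsilon^{-1/4}\, e^{iS(t)/\epsilon}\, e^{ip(t)(x-q(t))/\epsilon}\, \epsilon^2\, F(y, x/\epsilon, t)$, where $F$ is smooth, $1$-periodic in its second argument, and Schwartz in $y$. The change of variables $y = (x-q)/\sqrt{\epsilon}$ gives $\|r^\epsilon(\cdot,t)\|_{L^2_x} \leq C\,\epsilon^2$. Control on $F$ requires two ingredients: persistence of Schwartz regularity of $a^0(\cdot,t),a^1(\cdot,t)$ under their respective linear Schr\"{o}dinger flows, which follows from standard energy estimates in the $\Sigma^l$ scale applied to \eqref{eq:envelope_equation}--\eqref{eq:first_order_envelope_equation}, and uniform smoothness of $p \mapsto \chi_n(\cdot;p)$ along the trajectory, which follows from $M(t_0,t_1)>0$ via regular perturbation theory for the isolated eigenvalue $E_n(p)$ of $H(p)$. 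These together yield at most exponential-in-$t$ growth of the relevant $\Sigma^l$ norms, with rates depending only on $M(t_0,t_1)$ and the initial Schwartz norms; in particular the same constants $C,c$ as in Theorem \ref{th:isolated_band_theorem_1} suffice.

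Finally, setting $\eta^\epsilon := \psi^\epsilon - \text{WP}^{1,\epsilon}$, we have $i\epsilon\partial_t \eta^\epsilon = H^\epsilon \eta^\epsilon + r^\epsilon$ with initial error of size $O_{L^2}(\epsilon)$ by hypothesis \eqref{eq:approximate_bloch_wp_data}. Since $H^\epsilon$ is self-adjoint, Duhamel's principle yields
\begin{equation*}
\|\eta^\epsilon(t)\|_{L^2} \leq \|\eta^\epsilon(t_0)\|_{L^2} + \epsilon^{-1}\int_{t_0}^t \|r^\epsilon(s)\|_{L^2}\,ds \leq Ce^{ct}\epsilon,
\end{equation*}
and the Ehrenfest-time conclusion follows by choosing $\tilde C < 1/(2c)$. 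The main obstacle is the order-by-order bookkeeping of the multiple-scale expansion, in particular verifying that the solvability conditions produce exactly the operators $\mathscr{H}(t)$ and $\mathscr{I}(t)$ in \eqref{eq:envelope_equation} and \eqref{eq:first_order_envelope_equation}, including the Berry-connection contributions that arise at each order from $\langle \chi_n, \partial_p\chi_n\rangle$ and its $p$-derivative; once those identifications are made, the stability estimate is a direct consequence of unitarity of the Schr\"{o}dinger group together with the residual bound.
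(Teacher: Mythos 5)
Your overall architecture (multiscale expansion, residual bound, Duhamel/unitarity via Lemma \ref{lem:key_lemma}) matches the paper's, but there is a genuine gap in your second step: the claim that, once the envelope equations \eqref{eq:envelope_equation} and \eqref{eq:first_order_envelope_equation} are imposed, the residual of $\text{WP}^{1,\epsilon}$ itself is $O_{L^2}(\epsilon^2)$. It is not. Projecting onto $\chi_n(\cdot;p(t))$ is only a \emph{solvability} condition: at order $\epsilon$ the residual also contains components orthogonal to $\chi_n$ which the envelope equations do not touch, e.g. $i\,\de_q W(q(t))\,a^0\,P^\perp_n(p(t))\de_p\chi_n(\cdot;p(t))$ coming from $-i\de_t(a^0\chi_n)$, and the terms $-(-i\de_y a^1)(H(p)-E_n(p))\de_p\chi_n$ and $-\tfrac12(-i\de_y)^2a^0\,(H(p)-E_n(p))\de_p^2\chi_n$ coming from applying $(p-i\de_z-\de_pE_n)(-i\de_y)$ to $f^1$. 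With $\psi^\epsilon_{app}=\text{WP}^{1,\epsilon}$ these survive, so $\|r^\epsilon\|_{L^2}=O(\epsilon)$, and Duhamel then gives only $\|\eta^\epsilon\|_{L^2}=O(1)$ --- not even the conclusion of Theorem \ref{th:isolated_band_theorem_1}, let alone the $O(\epsilon)$ bound claimed here.

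The missing ingredient is the one the paper's strategy (end of Section \ref{sec:proof_of_single_band_failure_theorem}, following \cite{2017WatsonWeinsteinLu}) makes explicit: the approximate solution $\psi^\epsilon_{app}$ must be \emph{augmented} beyond $\text{WP}^{1,\epsilon}$ by correctors at orders $\epsilon$ and $\epsilon^{3/2}$, built with the reduced resolvent $(H(p)-E_n(p))^{-1}P^\perp_n(p)$ applied to exactly those orthogonal components (compare $f^2_{\sigma,inner}$ in \eqref{eq:f_pm_inner} and the term $R^\epsilon$ in \eqref{eq:R}). Only after this augmentation is the residual $O(e^{ct}\epsilon^2)$; one then observes that the added correctors are themselves $O_{L^2}(e^{ct}\epsilon)$, so they may be folded into $\eta^\epsilon$ by the triangle inequality. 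This is also where the constant's dependence on $M(t_0,t_1)$ and its blow-up as $M\downarrow 0$ genuinely enter (via the resolvent bound of Corollary \ref{cor:away_from_crossing_regular}), which is the mechanism driving Theorem \ref{th:limit_of_single_band_approximation}; your attribution of the $M$-dependence solely to smoothness of $p\mapsto\chi_n(\cdot;p)$ misses this. The rest of your argument (order-$\epsilon^0$ and $\epsilon^{1/2}$ cancellations, propagation of Schwartz regularity of $a^0,a^1$, the final Gronwall/Ehrenfest step) is consistent with the paper.
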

\begin{remark} \label{rem:high_order_WP}
By a natural extension of the methods of \cite{carles_sparber} and \cite{2017WatsonWeinsteinLu} one may derive, for any integer $k \geq 0$, `{}$k$th-order Bloch wavepacket' approximate solutions:
\begin{equation}
	\text{\emph{WP}}^{k,\epsilon}[S(t),q(t),p(t),a^0(y,t),a^1(y,t),a^2(y,t),...,a^k(y,t),\chi_n(z;p(t))](x,t)
\end{equation}
such that the exact solution $\psi^\epsilon(x,t)$ of \eqref{eq:original_equation} with `$k$-th order Bloch wavepacket' initial data:
\begin{equation}
	\psi^\epsilon_0(x) = \text{\emph{WP}}^{k,\epsilon}[S_0,q_0,p_0,a^0_0(y),a^1_0(y),a^2_0(y),...,a^k_0(y),\chi_n(z;p_0)](x)
\end{equation}
satisfies:
\begin{equation}
\begin{split}
	&\psi^\epsilon(x,t) = \text{\emph{WP}}^{k,\epsilon}[S(t),q(t),p(t),a^0(y,t),a^1(y,t),a^2(y,t),...,a^k(y,t),\chi_n(z;p(t))](x,t) \\
	&\hphantom{ \psi^\epsilon(x,t) = } + o_{L^2_x}(\epsilon^{k/2}) 
\end{split}
\end{equation}
up to `Ehrenfest time' $t \sim \ln 1/\epsilon$. Note that each function $\text{\emph{WP}}^{k,\epsilon}[...](x,t)$ depends on $k+1$ envelope functions $a^0(y,t), a^1(y,t), a^2(y,t),...,a^k(y,t),$ each of which satisfies a suitable Schr\"{o}dinger equation driven by the $k$ previously defined envelope functions. Hence $a^2(y,t)$ satisfies a Schr\"{o}dinger equation driven by $a^0(y,t)$ and $a^1(y,t)$ and so on. 
\end{remark}

\section{Statement of results on dynamics at band crossings} 

\subsection{Linear band crossings}\label{LBC}
We next give a  precise discussion of the character of one-dimensional Bloch band degeneracies (band crossings). The following property describes a \emph{linear band crossing}, illustrated in Figures \ref{fig:crossing_bands} and \ref{fig:crossing_bands_smooth}. In Theorem \ref{th:all_crossings_smooth} we assert that all band crossings in one dimension are of this type:
\begin{property}[Linear band crossing] \label{band_crossing_assumption} 
Let $E_{n}(p), E_{n+1}(p)$ denote two spectral band functions satisfying \eqref{eq:reduced_eigenvalue_problem} for $p \in \mathcal{B}$. Let $p^*\in U$, where $U$ is an open subset of $\mathcal{B}$, be  such that:
\begin{enumerate}[label=\emph{(A\arabic*)}]
\item The bands $E_{n}$ and $E_{n+1}$ are degenerate at $p^*$, and this degeneracy is unique in $U$: \label{item:crossing}
\begin{equation}
\begin{split}
	&E_n(p^*) = E_{n+1}(p^*)	\\
	&\text{if } \tilde{p}^* \in U \text{ and } E_n(\tilde{p}^*) = E_{n+1}(\tilde{p}^*), \text{ then } \tilde{p}^* = p^*. 
\end{split}
\end{equation}
\item \label{other_bands_isolated_away} The bands $E_{n+1}, E_n$ are uniformly isolated from the rest of the spectrum for all $p \in U$, i.e. there exists a positive constant $M > 0$ such that:
\begin{equation} \label{eq:crossing_bands_isolated_from_others}
	\min_{p \in \overline{U}} \; \min_{m \notin \{ n, n+1 \} } \left\{ | E_{m}(p) - E_{n+1}(p) |, |E_n(p) - E_m(p)| \right\} \geq M > 0 
\end{equation}
\item The maps: 
\begin{equation} \label{eq:def_smooth_bands}
\begin{split}
	&p \mapsto (E_+(p),\chi_+(z;p)) := \begin{cases} (E_{n}(p),\chi_{n}(z;p)) &\text{for } p \in U \text{ and } p < p^* \\ (E_{n+1}(p),\chi_{n+1}(z;p)) &\text{for } p \in U \text{ and } p \geq p^* \end{cases} 	\\
	&p \mapsto (E_-(p),\chi_-(z;p)) := \begin{cases} (E_{n+1}(p),\chi_{n+1}(z;p)) &\text{for } p \in U \text{ and } p < p^* \\ (E_{n}(p),\chi_{n}(z;p)) &\text{for } p \in U \text{ and } p \geq p^* \end{cases} 	
\end{split}
\end{equation}
are smooth for all $p \in U$. 
\item The bands $E_+, E_-$ satisfy $\de_p E_+(p^*) > 0, \text{ } \de_p E_-(p^*) < 0$ and in particular:
\begin{equation} \label{eq:linear_splitting}
	\de_p E_+(p^*) - \de_p E_-(p^*) = 2 \de_p E_+(p^*) > 0. 	
\end{equation} 
\end{enumerate} 
\end{property}

\noindent {\bf Caveat Lector!} {\it In \eqref{eq:def_smooth_bands}, the notation $+$ and $-$ refers to the slope of the smooth band functions at the crossing point: $\de_p E_+(p^*) > 0, \de_p E_-(p^*) < 0$. This is not to be confused with an ordering of the bands themselves. Indeed, with our conventions we have: }
\begin{equation}
	\text{\it for $p \in U$ and $p < p^*$:  } E_+(p) = E_n(p) < E_{n+1}(p) = E_-(p).
\end{equation}
It is useful to view the functions $E_+(p), E_-(p)$ as \emph{smooth continuations} of the band functions $E_n(p), E_{n+1}(p)$ from the interval $\{ p \in U : p\le p^*\}$ to the interval $\{ p \in U : p > p^*\}$. We will refer to any band crossing satisfying Property \ref{band_crossing_assumption} as a \emph{linear crossing}. In one spatial dimension, \emph{all} band crossings are linear. Moreover, crossings can only occur at 0 or $\pi$ (modulo $2 \pi$):
\begin{theorem} \label{th:all_crossings_smooth}
Let $E_n(p), E_{n+1}(p)$ denote spectral band functions satisfying \eqref{eq:reduced_eigenvalue_problem} for $p \in \mathcal{B}$, and let $p^* \in \mathcal{B}$ be such that $E_n(p^*) = E_{n+1}(p^*)$. Then: 
\begin{enumerate}
\item $p^* = 0$ or $\pi$ (modulo $2 \pi$). 
\item There exists an open interval $U$ containing $p^*$ such that hypotheses (A1)-(A4) of Property \ref{band_crossing_assumption} hold.
\end{enumerate}
\end{theorem}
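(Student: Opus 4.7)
My approach is via Floquet theory. Fix $p\in\mathcal{B}$: the solution space of $H\Phi=E\Phi$ on $\field{R}$ is two-dimensional, and the translation $z\mapsto z+1$ acts on it by the monodromy matrix $T(E)\in\mathrm{SL}_2(\field{R})$ (determinant one because $V$ is real and the Wronskian is preserved). The boundary condition in \eqref{eq:quasi_periodic_eigenvalue_problem} exactly requires $\Phi$ to lie in the $e^{ip}$-eigenspace of $T(E)$. A two-fold degeneracy $E_n(p^*)=E_{n+1}(p^*)=:E_0$ forces this eigenspace to be two-dimensional, i.e.\ $T(E_0)=e^{ip^*}I$; taking determinants yields $e^{2ip^*}=\det T(E_0)=1$, so $p^*\in\{0,\pi\}\pmod{2\pi}$, proving claim~(1). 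The same dimension count shows that no three bands can meet at a single point in one dimension, so $E_m(p^*)\neq E_0$ for $m\notin\{n,n+1\}$; setting $M:=\tfrac12\min_{m\notin\{n,n+1\}}|E_m(p^*)-E_0|$ and using Lipschitz continuity of the band functions establishes (A2) on a small enough open interval $U\ni p^*$.

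\textbf{Plan for (A1) and (A3).} The map $p\mapsto H(p)=\frac12(p-i\de_z)^2+V(z)$ is a self-adjoint analytic family of type~(A) on $L^2_{per}$, and $\{E_0\}$ is an isolated doubly-degenerate eigenvalue of $H(p^*)$ by the previous paragraph. Rellich's theorem on analytic perturbation of isolated eigenvalues (see e.g.\ \cite{reed_simon_4}) then furnishes, after possibly shrinking $U$, two real-analytic eigenvalue branches $\tilde E_1(p),\tilde E_2(p)$ with $\tilde E_1(p^*)=\tilde E_2(p^*)=E_0$, together with real-analytic orthonormal eigenfunctions $\tilde\chi_1(\cdot;p),\tilde\chi_2(\cdot;p)\in L^2_{per}$. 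Part~(1) implies $E_n(p)<E_{n+1}(p)$ strictly for $p\in U\setminus\{p^*\}$, so by analyticity $\tilde E_1$ and $\tilde E_2$ can coincide only at $p^*$, giving (A1). Defining $(E_+,\chi_+)$ to be the Rellich branch that agrees with the lower band $E_n$ just to the left of $p^*$ and $(E_-,\chi_-)$ to be the other branch, the gluing \eqref{eq:def_smooth_bands} is merely a relabeling of $(\tilde E_j,\tilde\chi_j)$ on $U$, hence smooth, establishing (A3).

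\textbf{Transversality (A4) and the main obstacle.} First-order Kato perturbation theory identifies $\de_p\tilde E_j(p^*)$ with the eigenvalues of the $2\times 2$ Hermitian matrix $M_{jk}=\langle\chi_j(\cdot;p^*)|(p^*-i\de_z)\chi_k(\cdot;p^*)\rangle$. The complex-conjugation symmetry $\overline{H(p)}=H(-p)$ at $p^*\in\{0,\pi\}$ (combined, when $p^*=\pi$, with the gauge equivalence $H(\pi)\sim H(-\pi)$ given by multiplication by $e^{2\pi iz}$) endows the two-dimensional eigenspace at $E_0$ with an antilinear involution, relative to which one can pick a basis $\{\chi_n(\cdot;p^*),\chi_{n+1}(\cdot;p^*)\}$ of the requisite real-structure type. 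In such a basis, a periodic integration by parts shows that the diagonal entries of $-i\de_z$ vanish and the off-diagonal entry is purely imaginary, $M_{n,n+1}=-i\beta$ with $\beta=\int_0^1\chi_n\de_z\chi_{n+1}\,dz\in\field{R}$; the scalar $p^*I$-part of $M$ contributes equally to both eigenvalues and does not split the degeneracy. Hence the two slopes $\de_p\tilde E_j(p^*)$ are $\pm|\beta|$. The main technical obstacle is to show $\beta\neq 0$, i.e.\ to rule out a tangential touching. For this I would invoke the specifically one-dimensional fact that the Floquet discriminant $\Delta(E)=\operatorname{tr}T(E)$ is strictly monotone on each open spectral band---equivalently, that $p\mapsto E_n(p)$ is strictly monotone on $(0,\pi)$: a tangential crossing at $p^*$ would, via the symmetry $E_n(-p)=E_n(p)$, produce a critical point of some $E_n$ inside the open interval $(0,\pi)$, contradicting that monotonicity. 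Labeling so that the branch with positive slope is $E_+$ then yields (A4) and \eqref{eq:linear_splitting}.
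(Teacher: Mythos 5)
Your part (1) and (A2) are correct: the monodromy matrix $T(E_0)\in\mathrm{SL}_2(\field{R})$ must equal $e^{ip^*}I$ at a two-fold degeneracy, forcing $e^{2ip^*}=1$, and the two-dimensionality of the solution space rules out triple crossings. Using Rellich/Kato analytic perturbation theory for the isolated doubly-degenerate eigenvalue of the type-(A) family $p\mapsto H(p)$ is a legitimate route to the two smooth branches, and is genuinely different from the paper's (deferred) argument, which goes through Hill's-equation theory: there the branches arise by solving $\Delta(E)=2\cos p$ near a double zero of the discriminant $\Delta\mp2$. Your degenerate first-order perturbation computation correctly reduces (A4) to showing $\beta:=\ip{\chi_n(\cdot;p^*)}{\de_z\chi_{n+1}(\cdot;p^*)}\neq0$ in a suitably real basis. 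One ordering issue: (A3) as stated in the paper, with the gluing \eqref{eq:def_smooth_bands}, already presupposes that the two analytic branches \emph{swap} labels at $p^*$; a tangential (even-order) touching would leave $E_n,E_{n+1}$ individually smooth and make \eqref{eq:def_smooth_bands} non-smooth. So (A3) also hinges on the transversality you defer to the last step.

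That last step is where the genuine gap lies. Strict monotonicity of $p\mapsto E_n(p)$ on the \emph{open} interval $(0,\pi)$ — equivalently $\Delta'(E)\neq0$ wherever $|\Delta(E)|<2$ — does not exclude a tangential touching: if the branches agreed to second order at $p^*$, then $E_n$ and $E_{n+1}$ would each be analytic and even about $p^*\in\{0,\pi\}$ with a critical point \emph{at the endpoint} $p^*$ only; since the derivative of a non-constant analytic function has isolated zeros, no critical point appears inside $(0,\pi)$, and the symmetry $E_n(-p)=E_n(p)$ does not manufacture one. What you actually need is the endpoint statement that zeros of $\Delta\mp2$ are at most double with $\Delta''\neq0$ there, which is a separate classical fact and not a consequence of interior monotonicity. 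Fortunately your framework admits a short self-contained fix: pass to the quasi-periodic eigenfunctions $\Phi_j=e^{ip^*z}\chi_j$, $j=1,2$, which (being periodic or anti-periodic solutions of the real ODE $H\Phi=E_0\Phi$) may be taken real and orthonormal, and note $M_{jk}=\ip{\Phi_j}{-i\de_z\Phi_k}$. The diagonal vanishes because $\Phi_j^2$ is $1$-periodic, while combining $\int_0^1(\Phi_1\Phi_2)'\,\dee z=0$ with the constancy of the Wronskian $W=\Phi_1\Phi_2'-\Phi_1'\Phi_2$ gives $M_{12}=-\tfrac{i}{2}W$, and $W\neq0$ by linear independence. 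Hence the two slopes are $\pm|W|/2\neq0$, which yields (A4), the label swap, and therefore (A3).
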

The proof of Theorem \ref{th:all_crossings_smooth} follows from the fact that the eigenvalue problem \eqref{eq:quasi_periodic_eigenvalue_problem} has the form of an ordinary differential equation with real and periodic coefficients and may therefore be put into the form of Hill's equation, for which a well-developed theory exists \cite{MagnusWinkler,eastham,reed_simon_4}. For a self-contained proof, see Appendix B.1 of \cite{thesis}. 
\begin{corollary} \label{cor:away_from_crossing_regular}
Let $E_n(p), E_{n+1}(p)$ denote spectral band functions in one dimension which cross at some $p^* \in \mathcal{B}$. Let $P^\perp_\pm(p)$ denote the projection onto the orthogonal complement in $L^2_{per}$ of the functions $\chi_+(z;p), \chi_-(z;p)$, defined for $p \in U$ by \eqref{eq:def_smooth_bands}. Then: 
\begin{equation} \label{eq:corollary_assertion}
	\| ( H(p) - E_\sigma(p) )^{-1} P^\perp_\pm(p) \|_{L_{per}^2 \rightarrow H_{per}^2} \leq \frac{1}{M} , \qquad \sigma = \pm, \; p \in \overline{U}.
\end{equation}
where $M > 0$ is the constant appearing in \eqref{eq:crossing_bands_isolated_from_others}. 
\end{corollary}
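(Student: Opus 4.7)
The approach is to recognize $P^\perp_\pm(p)$ as a spectral projector for the self-adjoint operator $H(p)$, apply the functional calculus to obtain an $L^2_{per}$ bound via the spectral gap, and then upgrade to the $H^2_{per}$ estimate by elliptic regularity, using smoothness of $V$ and compactness of $\overline{U}$.

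\textbf{Step 1: Identify $P^\perp_\pm(p)$ as a spectral projector.} The first task is to verify that for each $p \in U$, the two-dimensional subspace $\mathrm{span}\{\chi_+(\cdot;p), \chi_-(\cdot;p)\}$ coincides with $\mathrm{span}\{\chi_n(\cdot;p), \chi_{n+1}(\cdot;p)\}$. This is essentially built into the piecewise definition \eqref{eq:def_smooth_bands}: for $p \neq p^*$ one branch of the definition applies and the identification is immediate, while at $p = p^*$ both bases span the two-dimensional eigenspace of $H(p^*)$ corresponding to the degenerate eigenvalue $E_n(p^*) = E_{n+1}(p^*)$, by smoothness in (A3). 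Consequently, the orthogonal complement in $L^2_{per}$ of $\mathrm{span}\{\chi_+(\cdot;p), \chi_-(\cdot;p)\}$ equals the closed $L^2_{per}$-span of $\{\chi_m(\cdot;p) : m \notin \{n, n+1\}\}$, so $P^\perp_\pm(p)$ is the spectral projector of $H(p)$ onto its invariant subspace complementary to the $n$-th and $(n+1)$-th eigenspaces.

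\textbf{Step 2: $L^2_{per}$ bound by functional calculus.} Since $H(p)$ commutes with $P^\perp_\pm(p)$, its restriction to $\mathrm{Range}(P^\perp_\pm(p))$ is self-adjoint with spectrum $\{E_m(p) : m \notin \{n, n+1\}\}$. For any $p \in \overline{U}$ and $\sigma \in \{+, -\}$, the value $E_\sigma(p)$ equals either $E_n(p)$ or $E_{n+1}(p)$ by \eqref{eq:def_smooth_bands}, so assumption (A2) (displayed as \eqref{eq:crossing_bands_isolated_from_others}) gives
\[
\mathrm{dist}\bigl(E_\sigma(p),\; \sigma(H(p)|_{\mathrm{Range}(P^\perp_\pm(p))})\bigr) \;\geq\; M.
\]
The spectral theorem then yields $\|(H(p) - E_\sigma(p))^{-1} P^\perp_\pm(p)\|_{L^2_{per} \to L^2_{per}} \leq M^{-1}$, uniformly in $p \in \overline{U}$.

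\textbf{Step 3: Upgrade to $H^2_{per}$.} Setting $g = (H(p) - E_\sigma(p))^{-1} P^\perp_\pm(p) f$ and rearranging the identity $H(p) g = E_\sigma(p) g + P^\perp_\pm(p) f$, one obtains
\[
\tfrac{1}{2}(p - i \de_z)^2 g \;=\; P^\perp_\pm(p) f \;-\; V g \;+\; E_\sigma(p) g,
\]
which, together with the $L^2_{per}$ bound from Step 2, controls $\|(p - i\de_z)^2 g\|_{L^2_{per}}$ and hence $\|g\|_{H^2_{per}}$ in terms of $\|f\|_{L^2_{per}}$. Uniformity in $p \in \overline{U}$ follows from continuity of $E_\sigma$ on $\overline{U}$, boundedness of $V$, and compactness of $\overline{U}$.

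\textbf{Main obstacle.} The only substantive step is Step 1: one must confirm that $P^\perp_\pm(p)$ is genuinely a spectral projector for $H(p)$ even though it is defined through the smooth continuations $\chi_\pm$ rather than the ordered Bloch eigenfunctions $\chi_n, \chi_{n+1}$; once this algebraic observation is in place, Steps 2 and 3 are routine. The $H^2$ upgrade in Step 3 also requires some care in tracking constants so that they remain controlled uniformly on $\overline{U}$, but no new ideas beyond standard elliptic estimates are needed.
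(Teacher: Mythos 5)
Your argument is correct and takes essentially the same route as the paper, which offers no proof beyond asserting that the bound ``follows immediately'' from \eqref{eq:crossing_bands_isolated_from_others}: your Steps 1--2 (identifying $P^\perp_\pm(p)$ as the spectral projector complementary to the $n$-th and $(n+1)$-th eigenspaces, then invoking the spectral theorem with the gap $M$) are exactly the intended justification. One minor caveat: the constant $1/M$ is what the spectral theorem delivers for the $L^2_{per}\to L^2_{per}$ norm, whereas your Step 3 elliptic upgrade unavoidably introduces an extra factor depending on $\|V\|_{L^\infty}$ and $\sup_{p\in\overline{U}}|E_\sigma(p)|$, so the literal $L^2_{per}\to H^2_{per}$ bound by $1/M$ in \eqref{eq:corollary_assertion} should really read $C/M$ with $C$ uniform on $\overline{U}$ --- a looseness in the paper's statement rather than a gap in your proof.
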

The bound \eqref{eq:corollary_assertion} follows immediately from \eqref{eq:crossing_bands_isolated_from_others}. When we consider the dynamics of wavepackets associated with $E_n(p)$ or $E_{n+1}(p)$ and spectrally localized close to $p^*$, the gap condition \eqref{eq:crossing_bands_isolated_from_others} and Corollary \ref{cor:away_from_crossing_regular} will allow us to bound contributions to the solution from all bands other than $E_n(p)$ and $E_{n+1}(p)$ uniformly through the crossing time, see Appendix D of \cite{2017WatsonWeinsteinLu} for details. 
\begin{remark}\label{Dirac-Points}
Theorem \ref{th:all_crossings_smooth} does not generalize to spatial dimensions larger than one. Indeed, at so-called `conical' or `Dirac' points, which occur in the spectral band structure of two-dimensional periodic Schr\"{o}dinger operators with \emph{honeycomb lattice} symmetry, the local band structure is the union of Lipschitz surfaces \cite{fefferman_weinstein_diracpoints,fefferman_weinstein} and the map $p \mapsto \chi_n(z;p)$ from the Brillouin zone to the Bloch eigenfunctions is discontinuous \cite{fefferman_weinstein}.
\end{remark}

\subsection{Examples of potentials with linear band crossings}\label{examples}

\begin{example}[$m-$ gap potentials] \label{ex:n_gap_potentials}
Let $\omega_1, \omega_3 \in \field{C}$ with $\text{\emph{Im }}(\omega_3/\omega_1) \neq 0$. Define $\wp_{\omega_1, \omega_3}(z)$, the \emph{Weierstrass elliptic function with periods $2 \omega_1, 2 \omega_3$} by:
\begin{equation}
	\wp_{\omega_1,\omega_3}(z) := \frac{1}{z^2} + \sum_{\substack{(m,n) \in \field{Z} \times \field{Z}, \\ (m,n) \neq (0,0)}} \frac{1}{(z - 2 m \omega_1 - 2 n \omega_3)^2} - \frac{1}{(2 m \omega_1 + 2 n \omega_3)^2}.
\end{equation}
The function $\wp_{\omega_1,\omega_3}(z)$ is doubly-periodic and even:
\begin{equation} \label{eq:p_symmetries}
\begin{split}
	&\wp_{\omega_1,\omega_3}(z + 2 \omega_1) = \wp_{\omega_1,\omega_3}(z + 2 \omega_3) = \wp_{\omega_1,\omega_3}(z)	\\
	&\wp_{\omega_1,\omega_3}(-z) = \wp_{\omega_1,\omega_3}(z),
\end{split}
\end{equation}
and has poles of degree two at the points $\Omega_{m,n} = 2 m \omega_1 + 2 n \omega_3$ for all $(m,n) \in \field{Z} \times \field{Z}$. If $\omega_1 = \omega$, $\omega_3 = i \omega'$ with $\omega, \omega' \in \field{R}$ and $\omega > 0$, then $\wp_{\omega,i \omega'}(z)$ is real for $z$ such that $\text{\emph{Re }}z \in \{0,\omega\}$ or $\text{\emph{Im }}z \in \{0,\omega'\}$ by the symmetries \eqref{eq:p_symmetries}. Now fix $\omega = 1/2$, and define for any $\omega' \in \field{R}$ with $\omega' \neq 0$ and positive integer $m$: 
\begin{equation}
	V(z) := \frac{ m (m+1) }{2} \wp_{1/2,i \omega'}(z + i \omega').	
\end{equation}
Then for $z \in \field{R}$, $V(z)$ is a real, smooth, 1-periodic function. 

The $m$ lowest Bloch band dispersion functions defined by \eqref{eq:reduced_eigenvalue_problem} for this potential are non-degenerate for all $p \in \mathcal{B}$, but for \emph{every} $n > m$, the band $E_n(p)$ has a linear crossing with the band $E_{n+1}(p)$ at $p = 0$ or $p = \pi$ \cite{MagnusWinkler}. Such potentials are known as `$m$-gap' potentials since the $L^2(\field{R})$ spectrum of the operator $- \frac{1}{2} \de_z^2 + V(z)$ in this case consists of $m + 1$ real intervals with $m$ `gaps' between them. Indeed, \emph{all} `$m$-gap' potentials, for positive integers $m$, are elliptic functions \cite{Hochstadt}. Any Weierstrass elliptic function may be written in terms of Jacobi elliptic functions; for more detail see \cite{2005Cai,WhittakerWatson,Chandrasekharan,Akhiezer,PoschelTrubowitz}. 

The lowest three bands of a `one-gap' potential are shown in Figure \ref{fig:crossing_bands}. The smooth bands at the linear crossing between the second and third bands defined by \eqref{eq:def_smooth_bands}, whose existence is ensured by Theorem \ref{th:all_crossings_smooth}, are shown in Figure \ref{fig:crossing_bands_smooth}. 
\end{example}
\begin{example}[Trivial band crossings] \label{ex:double_period_potentials}
Every Bloch band of a $1/2-$ periodic function regarded as a  $1$-periodic function is degenerate. To see this, let $V(z)$ be $1/2$-periodic. We may plot the band structure of the operator $-\frac{1}{2} \de_z^2 + V(z)$ with respect to the natural $4 \pi$-periodic Brillouin zone, which we take for concreteness to be $[0,4\pi]$. We may also treat $V(z)$ as a 1-periodic potential and plot its band structure with respect to the $2 \pi$-periodic Brillouin zone $[0,2\pi]$. But it is clear that any eigenpair of the $1/2$-periodic eigenvalue problem will also be an eigenpair of the $1$-periodic eigenvalue problem. Hence the band structure of the 1-periodic operator is nothing but the band structure of the 1/2-periodic operator `folded over' onto the shorter interval. More precisely, eigenvalues of the 1/2-periodic operator with quasi-momentum $p \in [2\pi,4\pi]$ will be eigenvalues of the 1-periodic operator with quasi-momentum $p - 2\pi$. For an example, see Figure \ref{fig:folded_bands}. The converse of this argument also holds: whenever all `odd-numbered' gaps close, then $V(z)$ is $1/2$-periodic. Analogous statements hold for $1/2^n$-periodic potentials for any natural number $n$ \cite{1946Borg,1966Hochstadt,reed_simon_4}. We will refer to such crossings as `trivial', since they may be removed by a proper choice of Brillouin zone. The amplitude of the  `excited' wave due to such crossings is zero; see Remark \ref{rem:remark_on_trivial_crossings} and Appendix \ref{app:coupling_coefficient_zero}. 
\end{example}
\begin{figure}
\begin{subfigure}{.5\textwidth} 
\includegraphics[scale=.38]{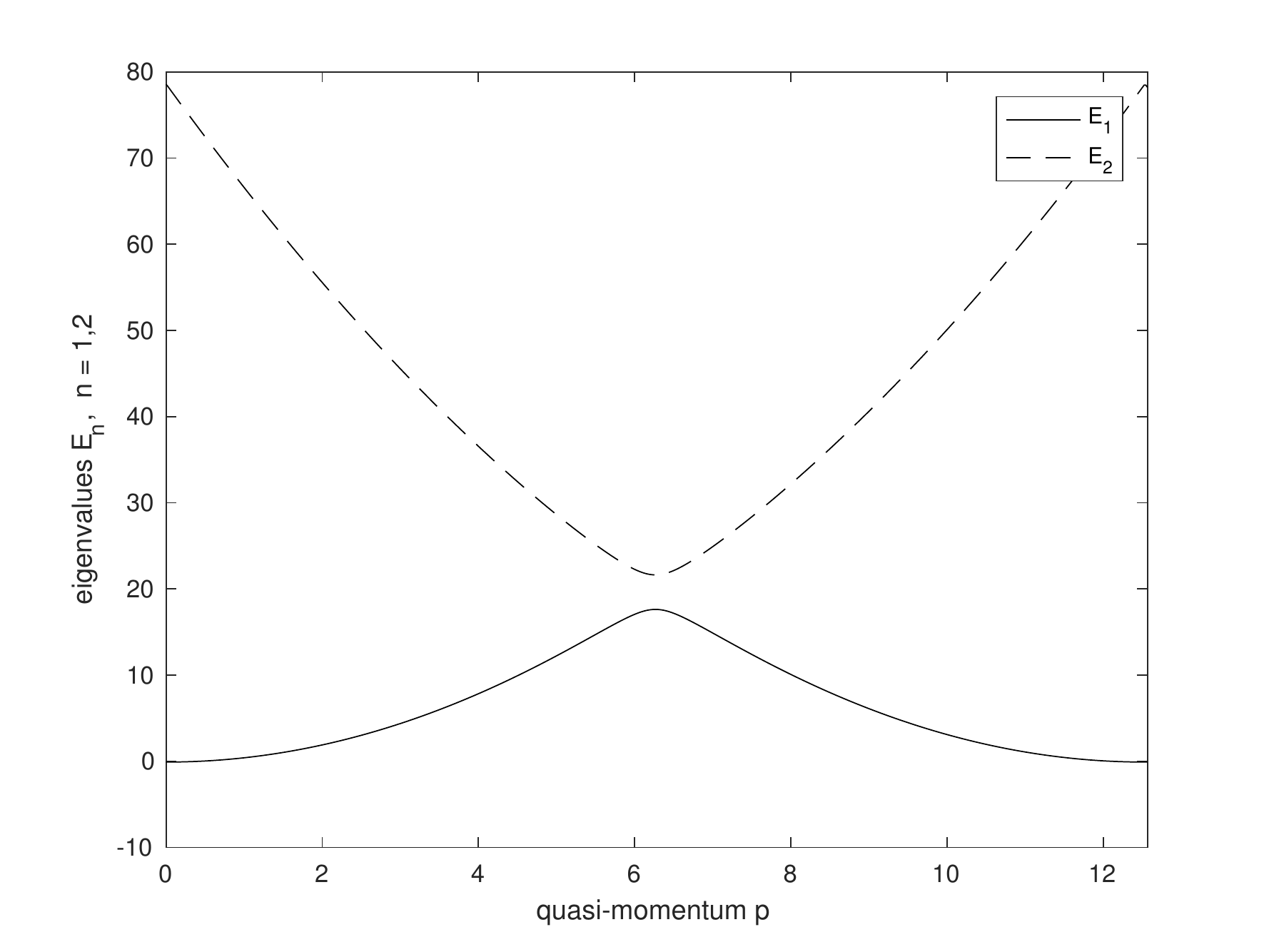} 
\subcaption{}
\end{subfigure}%
\begin{subfigure}{.5\textwidth}
\includegraphics[scale=.38]{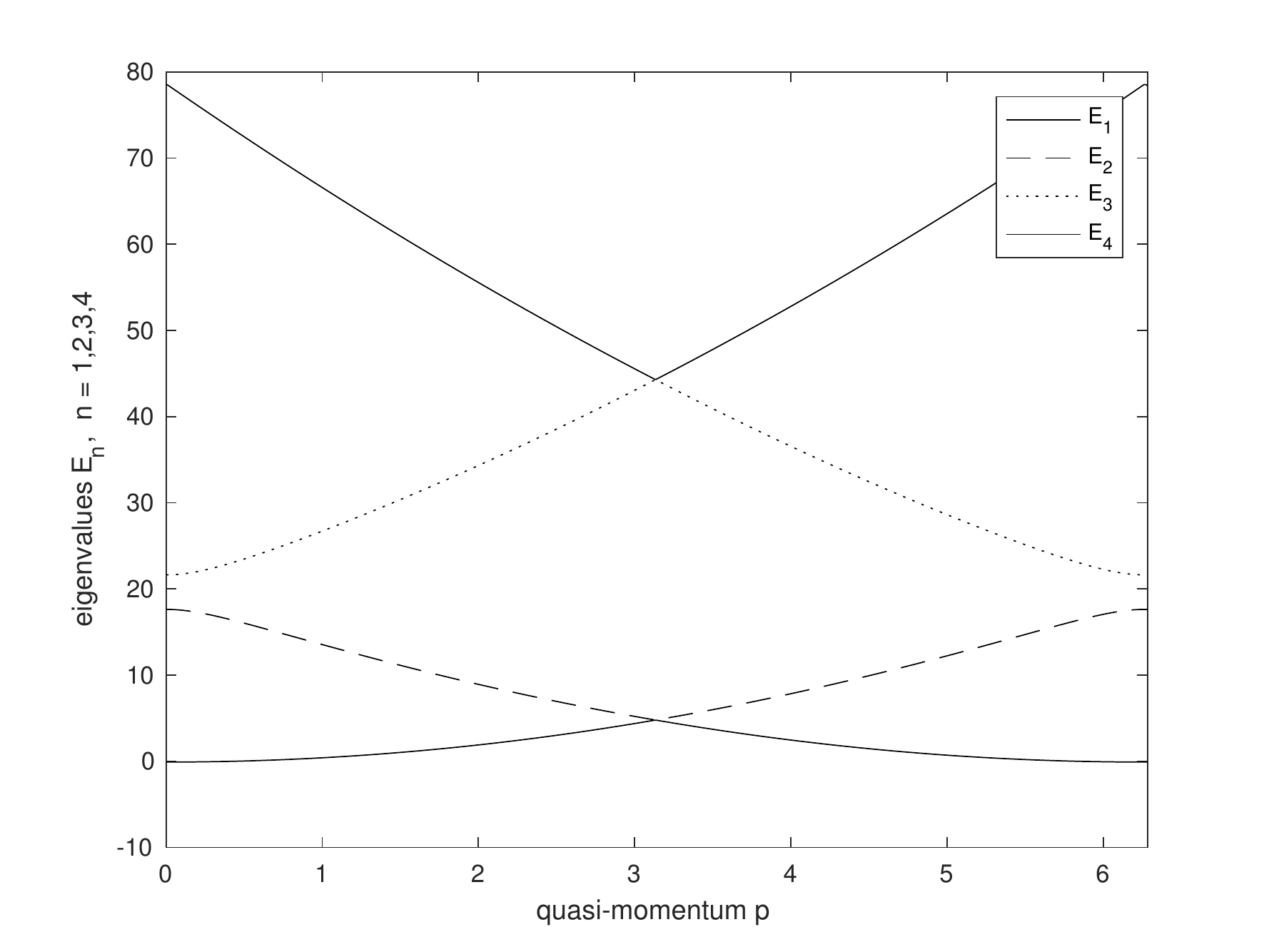} 
\subcaption{}
\end{subfigure}
\caption{Lowest Bloch bands when $V(z) = 4 \cos(4 \pi z)$, viewed as a 1/2-periodic potential and plotted over the natural Brillouin zone in this case $[0,4\pi]$ (\textsc{a}) and viewed as a 1-periodic potential and plotted over $[0,2\pi]$ (\textsc{b}). When $V(z)$ is viewed as a 1-periodic potential, \emph{every} Bloch band is degenerate at $p = \pi$.}
\label{fig:folded_bands}
\end{figure}

\subsection{Band crossing dynamics}\label{crossing-dynamics}

We now make precise the scenario of a wavepacket whose quasi-momentum is driven by the external potential $W$ towards a quasi-momentum $p^*\in\mathcal{B}$ at which there is a linear band crossing; see Property \ref{band_crossing_assumption}. 
\begin{property}[Band Crossing Scenario] \label{classical_path_assumption}
Let $E_n, E_{n+1}$ denote spectral band functions associated with the eigenvalue problem (\ref{eq:reduced_eigenvalue_problem}) for $p \in \mathcal{B}$ which have a linear crossing in the sense of Property \ref{band_crossing_assumption} at $p^*$. Let $q_0, p_0 \in \field{R} \times \mathcal{B}$ be such that $G(E_n(p_0)) > 0$ (i.e. the band $E_n(p)$ is \emph{isolated} at $p_0$: recall the definition of the spectral gap function $G$ \eqref{eq:nth_spectral_gap_function}). We assume the existence of a positive constant $t^* > 0$ such that the equations of motion of the classical Hamiltonian $\mathcal{H}_n(q,p) := E_n(p) + W(q)$:
\begin{align} \label{eq:n_band_classical_system}
	&\dot{q}(t) = \de_p E_n(p(t)) &&\dot{p}(t) = - \de_q W(q(t)) 	\\
\nonumber	&q(0) = q_0 &&p(0) = p_0
\end{align}
have a unique smooth solution $(q(t),p(t)) \subset \field{R} \times \mathcal{B}$ for all $t \in [0,t^*)$ such that the Bloch band function $E_n$ is isolated when evaluated at $p(t)$ for every $t \in [0,t^*)$:  
\begin{equation} \label{eq:first_crossing_at_t_star}
	\text{for all } t \in [0,t^*), \text{ } G(E_n(p(t))) > 0 \text{ and } \lim_{t \uparrow t^*} p(t) = p^*.
\end{equation}
Let $q^*$ denote the limit: $\lim_{t \uparrow t^*} q(t)$. We assume that the wavepacket is `driven' towards the crossing in the following sense: 
\begin{equation} \label{eq:driven_through_pstar}
	\lim_{t \uparrow t^*} \dot{p}(t) = - \de_q W(q^*) > 0.
\end{equation}
\end{property}
\begin{remark}
We choose the sign of $- \de_q W(q^*)$ in \eqref{eq:driven_through_pstar} to be positive without loss of generality. Note that it follows from \eqref{eq:driven_through_pstar} that for $t < t^*$ with $|t - t^*|$ sufficiently small, $p(t) < p^*$: i.e. the wave-packet quasi-momentum approaches $p^*$ `from the left'. As a consequence the `smooth extension' of the map $t \mapsto E_n(p(t))$ for $t \geq t^*$ makes use of $E_+(p(t))$ rather than $E_-(p(t))$; see Proposition \ref{prop:smooth_continuation}. 
\end{remark}
We aim to describe the solution of the PDE \eqref{eq:original_equation} with `Bloch wavepacket' initial data of the form: 
\begin{equation} 
	\psi^\epsilon(x,0) = \text{WP}^{1,\epsilon}[S_0,q_0,p_0,a^0_0(y),a^1_0(y),\chi_n(z;p_0)](x)\ ;
\end{equation}
in the Band Crossing Scenario (Property  \ref{classical_path_assumption}) up to errors of $o_{L^2}(\sqrt{\epsilon})$ for $t$ up to and greater than the crossing time $t^*$. Here, $a^0_0, a^1_0 \in \mathcal{S}(\field{R})$, $S_0 \in \field{R}$, and the $\text{WP}^{1,\epsilon}[...]$ notation is as in \eqref{eq:WP_notation_1}. 

Note that for $t < t^*$, \eqref{eq:first_crossing_at_t_star} implies that Property \ref{isolated_band_assumption} holds with $t_0 = 0$ and $t_1 = t$. By Theorem \ref{th:isolated_band_theorem_2} the solution $\psi^\epsilon(x,t)$ of \eqref{eq:original_equation} satisfies, for fixed $t$ and $\epsilon\downarrow0$:
\begin{equation} \label{eq:incoming_single_band}
	\psi^\epsilon(x,t) = \text{WP}^{1,\epsilon}[S(t),q(t),p(t),a^0(y,t),a^1(y,t),\chi_n(z;p(t))](x,t) + O_{L^2}(\epsilon)
\end{equation}
where $q(t),p(t),S(t),a^0(y,t),a^1(y,t)$ are as in \eqref{eq:n_band_classical_system}, \eqref{eq:action_integral}, \eqref{eq:envelope_equation}, and \eqref{eq:first_order_envelope_equation} respectively. \medskip

Two difficulties arise in estimating the error term in \eqref{eq:incoming_single_band} for $t \geq t^*$: 
\begin{difficulty} \label{dif:1}
The functions $q(t)$, $p(t)$, $S(t)$, $a^0(y,t)$, $a^1(y,t)$, $\chi_n(z;p(t))$, and $\de_p \chi_n(z;p(t))$, and therefore the function: 
\begin{equation}
	\text{\emph{WP}}^{1,\epsilon}[S(t),q(t),p(t),a^0(y,t),a^1(y,t),\chi_n(z;p(t))](x,t),
\end{equation}
are not well-defined at $t = t^*$ since the band function $E_n(p)$ and its associated eigenfunctions $\chi_n(z;p)$ are not smooth in $p$ at $p^*$. 
\end{difficulty}
\begin{difficulty} \label{dif:2}
The $L^2_x$-norm of the error in the approximation \eqref{eq:incoming_single_band} depends directly on the inverse of the spectral gap function $G(E_n(p(t)))$, which blows up as $t \uparrow t^*$ since $|G(E_n(p(t)))| \sim | E_{n+1}(p(t)) - E_n(p(t)) | \downarrow 0$. 
\end{difficulty}
\noindent We return to Difficulty \ref{dif:2} below: see Theorem \ref{th:limit_of_single_band_approximation} and Corollary \ref{cor:limit_of_single_band_approximation}.

\subsection{Resolution of Difficulty \ref{dif:1}; Smooth Continuation of Bands}
Difficulty \ref{dif:1} may be overcome by making proper use of the smooth band functions $E_+, E_-$; see  \eqref{eq:def_smooth_bands} in Property \ref{band_crossing_assumption}, Theorem \ref{th:all_crossings_smooth} and Figure \ref{fig:crossing_bands_smooth}. The following proposition shows how in the Band Crossing Scenario (Property \ref{classical_path_assumption}), we may extend the map $[0,t^*) \rightarrow \field{R} \times \mathcal{B}, t \mapsto (q(t),p(t))$ to a smooth map over an interval $[0,T]$ with $T > t^*$ using the smooth band function $E_+$:
\begin{proposition} \label{prop:smooth_continuation}
Assume the Band Crossing Scenario (Property \ref{classical_path_assumption}) with crossing occurring for $t=t^*$. Then for sufficiently small positive $\delta$ with $0 < \delta < t^*$, the equations of motion of the classical Hamiltonian $\mathcal{H}_+({q},{p}) := E_+({p}) + W({q})$ with data specified at $t^*$:
\begin{align} \label{eq:+_band_classical_system}
	&\dot{{q}}_+(t) = \de_{{p}} E_+({p}_+(t)), &&\dot{{p}}_+(t) = - \de_{{q}} W({q}_+(t)) 	\\
	\nonumber&{q}_+(t^*) = q^* &&{p}_+(t^*) = p^* 	
\end{align}
have a unique smooth solution $({q}_+(t),{p}_+(t)) \subset \field{R} \times U$ over the interval $t \in [t^* - \delta,t^* + \delta]$ which satisfies: 
\begin{equation} \label{eq:smooth_continuation}
	\text{for all } t \in [t^* - \delta,t^*), \text{ } q(t) = {q}_+(t) \text{, } p(t) = {p}_+(t).
\end{equation}
Furthermore, for sufficiently small $T \geq t^* + \delta > 0$, there exists a solution $(q_{n+1}(t),p_{n+1}(t))$ $\subset \field{R} \times \mathcal{B}$ of the equations of motion of the classical Hamiltonian $\mathcal{H}_{n+1}(q,p) := E_{n+1}(p) + W(q)$ over the interval $t \in (t^*,T]$ satisfying the limits:
\begin{align} \label{eq:n+1_system}
	&\dot{q}_{n+1}(t) = \de_{p} E_{n+1}(p_{n+1}(t)) &&\dot{p}_{n+1}(t) = - \de_{q} W(q_{n+1}(t)) 	\\
\nonumber	&\lim_{t \downarrow t^*} q_{n+1}(t) = q^* &&\lim_{t \downarrow t^*} p_{n+1}(t) = p^*
\end{align}
such that $G(E_{n+1}(p_{n+1}(t))) > 0$ for all $t \in (t^*,T]$. This solution satisfies: 
\begin{equation} \label{eq:smooth_continuation_again1}
	\text{for all } t \in (t^*,t^* + \delta], \text{ } q_+(t) = {q}_{n+1}(t) \text{, } p_+(t) = {p}_{n+1}(t).
\end{equation}
It follows from \eqref{eq:smooth_continuation} and \eqref{eq:smooth_continuation_again1} that the map:
\begin{equation} \label{eq:extended_map}
	t \mapsto (\mathfrak{q}_+(t),\mathfrak{p}_+(t)) := \begin{cases} (q(t),p(t)) &\text{for } t \in [0,t^* - \delta] \\ ({q}_+(t),{p}_+(t)) &\text{for } t \in [t^* - \delta,t^*+\delta] \\ (q_{n+1}(t),p_{n+1}(t)) &\text{for } t \in [t^*+\delta,T] \end{cases}	
\end{equation}
is smooth as a map $[0,T] \rightarrow \field{R} \times \mathcal{B}$. 
\end{proposition}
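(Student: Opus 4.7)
I would prove this in three stages: (i) produce $(q_+(t),p_+(t))$ as a unique smooth solution of \eqref{eq:+_band_classical_system} on a symmetric interval about $t^*$ by applying Picard-Lindel\"of to the smooth Hamiltonian $\mathcal{H}_+$; (ii) match $(q_+,p_+)$ with the original trajectory $(q,p)$ on $[t^*-\delta,t^*]$ via ODE uniqueness and the identity $E_+ = E_n$ on $\{p \in U : p < p^*\}$ coming from \eqref{eq:def_smooth_bands}; (iii) identify $(q_+,p_+)$ on $(t^*,t^*+\delta]$ with a trajectory of $\mathcal{H}_{n+1}$ using $E_+ = E_{n+1}$ on $\{p \in U : p > p^*\}$, and continue that trajectory to $[t^*,T]$ by a second application of Picard-Lindel\"of, this time to $\mathcal{H}_{n+1}$.

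\textbf{Key steps.} For (i), the smoothness of $E_+$ on $U$ guaranteed by Property \ref{band_crossing_assumption}, together with smoothness of $W$, makes $\mathcal{H}_+$ smooth on $\field{R} \times U$; Picard-Lindel\"of then yields a unique $C^\infty$ solution on a maximal interval, which for $\delta>0$ sufficiently small contains $[t^*-\delta,t^*+\delta]$ with $p_+([t^*-\delta,t^*+\delta])\subset U$. The driven condition \eqref{eq:driven_through_pstar} forces $\dot p_+(t^*) = -\de_q W(q^*) > 0$, so after further shrinking $\delta$ I may assume $p_+(t)<p^*$ on $[t^*-\delta,t^*)$ and $p_+(t)>p^*$ on $(t^*,t^*+\delta]$. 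For (ii), the analogous transversality for the original trajectory gives $p(t)<p^*$ on $[t^*-\delta,t^*)$, where \eqref{eq:def_smooth_bands} collapses $\mathcal{H}_n$ to $\mathcal{H}_+$, so \eqref{eq:n_band_classical_system} is in fact \eqref{eq:+_band_classical_system} on this interval. Using smoothness of $E_+$ at $p^*$ together with the assumed limits $p(t)\to p^*$ and $q(t)\to q^*$, the right-hand sides of \eqref{eq:+_band_classical_system} extend continuously to $t=t^*$, so setting $(q(t^*),p(t^*)):=(q^*,p^*)$ produces a $C^1$ solution of \eqref{eq:+_band_classical_system} on $[t^*-\delta,t^*]$ with terminal value $(q^*,p^*)$; uniqueness of the backward $\mathcal{H}_+$-flow from this terminal condition then identifies the extension with $(q_+,p_+)$, yielding \eqref{eq:smooth_continuation}. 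For (iii), on $(t^*,t^*+\delta]$ one has $E_+(p_+(t))=E_{n+1}(p_+(t))$ by \eqref{eq:def_smooth_bands}, so $(q_+,p_+)$ solves \eqref{eq:n+1_system} there; hypotheses \ref{item:crossing} and \ref{other_bands_isolated_away} of Property \ref{band_crossing_assumption} imply $G(E_{n+1}(p))>0$ for every $p\in U\setminus\{p^*\}$, with locally uniform lower bound, so Picard-Lindel\"of applied to the smooth Hamiltonian $\mathcal{H}_{n+1}$ continues the solution to $(q_{n+1},p_{n+1})$ on some $[t^*,T]$ with $T>t^*+\delta$, and uniqueness yields \eqref{eq:smooth_continuation_again1}. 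Global smoothness of $(\mathfrak{q}_+,\mathfrak{p}_+)$ on $[0,T]$ then follows by concatenation: each of the three pieces is a smooth solution of a smooth Hamiltonian ODE, and \eqref{eq:smooth_continuation} and \eqref{eq:smooth_continuation_again1} ensure consistent glueing on the overlaps.

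\textbf{Main obstacle.} The subtle step is (ii): I must verify that the original trajectory extends to a genuine $C^1$ solution of the $\mathcal{H}_+$ flow on the closed interval $[t^*-\delta,t^*]$ before ODE uniqueness can be brought to bear. This rests on two structural features of the setup: the smoothness of $E_+$ through the crossing guaranteed by Property \ref{band_crossing_assumption}, and the transversality \eqref{eq:driven_through_pstar}, which ensures $p(t)$ actually lies in the region $p<p^*$ where $E_n=E_+$ on a full one-sided neighborhood of $t^*$, rather than oscillating tangentially against $p^*$ in a manner that would obstruct such a continuous extension.
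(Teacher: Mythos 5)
Your proposal is correct and follows essentially the same route as the paper, which disposes of this proposition in one line by citing smoothness of $W$, of $E_n, E_{n+1}$ away from $p^*$, and of $E_+$ on $U$, together with existence and uniqueness for ODEs with smooth coefficients. Your write-up simply fills in the details the paper leaves implicit (in particular the role of the transversality condition \eqref{eq:driven_through_pstar} in ensuring $p(t)<p^*$ on a one-sided neighborhood so that $E_n=E_+$ applies), and does so correctly.
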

\begin{proof}
The potential $W$ is smooth by assumption, the band functions $E_n, E_{n+1}$ are smooth everywhere away from $p^*$, and the band function $E_+$ is smooth in $U$, a neighborhood of $p^*$. The proposition then follows easily from existence and uniqueness for solutions of ODEs with smooth coefficients.
\end{proof}
Corresponding to the smooth extension $t \mapsto (\mathfrak{q}_+(t),\mathfrak{p}_+(t))$ we may define the smooth extension of $\chi_n(z;p(t))$ through the crossing:
\begin{equation} \label{eq:X_+}
	t \mapsto \mathfrak{X}_+(z;\mathfrak{p}_+(t)) := \begin{cases} \chi_n(z;p(t)) &\text{for } t \in [0,t^* - \delta] \\ \chi_+(z;p_+(t)) &\text{for } t \in [t^* - \delta,t^*+\delta] \\ \chi_{n+1}(z;p_{n+1}(t)) &\text{for } t \in [t^*+\delta,T] \end{cases}.
\end{equation}
Finally, using the smooth maps $t \mapsto (\mathfrak{q}_+(t),\mathfrak{p}_+(t)) $ and $t \mapsto \mathfrak{X}_+(z;\mathfrak{p}_+(t)) $, we introduce smooth extensions of the functions $a^0(y,t)$, $a^1(y,t)$, and $S(t)$ over the whole interval $t \in [0,T]$ as follows:
\begin{definition}[Smooth extensions of $a^0(y,t)$, $a^1(y,t)$, and $S(t)$] \label{def:S_a_b_extensions}
Let: 
\begin{equation} \label{eq:defs_of_stars}
	S^* := \lim_{t \uparrow t^*} S(t), \text{ } a^{0,*}(y) := \lim_{t \uparrow t^*} a^0(y,t), \text{ \emph{and} } a^{1,*}(y) := \lim_{t \uparrow t^*} a^1(y,t). 
\end{equation}
Then let $S_+(t)$, $a^0_+(y,t)$, and $a^1_+(y,t)$ be defined for $t \in [t^* - \delta,t^* + \delta]$ by \eqref{eq:action_integral}, \eqref{eq:envelope_equation}, and \eqref{eq:first_order_envelope_equation} with $t_0 = t^*$, and where all dependence on $p(t)$, $q(t)$, $E_n(p(t))$, $\chi_n(z;p(t))$, and $W(q(t))$ is replaced by dependence on $p_+(t), q_+(t)$, $E_+(p_+(t))$, $\chi_+(z;p_+(t))$, and $W(q_+(t))$ respectively, and:
\begin{equation}
	 S_+(t^*) = S^*, \text{ } a_+^0(y,t^*) = a^{0,*}(y), \text{ \emph{and} } a_+^1(y,t^*) = a^{1,*}(y).
\end{equation}
Then let $S_{n+1}(t)$, $a^0_{n+1}(y,t)$, $a^1_{n+1}(y,t)$ be defined for $t \in (t^*,T]$ by equations \eqref{eq:action_integral}, \eqref{eq:envelope_equation}, and \eqref{eq:first_order_envelope_equation}, replacing dependence on $p(t)$, $q(t)$, $E_n(p(t))$, $\chi_n(z;p(t))$, and $W(q(t))$ by dependence on $p_{n+1}(t), q_{n+1}(t)$, $E_{n+1}(p_{n+1}(t))$, $\chi_{n+1}(z;p_{n+1}(t))$, and $W(q_{n+1}(t))$ and the limits:
\begin{equation}
	\lim_{t \downarrow t^*} S_{n+1}(t) = S^*, \text{ } \lim_{t \downarrow t^*} a_{n+1}(y,t) = a^{0,*}(y), \text{ \emph{and} } \lim_{t \downarrow t^*} a^1(y,t) = a^{1,*}(y).
\end{equation}
We denote by $\mathfrak{S}_+(t)$, $\mathfrak{a}^0_+(y,t)$, and $\mathfrak{a}^1_+(y,t)$ smooth maps defined over the whole interval $t \in [0,T]$ defined analogously to \eqref{eq:extended_map} so that, for example:
\begin{equation} \label{eq:a_+_extension}
	t \mapsto \mathfrak{a}^0_+(y,t) := \begin{cases} a^0(y,t) &\text{for } t \in [0,t^* - \delta] \\ a^0_+(y,t) &\text{for } t \in [t^* - \delta,t^*+\delta] \\ a^0_{n+1}(y,t) &\text{for } t \in [t^*+\delta,T] \end{cases}.
\end{equation}
\end{definition}

We now define a \emph{first-order wavepacket smoothly continued through the crossing}, an expression which is smooth for all $t \in [0,T]$ by (recall the definition of $\text{WP}^{1,\epsilon}$ \eqref{eq:WP_notation_1}):
\begin{equation} \label{eq:WP_1_+}
\begin{split}
	&\text{WP}^{1,\epsilon}[\mathfrak{S}_+(t),\mathfrak{q}_+(t),\mathfrak{p}_+(t),\mathfrak{a}^0_+(y,t),\mathfrak{a}^1_+(y,t),\mathfrak{X}_+(z;\mathfrak{p}_+(t))](x,t)	:= 	\\
	&:= \begin{cases} \text{{WP}}^{1,\epsilon}[S(t),q(t),p(t),a^0(y,t),a^1(y,t),\chi_n(z;p(t))](x,t) &\text{for } t \in [0,t^* - \delta] \\ \text{{WP}}^{1,\epsilon}[S_+(t),q_+(t),p_+(t),a^0_+(y,t),a^1_+(y,t),\chi_+(z;p(t))](x,t) &\text{for } t \in [t^* - \delta,t^*+\delta] \\ \text{{WP}}^{1,\epsilon}[S_{n+1}(t),q_{n+1}(t),p_{n+1}(t),a^0_{n+1}(y,t),a^1_{n+1}(y,t),\chi_{n+1}(z;p(t))](x,t) &\text{for } t \in [t^* + \delta,T] \end{cases}.
\end{split}
\end{equation}
\begin{remark}
By construction, \eqref{eq:WP_1_+} is a wavepacket associated with the band $E_n$ for $t \in [0,t^* - \delta]$, a wavepacket associated with the band $E_{n+1}$ for $t \in [t^* + \delta,T]$, and a wavepacket associated with the `smooth transition' $E_+$ for $t \in [t^* - \delta,t^* + \delta]$. 
\end{remark}
\subsection{Resolution of Difficulty \ref{dif:2}; Incorporation of the second band and a new, fast / non-adiabatic, time-scale}
We first present a result which quantifies, through a blow-up rate of the error bound, the breakdown of the single band approximation \eqref{eq:incoming_single_band} as $t \uparrow t^*$: 
\begin{theorem} \label{th:limit_of_single_band_approximation}
 Assume the Band Crossing Scenario (Property \ref{classical_path_assumption}). Assume $a^0_0(y)$ and $ a^1_0(y) \in \mathcal{S}(\field{R})$ and let $\psi^\epsilon(x,t)$ denote the unique solution of (\ref{eq:original_equation}) with `Bloch wavepacket' initial data:
\begin{equation} \label{eq:bloch_wavepacket_initial_data} 
	\psi^\epsilon(x,0) = \text{\emph{WP}}^{1,\epsilon}_n[S_0,q_0,p_0,a^0_0(y),a^1_0(y),\chi_n(z;p_0)](x). 
\end{equation}

Then for $t \in [0,t^*)$, $\psi^\epsilon(x,t)$ satisfies: 
\begin{equation} \label{eq:WP_and_corrector}
	\psi^\epsilon(x,t) = \text{\emph{WP}}^{1,\epsilon}[\mathfrak{S}_+(t),\mathfrak{q}_+(t),\mathfrak{p}_+(t),\mathfrak{a}^0_+(y,t),\mathfrak{a}^1_+(y,t),\mathfrak{X}_+(z;\mathfrak{p}_+(t))](x) + \eta^{\epsilon}(x,t)
\end{equation} 
where $\text{\emph{WP}}^{1,\epsilon}_+(x,t)$ is given by (\ref{eq:WP_1_+}). Moreover, the  corrector $\eta^{\epsilon}(x,t)$ satisfies the following bound for $0<t<t^*$, which blows up as $t\uparrow t^*$: 
\begin{equation} \label{eq:bound_on_eta_near_the_crossing}
\begin{split}
	\| \eta^\epsilon(\cdot,t) \|_{L^2} \leq &\text{  }C \left| \ip{\chi_-(\cdot;p^*)}{\de_p \chi_+(\cdot;p^*)} \right| \left( \frac{\epsilon}{|t - t^*|} + \frac{ \epsilon^{3/2} }{ |t - t^*|^2 } \right) \\ 
	&+ O\left(\epsilon, \epsilon^{3/2} \ln |t - t^*| , \frac{\epsilon^{3/2}}{|t - t^*|}\right) \\
\end{split}
\end{equation}
The constants in \eqref{eq:bound_on_eta_near_the_crossing} (explicit and implied) are
 independent of $t, \epsilon$ and are finite as long as:  
 $\partial_p E_+(p^*)-\partial_p E_-(p^*)=2\partial_p E_+(p^*)>0$ and $\partial_qW(q^*)\ne0$.
\end{theorem}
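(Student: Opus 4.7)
The plan is to establish \eqref{eq:bound_on_eta_near_the_crossing} by a Duhamel-type analysis combined with an integration-by-parts argument that exploits the rapid inter-band oscillatory phase. The decomposition \eqref{eq:WP_and_corrector} itself is immediate: by Proposition \ref{prop:smooth_continuation} the smoothly continued quantities $(\mathfrak{S}_+,\mathfrak{q}_+,\mathfrak{p}_+,\mathfrak{a}^0_+,\mathfrak{a}^1_+,\mathfrak{X}_+)$ coincide with $(S,q,p,a^0,a^1,\chi_n)$ for $t\in[0,t^*)$, so the ansatz in \eqref{eq:WP_and_corrector} is exactly the one used in Theorem \ref{th:isolated_band_theorem_2}. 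The content of the theorem is therefore a refinement in which the dependence of every constant on the shrinking gap $E_+(p(t))-E_-(p(t))\downarrow 0$ is made explicit.

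The first step is to compute the residual
\[ \mathcal{R}^\epsilon(x,t):=(i\epsilon\de_t - H^\epsilon)\,\text{WP}^{1,\epsilon}[\mathfrak{S}_+,\mathfrak{q}_+,\mathfrak{p}_+,\mathfrak{a}^0_+,\mathfrak{a}^1_+,\mathfrak{X}_+](x,t) \]
and expand in powers of $\sqrt\epsilon$, as in \cite{carles_sparber,2017WatsonWeinsteinLu}. The $O(\epsilon^{1/2})$ and $O(\epsilon)$ contributions vanish because of the classical Hamilton flow \eqref{eq:+_band_classical_system} (for $E_+$) and the envelope equations \eqref{eq:envelope_equation}, \eqref{eq:first_order_envelope_equation}. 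The remaining terms I would decompose orthogonally in $L^2_{per}$ along $\mathrm{span}\{\chi_+(\cdot;p(t))\}\oplus\mathrm{span}\{\chi_-(\cdot;p(t))\}\oplus\{\chi_+,\chi_-\}^\perp$. By Corollary \ref{cor:away_from_crossing_regular}, the component in $\{\chi_+,\chi_-\}^\perp$ has reduced resolvent bounded uniformly by $1/M$ and feeds only $O(\epsilon)$ and $O(\epsilon^{3/2}\ln|t-t^*|)$ contributions into $\eta^\epsilon$; the $\chi_+$-component is likewise regular, amounting to a higher-order adjustment of the envelopes.

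The dominant singular contribution comes from the $\chi_-$-component of $\mathcal{R}^\epsilon$, which is not projected off in the construction of $\text{WP}^{1,\epsilon}$. Explicit computation, using $\de_p H(p)=p-i\de_z$ and the orthogonality $\ip{\chi_-(\cdot;p^*)}{\chi_+(\cdot;p^*)}=0$, isolates the leading-order $\chi_-$-projection as
\[ \de_q W(q(t))\,\ip{\chi_-(\cdot;p(t))}{\de_p\chi_+(\cdot;p(t))}\,a^0(y,t) \]
(modulated by rapid phases), which at $t=t^*$ reduces to $\de_q W(q^*)\,\ip{\chi_-(\cdot;p^*)}{\de_p\chi_+(\cdot;p^*)}\,a^{0}(y,t^*)$ — exactly the coupling coefficient in \eqref{eq:bound_on_eta_near_the_crossing}. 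The $\chi_-$-component of $\eta^\epsilon$ then satisfies, after separating off the phase $\exp(\tfrac{i}{\epsilon}\int_0^tE_-(p(s))ds)$, a driven scalar equation (in $t$, parameterized by $y$) whose driving phase difference expands as $E_-(p(s))-E_+(p(s))=-2\de_pE_+(p^*)\,\de_qW(q^*)\,(s-t^*)+O((s-t^*)^2)$ by hypothesis (A4) of Property \ref{band_crossing_assumption} together with \eqref{eq:driven_through_pstar}. One integration by parts in $s$ then yields a boundary contribution of size $\epsilon/|t-t^*|$ with the coupling coefficient as prefactor — the leading term of \eqref{eq:bound_on_eta_near_the_crossing}. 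A second integration by parts, combined with the $\sqrt\epsilon$ correction layer $a^1$ and derivatives of $a^0/(E_+-E_-)$, produces the $\epsilon^{3/2}/|t-t^*|^2$ term as well as the subleading $\epsilon^{3/2}/|t-t^*|$ and $\epsilon^{3/2}\ln|t-t^*|$ contributions in \eqref{eq:bound_on_eta_near_the_crossing}.

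The main obstacle will be the algebraic bookkeeping: ensuring that the coupling coefficient $\ip{\chi_-(\cdot;p^*)}{\de_p\chi_+(\cdot;p^*)}$ appears only at the places specified in \eqref{eq:bound_on_eta_near_the_crossing} and not in the subleading $O$-terms requires identifying precisely which pieces of the residual project nontrivially onto $\chi_-$ at the crossing, and verifying that competing contributions are either absorbed by the envelope equations or orthogonal to $\chi_-$. The non-degeneracy conditions $\de_pE_+(p^*)>0$ and $\de_qW(q^*)\neq 0$ are essential here — they guarantee that $E_+(p(t))-E_-(p(t))$ vanishes only \emph{linearly} in $t-t^*$, which is exactly what produces a $1/|t-t^*|$ (rather than a worse) singularity after each integration by parts.
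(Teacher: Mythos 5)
Your plan is correct in substance and arrives at the same mechanism as the paper: the singular part of the error is driven by the $\chi_-$-component of the residual, with coefficient $\de_q W(q^*)\ip{\chi_-(\cdot;p^*)}{\de_p\chi_+(\cdot;p^*)}$; the components orthogonal to $\mathrm{span}\{\chi_+,\chi_-\}$ are controlled uniformly through the crossing by Corollary \ref{cor:away_from_crossing_regular}; and the $|t-t^*|^{-1}$, $|t-t^*|^{-2}$ rates come from the linear vanishing of $E_+(p(t))-E_-(p(t))$ guaranteed by (A4) and \eqref{eq:driven_through_pstar}. The packaging differs from the paper's in one respect. The paper takes as $\psi^\epsilon_{app}$ the \emph{corrected} approximate solution of \cite{2017WatsonWeinsteinLu}, which already contains the reduced-resolvent term $\mathcal{R}_+(p_+(t))P^\perp_+(p_+(t))\de_p\chi_+$, so the factor $(E_--E_+)^{-1}$ sits inside the ansatz; the singularity then appears when $\epsilon^2\de_t$ hits $(E_--E_+)^{-1}$ in the residual, producing the $\epsilon^2/|t-t^*|^2$ bound \eqref{eq:term}, which Lemma \ref{lem:key_lemma} converts into $\epsilon/|t-t^*|$ via the crude estimate $\frac{1}{\epsilon}\int\|r^\epsilon\|$. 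You instead keep the bare $\text{WP}^{1,\epsilon}$ ansatz and extract the same factor $\epsilon/(E_+-E_-)$ by non-stationary phase in the Duhamel integral. These are algebraically equivalent --- inserting the resolvent corrector into the ansatz \emph{is} the first integration by parts --- and your version is essentially the computation the paper carries out for the finite-dimensional model in Appendix \ref{app:consistency}, cf.\ \eqref{eq:c_system_again}--\eqref{eq:inter_band_transition_prob}.

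Two points need tightening. First, your claim that ``the $O(\epsilon)$ contributions vanish because of \dots the envelope equations'' contradicts the rest of your own argument: only the $\chi_+$-projection of the order-$\epsilon$ residual is cancelled by \eqref{eq:envelope_equation}; the $\chi_-$-projection $i\,\de_q W\,a^0\ip{\chi_-}{\de_p\chi_+}\chi_-$ and the $P^\perp_\pm$-component survive at that order (cf.\ \eqref{eq:better_form_again}), and the former is precisely your singular driver. (Also, \eqref{eq:first_order_envelope_equation} governs $a^1$ and intervenes at order $\epsilon^{3/2}$, not $\epsilon$.) Second, to integrate by parts ``in $s$'' you must first reduce the $\chi_-$-component of the error to a scalar driven ODE along $-$-band characteristics; since $H^\epsilon$ does not commute with the Bloch projections, that reduction requires either inserting an explicit $-$-band corrector into the ansatz (at which point you have reconstructed the paper's resolvent term and its residual) or an a priori bound on the band-projected propagator. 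Neither is difficult, but as written the ``driven scalar equation'' is asserted rather than obtained, and it is exactly the step at which your route and the paper's coincide.
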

The proof of Theorem \ref{th:limit_of_single_band_approximation} is by explicit term by term estimation of the corrector $\eta^\epsilon(x,t)$. We demonstrate this procedure in Section \ref{sec:proof_of_single_band_failure_theorem} for the term which contributes the first term on the right-hand side of the estimate \eqref{eq:bound_on_eta_near_the_crossing}. This term dominates when $t = t^* - \epsilon^\xi$ where $0 < \xi < 1/2$ and hence controls the time interval of validity of the single-band ansatz (see Corollary \ref{cor:limit_of_single_band_approximation}). All other terms in $\eta^\epsilon$ may be bounded similarly, see Appendix B.3 of \cite{thesis}. 
\begin{remark}
The manner in which the nonzero constants  $\partial_p E_+(p^*)-\partial_p E_-(p^*)=2\partial_p E_+(p^*)$ and $\partial_qW(q^*)$ play a role in the bound \eqref{eq:bound_on_eta_near_the_crossing} is seen in \eqref{eq:E_plus_minus_expanded}.
\end{remark}
Theorem \ref{th:limit_of_single_band_approximation} shows that the single band ansatz, even when smoothly continued through the linear band crossing, fails to give a good approximation (error of size $o_{L^2_x}(\sqrt{\epsilon})$) to the solution $\psi^\epsilon(x,t)$ of equation \eqref{eq:original_equation} for small $|t - t^*|$. Furthermore,  since the dominant terms in the bound \eqref{eq:bound_on_eta_near_the_crossing}, $\sqrt{\epsilon} \times (\sqrt{\epsilon} / |t-t^*|), \sqrt{\epsilon} \times (\sqrt{\epsilon} / |t - t^*| )^2$,  are proportional to 
\begin{equation} \label{eq:coupling_coefficient}
	\ip{\chi_-(\cdot;p^*)}{\de_p \chi_+(\cdot;p^*)},
\end{equation}
we see that the failure of the single band wave-packet approximation is due to contributions to the solution from the other band participating in the linear crossing, $p\mapsto E_-(p)$,  growing to be of size $\sim \sqrt{\epsilon}$ when:
\begin{equation}
	|t - t^*| \sim \sqrt{\epsilon}. 
\end{equation}

\begin{remark} \label{rem:remark_on_trivial_crossings}
At `trivial' crossings, which occur when the potential $V(z)$ has minimal period $1/2$ (recall Example \ref{ex:double_period_potentials}), the `inter-band coupling coefficient' \eqref{eq:coupling_coefficient} is zero (see Appendix \ref{app:coupling_coefficient_zero}). It follows that the amplitude of the wave associated with the other band involved in the crossing `excited' (Theorem \ref{th:band_crossing_theorem}) at the crossing is also zero. This is consistent with the observation that the crossing may be removed by making the proper choice of Brillouin zone. 
\end{remark}

\begin{remark} \label{rem:remark_on_nontrivial_crossings}
When $V(z)$ is an $m-$gap potential (recall Example \ref{ex:n_gap_potentials}), the eigenfunctions $\chi_\pm(z;p)$ may be explicitly displayed (see \cite{2005Cai}, for example), and the `coupling coefficient' \eqref{eq:coupling_coefficient} may be numerically computed relatively easily. Such a computation for the lowest band crossing of the `one-gap' potential (shown in Figures \ref{fig:crossing_bands} and \ref{fig:crossing_bands_smooth}) shows that \eqref{eq:coupling_coefficient} is non-zero in this case.
\end{remark} 

The following Corollary of Theorem \ref{th:limit_of_single_band_approximation} precisely characterizes the time interval of validity of the single band ansatz: 
\begin{corollary} \label{cor:limit_of_single_band_approximation}
Let $t = t^* - \epsilon^\xi$. Then, for small enough $\epsilon > 0$, (\ref{eq:bound_on_eta_near_the_crossing}) implies that the corrector function $\eta^\epsilon(x,t)$ which appears in \eqref{eq:WP_and_corrector} satisfies:
\begin{equation} \label{eq:t_eps_xi}
	\sup_{t \in [0,t^* - \epsilon^{\xi}]} \| \eta^\epsilon(\cdot,t) \|_{L^2} \leq C \epsilon^{1 - \xi} 
\end{equation}
where $C > 0$ is a constant independent of $\epsilon, \xi, t$. In particular, if $0 < \xi < 1/2$, then:
\begin{equation} \label{eq:t_eps_xi_2}
	\sup_{t \in [0,t^* - \epsilon^{\xi}]} \| \eta^\epsilon(\cdot,t) \|_{L^2} = o(\sqrt{\epsilon}).
\end{equation}
It follows that $\eta^\epsilon(x,t)$ is negligible in $L^2(\field{R})$ compared with $\text{\emph{WP}}^{1,\epsilon}$ in the expansion \eqref{eq:WP_and_corrector} for $t \in [0,t^* - \epsilon^{\xi}]$. 
\end{corollary}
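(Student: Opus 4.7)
The plan is to substitute $t = t^* - \epsilon^\xi$ directly into the estimate \eqref{eq:bound_on_eta_near_the_crossing} of Theorem \ref{th:limit_of_single_band_approximation} and compare the resulting powers of $\epsilon$. With $|t - t^*| = \epsilon^\xi$, the five contributions to the upper bound become, in order,
\begin{equation*}
    \epsilon^{1-\xi}, \qquad \epsilon^{3/2 - 2\xi}, \qquad \epsilon, \qquad \epsilon^{3/2} |\ln \epsilon^\xi|, \qquad \epsilon^{3/2 - \xi}.
\end{equation*}
For $\xi \in (0,1)$ one checks that $1 - \xi \le \min\{3/2 - 2\xi,\, 1,\, 3/2 - \xi\}$ precisely when $\xi \le 1/2$, and the logarithmic term $\epsilon^{3/2}|\ln\epsilon^\xi|$ is $o(\epsilon^{1-\xi})$ for any $\xi < 1/2$ since $3/2 - \xi > 1 - \xi$. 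Consequently, at $t = t^* - \epsilon^\xi$ one obtains
\begin{equation*}
    \| \eta^\epsilon(\cdot,t^* - \epsilon^\xi) \|_{L^2} \;\leq\; C\, \epsilon^{1 - \xi}
\end{equation*}
for some constant $C > 0$ independent of $\epsilon, \xi$ and all sufficiently small $\epsilon > 0$. This establishes the pointwise bound at the right endpoint.

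To upgrade this pointwise estimate to the supremum over the whole interval $[0, t^* - \epsilon^\xi]$, I would observe that each of the $t$-dependent terms in \eqref{eq:bound_on_eta_near_the_crossing} is a monotone decreasing function of $|t - t^*|$, and therefore a monotone increasing function of $t$ on $[0, t^*)$. Hence the maximum of the right-hand side of \eqref{eq:bound_on_eta_near_the_crossing} over $t \in [0, t^* - \epsilon^\xi]$ is attained at $t = t^* - \epsilon^\xi$, yielding \eqref{eq:t_eps_xi}.

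Finally, the statement \eqref{eq:t_eps_xi_2} is immediate: for $0 < \xi < 1/2$ we have $1 - \xi > 1/2$, so $\epsilon^{1-\xi} / \sqrt{\epsilon} = \epsilon^{1/2 - \xi} \to 0$ as $\epsilon \downarrow 0$, giving $C \epsilon^{1-\xi} = o(\sqrt{\epsilon})$. The conclusion that $\eta^\epsilon$ is negligible in $L^2$ compared with $\mathrm{WP}^{1,\epsilon}$ follows because the leading wave-packet in \eqref{eq:WP_and_corrector} has $L^2$-norm of order $1$ (as is manifest from \eqref{eq:WP_notation_1} together with $\|a^0(\cdot,t)\|_{L^2} = \|a^0_0\|_{L^2}$, which is preserved by the unitary flow \eqref{eq:envelope_equation}). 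There is no serious obstacle here: the argument is essentially bookkeeping of exponents once Theorem \ref{th:limit_of_single_band_approximation} is in hand. The only mildly delicate point is the treatment of the logarithmic factor $\epsilon^{3/2} \ln |t - t^*|$, which must be absorbed into $\epsilon^{1-\xi}$ uniformly in $\xi \in (0, 1/2)$; this is handled by the elementary bound $|\ln \epsilon^\xi| \leq \xi |\ln \epsilon| \leq C_\delta \epsilon^{-\delta}$ for any $\delta > 0$, taking $\delta$ small enough that $3/2 - \delta > 1 - \xi$ over the relevant range of $\xi$.
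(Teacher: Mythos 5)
Your proposal is correct and follows the same route the paper intends: the corollary is stated as an immediate consequence of \eqref{eq:bound_on_eta_near_the_crossing}, obtained by substituting $|t-t^*|=\epsilon^\xi$, checking that $\epsilon^{1-\xi}$ dominates the exponents $3/2-2\xi$, $1$, $3/2-\xi$ exactly when $\xi\le 1/2$, absorbing the logarithm via $\epsilon^{3/2}|\ln\epsilon^\xi|\le \xi\,\epsilon^{3/2}|\ln\epsilon|=o(\epsilon^{1-\xi})$, and using monotonicity in $|t-t^*|$ to pass to the supremum over $[0,t^*-\epsilon^\xi]$. Your exponent bookkeeping, the uniformity of the constant in $\xi\in(0,1/2]$, and the final comparison with the $O_{L^2}(1)$ wave-packet are all accurate.
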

In order to describe the solution for $t \sim t^*$ and $t \geq t^*$, it is necessary to make a more general ansatz for the solution which accounts for the excitation of a wave associated with the other band involved in the crossing over the time-scale:
\begin{equation} \label{eq:nonadiabatic_timescale}
	s := \frac{t - t^*}{\sqrt{\epsilon}}\ =\ \mathcal{O}(1).
\end{equation}

The following proposition, which is analogous to Proposition \ref{prop:smooth_continuation}, is required to construct this excited wave:
\begin{proposition} \label{prop:smooth_continuation_again}
Assume the Band Crossing Scenario (Property \ref{classical_path_assumption}). Then for sufficiently small positive $\delta'$ the equations of motion of the classical Hamiltonian $\mathcal{H}_-({q},{p}) := E_-({p}) + W({q})$ with data specified at $t^*$:
\begin{align} \label{eq:-_band_classical_system}
	&\dot{{q}}_-(t) = \de_{{p}} E_-({p}_-(t)) &&\dot{{p}}_-(t) = - \de_{{q}} W({q}_-(t)) 	\\
	&{q}_-(t^*) = q^* &&{p}_-(t^*) = p^*
\end{align}
have a unique smooth solution $({q}_-(t),{p}_-(t)) \subset \field{R} \times U$ over the interval $t \in [t^* - \delta',t^* + \delta']$. Furthermore, for sufficiently small $T' \geq t^* + \delta' > 0$, there exists a solution $(q_{n}(t),p_{n}(t)) \subset \field{R} \times \mathcal{B}$ of the equations of motion of the classical Hamiltonian $\mathcal{H}_{n}(q,p) := E_{n}(p) + W(q)$ over the interval $t \in (t^*,T']$ satisfying the limits:
\begin{align} \label{eq:n_system}
	&\dot{q}_{n}(t) = \de_{p} E_{n}(p_{n}(t)) &&\dot{p}_{n}(t) = - \de_{q} W(q_{n}(t)) 	\\
\nonumber	&\lim_{t \downarrow t^*} q_{n}(t) = q^* &&\lim_{t \downarrow t^*} p_{n}(t) = p^*
\end{align}
such that $G(E_{n}(p_{n}(t))) > 0$ for all $t \in (t^*,T']$. This solution satisfies: 
\begin{equation} \label{eq:smooth_continuation_again}
	\text{for all } t \in (t^*,t^* + \delta'], \text{ } q_-(t) = {q}_{n}(t) \text{, } p_-(t) = {p}_{n}(t).
\end{equation}
It follows from \eqref{eq:smooth_continuation_again} that the map:
\begin{equation} \label{eq:extended_map_-}
	(\mathfrak{q}_-(t),\mathfrak{p}_-(t)) := \begin{cases} ({q}_-(t),{p}_-(t)) &\text{for } t \in [t^* - \delta',t^*+\delta'] \\ (q_{n}(t),p_{n}(t)) &\text{for } t \in [t^*+\delta',T]. \end{cases}
\end{equation}
is smooth over the interval $t \in [t^* - \delta',T']$. 
\end{proposition}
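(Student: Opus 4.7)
The plan is to mirror the proof of Proposition \ref{prop:smooth_continuation} nearly verbatim, with the roles of $E_+$ and $E_-$ interchanged. I first invoke standard ODE existence and uniqueness theory for the Hamiltonian system generated by $\mathcal{H}_-(q,p) = E_-(p) + W(q)$. By hypothesis (A3) of Property \ref{band_crossing_assumption}, the map $p \mapsto E_-(p)$ is smooth on the open set $U$ containing $p^*$, and $W$ is smooth by assumption. Hence $\mathcal{H}_-$ is smooth on $\mathbb{R} \times U$, and for data $(q^*,p^*)$ specified at $t = t^*$ there exists a unique smooth solution $(q_-(t),p_-(t))$ on some interval $[t^* - \delta', t^* + \delta']$ with $p_-(t) \in U$ throughout (shrinking $\delta' > 0$ if necessary).

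Next I identify the solution for $t > t^*$ with a solution of the $\mathcal{H}_n$ system. From the equation of motion and the assumption \eqref{eq:driven_through_pstar}, $\dot{p}_-(t^*) = -\partial_q W(q^*) > 0$, so after further shrinking $\delta'$ by continuity, $p_-(t) > p^*$ for $t \in (t^*, t^* + \delta']$. By the definition \eqref{eq:def_smooth_bands}, on the set $\{p \in U : p > p^*\}$ the smooth continuation $E_-$ coincides with the genuine band function $E_n$. Therefore, setting $(q_n(t),p_n(t)) := (q_-(t),p_-(t))$ for $t \in (t^*, t^* + \delta']$ produces a solution of the $\mathcal{H}_n$ system on this interval which satisfies the limits in \eqref{eq:n_system} and the identity \eqref{eq:smooth_continuation_again}. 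Moreover, hypothesis (A1) asserts $p^*$ is the unique crossing point of $E_n$ and $E_{n+1}$ in $U$, and (A2) gives uniform isolation of $E_n$ from all other bands; hence $G(E_n(p_n(t))) > 0$ on $(t^*, t^* + \delta']$.

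Finally, since $p \mapsto E_n(p)$ is smooth away from $p^*$ and the gap $G(E_n(p))$ remains strictly positive on a punctured neighborhood of $p^*$ in $U$, the trajectory $(q_n(t), p_n(t))$ may be continued forward in $t$ by standard ODE theory. I choose $T' > t^* + \delta'$ so that the maximal smooth extension contains $(t^*, T']$ and $G(E_n(p_n(t))) > 0$ is maintained throughout; such $T'$ exists by continuity of the flow. Smoothness of the piecewise map \eqref{eq:extended_map_-} on $[t^* - \delta', T']$ then follows from the agreement \eqref{eq:smooth_continuation_again} on the overlap $(t^*, t^* + \delta']$ together with smoothness of each piece on its own domain. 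There is no genuine obstacle here; the argument is a direct application of ODE theory, and the only substantive point—as in Proposition \ref{prop:smooth_continuation}—is recognizing that the smooth continuation $E_-$ enables a classical trajectory to be defined unambiguously through the crossing time $t^*$, after which it is reinterpreted as a trajectory of $\mathcal{H}_n$ with opposite group velocity $\partial_p E_-(p^*) < 0$.
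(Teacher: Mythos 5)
Your proposal is correct and follows essentially the same route as the paper, which proves the analogous Proposition \ref{prop:smooth_continuation} by citing smoothness of $W$ and of $E_\pm$ on $U$ (and of $E_n$, $E_{n+1}$ away from $p^*$) together with standard existence and uniqueness for ODEs with smooth coefficients, and leaves Proposition \ref{prop:smooth_continuation_again} as the analogous statement with $E_+$ replaced by $E_-$. Your additional observations — that $\dot{p}_-(t^*) = -\partial_q W(q^*) > 0$ forces $p_-(t) > p^*$ for $t > t^*$ so that $E_-$ coincides with $E_n$ there by \eqref{eq:def_smooth_bands}, and that (A1)–(A2) yield $G(E_n(p_n(t))) > 0$ — are exactly the points the paper treats as immediate.
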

We again define, as in \eqref{eq:X_+}: 
\begin{equation} \label{eq:X_-}
	t \mapsto \mathfrak{X}_-(z;\mathfrak{p}_-(t)) := \begin{cases} \chi_-(z;p_-(t)) &\text{for } t \in [t^* - \delta',t^*+\delta'] \\ \chi_{n}(z;p_{n}(t)) &\text{for } t \in [t^*+\delta',T'] \end{cases}.
\end{equation}
The precise form of the wave `excited' at the crossing time is derived from a rigorous multiscale analysis on the emergent nonadiabatic time-scale \eqref{eq:nonadiabatic_timescale} (see Section \ref{sec:inner_solution}). The following definition is the result of this calculation: 
\begin{definition}[Parameters of the excited wave-packet] \label{def:-_band_functions}
We let $S_-(t)$ and $a^0_-(y,t)$ be defined for $t \in [t^* - \delta',t^* + \delta']$ by \eqref{eq:action_integral} and \eqref{eq:envelope_equation} with $t_0 = t^*$, in which all dependence on $p(t)$, $q(t)$, $E_n(p(t))$, $\chi_n(z;p(t))$, and $W(q(t))$ replaced by dependence on $p_-(t), q_-(t)$, $E_-(p_-(t))$, $\chi_-(z;p_-(t))$, and $W(q_-(t))$ respectively. 

Moreover, $S_-(t^*) = S^*$ and the initial data for $a^0_-(y,t)$, generated by the incoming `$+$ band' wave-packet is given by:
\begin{equation} \label{eq:L_Z_connection}
\begin{split}
&	a^0_-(y,t^*) =\  \de_q W(q^*)\times \ip{\chi_-(\cdot;p^*)}{\de_p\chi_+(\cdot;p^*)} \\
	&\qquad  \times \int_{-\infty}^{\infty} e^{ i [\de_q W(q^*)][\de_p E_+(p^*) - \de_p E_-(p^*)] \tau^2 / 2 } \times {a}^{0,*}(y - [\de_p E_+(p^*) - \de_p E_-(p^*)] \tau) \, \mathrm{\emph{d}}\tau\ .
\end{split}
\end{equation}
Here, $S^* = \lim_{t \uparrow t^*} S(t)$ and $a^{0,*}(y) = \lim_{t \uparrow t^*} a^0(y,t)$ \eqref{eq:defs_of_stars}.

 Recall that $\de_q W(q^*)$ is assumed to be non-zero (see \eqref{eq:driven_through_pstar}) and that $\de_p E_+(p^*) - \de_p E_-(p^*)=2\de_p E_+(p^*)$ is always nonzero at band crossings (Theorem \ref{th:all_crossings_smooth}, Property \ref{band_crossing_assumption} (A4)) and hence the integral in \eqref{eq:L_Z_connection} is well-defined since $ a^{0,*}(y)$ is localized. We then define $S_n(t)$, $a^0_n(y,t)$ for $t \in (t^*,T']$ by replacing dependence on $p(t)$, $q(t)$, $E_n(p(t))$, $\chi_n(z;p(t))$, and $W(q(t))$ by dependence on $p_n(t), q_n(t)$, $E_n(p_n(t))$, $\chi_n(z;p_n(t))$, and $W(q_n(t))$ respectively, and by the limits:
\begin{equation}
	\lim_{t \downarrow t^*} S_n(t) = S^*, \text{ } \lim_{t \downarrow t^*} a^0_n(y,t) = a^{0,*}(y). 
\end{equation}
We denote by $\mathfrak{S}_-(t)$, $\mathfrak{a}^0_-(y,t)$ the smooth maps defined over the whole interval $t \in [t^* - \delta',T']$ in analogy with the definitions of  $\mathfrak{S}_+(t)$ and $ \mathfrak{a}_+(y,t)$ in \eqref{eq:a_+_extension}. 
\end{definition}
\begin{remark} 
Equation \eqref{eq:L_Z_connection} gives the precise form of the envelope of the wavepacket `excited' at the crossing time $t^*$. In Appendix \ref{app:consistency} we prove using this expression that the main statement \eqref{eq:statement_of_main_theorem} of Theorem \ref{th:band_crossing_theorem} (below) is consistent with the solution of an appropriate simplified `Landau-Zener' model. 
\end{remark}
We now define the wave-packet associated with the band $E_-$ which is `excited' at the crossing time $t^*$ by: 
\begin{equation} \label{eq:WP_1_-}
\begin{split}
	&\text{WP}^{0,\epsilon}[\mathfrak{S}_-(t),\mathfrak{q}_-(t),\mathfrak{p}_-(t),\mathfrak{a}^0_-(y,t),\mathfrak{X}_-(z;\mathfrak{p}_-(t))](x,t)	:= 	\\
	&:= \begin{cases} \text{{WP}}^{0,\epsilon}[S_-(t),q_-(t),p_-(t),a^0_-(y,t),\chi_-(z;p(t))](x,t) &\text{for } t \in [t^* - \delta',t^*+\delta'] \\ \text{{WP}}^{0,\epsilon}[S_{n}(t),q_{n}(t),p_{n}(t),a^0_{n}(y,t),\chi_{n}(z;p(t))](x,t) &\text{for } t \in [t^* + \delta',T] \end{cases}.
\end{split}
\end{equation}

\subsection{The main theorem}

Our main theorem is that a size $1$ incoming wave-packet associated with the `+ band', when encountering a band-crossing, generates a size $1$ `transmitted + band' wave-packet and a `reflected $-$ band' wave-packet of size $\sqrt\epsilon$. Moreover, when the wave-packet is in a neighborhood of the crossing, {\it i.e.} $t\approx t_*$ and hence $(p(t),q(t))\approx (p_*,q_*)$, 
the detailed dynamics are non-adiabatic and are described by an ansatz incorporating wave-packets from both bands with envelopes varying on an additional fast scale. The precise statement is the following: 
\begin{theorem} \label{th:band_crossing_theorem}
Assume the Band Crossing Scenario (Property \ref{classical_path_assumption}) in which the crossing time,
along the trajectory $(p(t),q(t))$ is $t=t_*$. Let $\xi, \xi'$ be fixed such that $3/8 < \xi' < \xi < 1/2$. Let $\tilde{T} > 0$, with $0<t_*<\tilde{T}$, be sufficiently small that Propositions \ref{prop:smooth_continuation} and \ref{prop:smooth_continuation_again} hold with $T = \tilde{T}$ and $T' = \tilde{T}$ respectively.
  
  Let $\psi^\epsilon(x,t)$ denote the unique solution of (\ref{eq:original_equation}) with `incident Bloch wavepacket' initial data (\ref{eq:bloch_wavepacket_initial_data}), defined for  $t \in [0,\tilde{T}]$.
  
   Then, there exists an $\epsilon_0 > 0$ such that for all $0 < \epsilon < \epsilon_0$ the following holds. 
\begin{enumerate}
\item For $t \in [0,t^* - \epsilon^\xi)$, $\psi^\epsilon(x,t)$ may be approximated up to errors of $o_{L^2_x}(\sqrt\epsilon)$ by a single-band ansatz (see Theorem \ref{th:limit_of_single_band_approximation} and Corollary \ref{cor:limit_of_single_band_approximation}):
\begin{equation} \label{eq:incoming_single_band_ansatz}
	\psi^\epsilon(x,t) = \text{\emph{WP}}^{1,\epsilon}[\mathfrak{S}_{+}(t),\mathfrak{q}_{+}(t),\mathfrak{p}_{+}(t),\mathfrak{a}^0_{+}(y,t),\mathfrak{a}^1_{+}(y,t),\mathfrak{X}_+(z;\mathfrak{p}_+(t))](x,t) + o_{L^2_x}(\sqrt\epsilon).
\end{equation}
\item For $t \in (t^* + \epsilon^{\xi},\tilde{T}]$, $\psi^\epsilon(x,t)$ is approximated up to errors of $o_{L^2_x}(\sqrt\epsilon)$ by the sum of two Bloch wave-packets, one associated with each band involved in the crossing (recall Definition \ref{def:-_band_functions}):
\begin{equation} \label{eq:statement_of_main_theorem}
\begin{split}
	&\psi^\epsilon(x,t) = \text{\emph{WP}}^{1,\epsilon}[\mathfrak{S}_{+}(t),\mathfrak{q}_{+}(t),\mathfrak{p}_{+}(t),\mathfrak{a}^0_{+}(y,t),\mathfrak{a}^1_{+}(y,t),\mathfrak{X}_+(z;\mathfrak{p}_+(t))](x,t) 	\\
	&+ \sqrt\epsilon \; \text{\emph{WP}}^{0,\epsilon}[\mathfrak{S}_-(t),\mathfrak{q}_-(t),\mathfrak{p}_-(t),\mathfrak{a}^0_-(y,t),\mathfrak{X}_-(z;\mathfrak{p}_-(t))](x,t) + o_{L^2_x}(\sqrt{\epsilon}). 
\end{split}
\end{equation}
\end{enumerate}

Over the interval $t \in (t^* - \epsilon^{\xi'},t^* + \epsilon^{\xi'})$, the solution $\psi^\epsilon(x,t)$ is expressible, with errors of size $o_{L^2_x}(\sqrt\epsilon)$, by a superposition of wave-packets from both $+$ and $-$ bands, whose amplitudes vary on an additional (fast / non-adiabatic) time scale:
\begin{equation}
	s := \frac{t - t^*}{\sqrt{\epsilon}}.
\end{equation}
The detailed construction appears in  Section \ref{sec:inner_solution}.
\end{theorem}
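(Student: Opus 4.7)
The plan is to proceed by matched asymptotic expansions, in the spirit of Hagedorn \cite{hagedorn}, subdividing $[0,\tilde T]$ into a pre-crossing outer region $[0,t^*-\epsilon^{\xi'}]$, an inner region $(t^*-\epsilon^{\xi'},t^*+\epsilon^{\xi'})$ on which the non-adiabatic scale $s=(t-t^*)/\sqrt\epsilon$ is $\mathcal{O}(1)$, and a post-crossing outer region $[t^*+\epsilon^{\xi'},\tilde T]$. Part (1) of the theorem on $[0,t^*-\epsilon^\xi]$ is already given by Theorem \ref{th:limit_of_single_band_approximation} together with Corollary \ref{cor:limit_of_single_band_approximation}, which also shows that the single-band ansatz $\text{WP}^{1,\epsilon}_+$ remains a valid approximation up to $t=t^*-\epsilon^{\xi'}$ since $\xi'<1/2$. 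The remaining tasks are therefore: (i) construct a consistent inner expansion on the non-adiabatic scale; (ii) match it to the incoming single-band ansatz at $t=t^*-\epsilon^{\xi'}$; and (iii) propagate the resulting two-band data out to $\tilde T$ by the isolated-band theory.

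For the inner construction I substitute into \eqref{eq:original_equation} an ansatz of the form
\begin{equation*}
\psi^\epsilon_{\text{in}}(x,t) = e^{iS^*/\epsilon}e^{ip^*(x-q^*)/\epsilon}\epsilon^{-1/4}\Bigl[A^+(y,s)\chi_+\bigl(\tfrac{x}{\epsilon};p^*\bigr) + \sqrt\epsilon\, A^-(y,s)\chi_-\bigl(\tfrac{x}{\epsilon};p^*\bigr)\Bigr] + \sqrt\epsilon\, R^\epsilon(x,t),
\end{equation*}
with $y=(x-q^*)/\sqrt\epsilon$ and $s=(t-t^*)/\sqrt\epsilon$, and with $E_\pm(p)$ and $W(q)$ linearised about $p^*,q^*$. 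Expanding $H^\epsilon$ order by order in $\sqrt\epsilon$, projecting onto the two-dimensional span of $\chi_\pm(\cdot;p^*)$ in $L^2_{per}$, and using Corollary \ref{cor:away_from_crossing_regular} to bound the projection onto $\{\chi_m\}_{m\neq n,n+1}$, one obtains a coupled inhomogeneous system for $A^\pm(y,s)$. At leading order this system reduces, after a short calculation, to a first-order transport/coupling problem in $(y,s)$ with coupling coefficient $\de_q W(q^*)\ip{\chi_-(\cdot;p^*)}{\de_p\chi_+(\cdot;p^*)}$ and propagation speeds $\de_p E_\pm(p^*)$. The matching condition at $s\to-\infty$ is that $A^+$ agree with the restriction of $\mathfrak{a}^0_+(y,t)$ to its boundary value $a^{0,*}(y)$ and that $A^-$ vanish; integrating the inhomogeneous equation for $A^-$ with zero data at $s=-\infty$ against the explicit quadratic phase produced by the linearised band functions yields, as $s\to+\infty$, precisely the integral formula \eqref{eq:L_Z_connection}. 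Control of the remainder $R^\epsilon$ over a window of width $2\epsilon^{\xi'}$ is obtained by standard energy estimates on \eqref{eq:original_equation}; the accumulated $L^2$ error is of order $\epsilon^{1-2\xi'}$, which is $o(\sqrt\epsilon)$ precisely when $\xi'>3/8$, explaining the lower bound on $\xi'$ in the hypotheses.

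For the post-crossing outer region I invoke the data provided by the inner solution at $t=t^*+\epsilon^{\xi'}$: a size-$1$ packet on the band $E_+$ (carried by $\chi_+$) together with a size-$\sqrt\epsilon$ packet on the band $E_-$ (carried by $\chi_-$), centred on the respective classical trajectories $(\mathfrak q_\pm,\mathfrak p_\pm)$ which, by Propositions \ref{prop:smooth_continuation} and \ref{prop:smooth_continuation_again}, leave the crossing with opposite group velocities and hence separate in phase space. For $t\ge t^*+\epsilon^{\xi'}$ both trajectories lie in regions where $E_n$ and $E_{n+1}$ are isolated in the sense of Property \ref{isolated_band_assumption}, so Theorem \ref{th:isolated_band_theorem_2} applies to the transmitted packet (giving the first-order object $\text{WP}^{1,\epsilon}$ with error $O(\epsilon)$) and Theorem \ref{th:isolated_band_theorem_1} applies to the reflected packet (giving $\sqrt\epsilon\,\text{WP}^{0,\epsilon}$ with error $O(\epsilon^{3/2})$). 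Linearity of \eqref{eq:original_equation} combines these into the approximation \eqref{eq:statement_of_main_theorem}, with total error $o_{L^2}(\sqrt\epsilon)$.

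The main obstacle is the inner analysis: one must verify well-posedness of the inner envelope system on $s\in\field{R}$, justify the formal matching at both ends of the window of width $\epsilon^{\xi'}$, and balance the several error sources so that the net error is $o(\sqrt\epsilon)$. The role of the two exponents becomes transparent in this balancing: $\xi<1/2$ comes from the breakdown of the single-band description (Corollary \ref{cor:limit_of_single_band_approximation}), $\xi'>3/8$ comes from the inner energy estimate together with the residual generated by dropping cubic terms in the band and potential expansions, and the strict inequality $\xi'<\xi$ is needed so that the pre-crossing outer ansatz and the inner ansatz share a non-empty overlap on which matching can be carried out rigorously.
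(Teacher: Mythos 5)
Your overall architecture is the same as the paper's: reduce part (1) to Theorem \ref{th:limit_of_single_band_approximation} and Corollary \ref{cor:limit_of_single_band_approximation}, build an inner solution on the non-adiabatic scale $s=(t-t^*)/\sqrt\epsilon$ whose $-$-band envelope satisfies a forced ODE in $s$ with the quadratic phase and coupling coefficient $\de_q W(q^*)\ip{\chi_-(\cdot;p^*)}{\de_p\chi_+(\cdot;p^*)}$ (yielding \eqref{eq:L_Z_connection} as $s\to+\infty$), match on overlap windows, and propagate outward by the isolated-band theory plus $L^2$-stability (Lemma \ref{lem:key_lemma}). One structural difference: you freeze the carrier at $(S^*,q^*,p^*)$ and linearize $E_\pm$, $W$ (Hagedorn's original formulation), whereas the paper keeps the full classical data $(S_\sigma(t),q_\sigma(t),p_\sigma(t))$ for \emph{both} bands in the inner ansatz \eqref{eq:psi_inner_ansatz} and only Taylor-expands when estimating the residual. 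The paper's choice makes the matching conditions (P1), (P3) of Proposition \ref{prop:idea_of_proof} reduce to showing the envelopes agree up to $o(1)$; with a frozen carrier, the inner envelopes must absorb center-of-mass translations of size $\epsilon^{\xi'-1/2}\to\infty$ in the $y$-variable and a linear-in-$y$ phase $\sim \de_qW(q^*)sy$, so the matching at $t=t^*\pm\epsilon^{\xi'}$ is not automatic and must be verified explicitly; your plan asserts it but does not indicate how.

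Two concrete gaps. First, your error count is internally inconsistent: $\epsilon^{1-2\xi'}=o(\sqrt\epsilon)$ requires $\xi'<1/4$, not $\xi'>3/8$, so as stated your bound would \emph{fail} on the admissible range $\xi'\in(3/8,1/2)$. The binding contribution actually comes from the cubic ($s^3$) terms dropped in the Taylor expansion of the phases and trajectories over the inner window: after Duhamel these contribute $O(\epsilon^{4\xi'-1})$, which is $o(\sqrt\epsilon)$ exactly when $\xi'>3/8$ (the $s$ and $s^2$ terms give the weaker constraints $\xi'>1/4$ and $\xi'>1/3$). Second, your post-crossing step applies Theorems \ref{th:isolated_band_theorem_1} and \ref{th:isolated_band_theorem_2} forward from $t_0=t^*+\epsilon^{\xi'}$, but the constants in those theorems blow up as the minimal gap $M(t_0,t_1)\downarrow 0$, and at that starting time the gap is only $O(\epsilon^{\xi'})$; the estimates do not apply uniformly in $\epsilon$. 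The paper instead fixes $\tilde T_0\in(t^*,\tilde T)$ independent of $\epsilon$, launches an exact solution with two-band data at $\tilde T_0$, runs the singular-term analysis of Theorem \ref{th:limit_of_single_band_approximation} \emph{backward} in time to show it agrees with the two-band ansatz down to $t^*+\epsilon^\xi$, and then uses Lemma \ref{lem:key_lemma} on the overlap to transfer the approximation to $\psi^\epsilon$ on all of $(t^*+\epsilon^\xi,\tilde T]$. You need this (or an equivalent quantitative argument) to bridge the region where the gap is small.
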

\begin{remark}
The restriction to sufficiently small $\tilde{T} > 0$ in Theorem \ref{th:band_crossing_theorem} is to ensure that neither the incident nor excited wavepacket encounter a second band crossing over the time interval $t \in [0,\tilde{T}]$. It is clear that this assumption may be relaxed and the analysis repeated each time a wavepacket is incident on a band crossing in order to obtain results valid over arbitrary finite time intervals, fixed independent of $\epsilon$.
\end{remark}
\begin{remark}
By construction: 
\begin{equation}
	\text{\emph{WP}}^{1,\epsilon}[\mathfrak{S}_{+}(t),\mathfrak{q}_{+}(t),\mathfrak{p}_{+}(t),\mathfrak{a}^0_{+}(y,t),\mathfrak{a}^1_{+}(y,t),\mathfrak{X}_+(z;\mathfrak{p}_+(t))](x,t)
\end{equation}
is a wavepacket with $L^2_x$-norm proportional to $1$ associated with the band $E_n$ for $t \in [0,t^* - \delta]$ and with the band $E_{n+1}$ for $t \in [t^* + \delta,\tilde{T}]$, and:
\begin{equation}
	\sqrt{\epsilon} \; \text{\emph{WP}}^{0,\epsilon}[\mathfrak{S}_-(t),\mathfrak{q}_-(t),\mathfrak{p}_-(t),\mathfrak{a}^0_-(y,t),\mathfrak{X}_-(z;\mathfrak{p}_-(t))](x,t)
\end{equation}
is a wavepacket with $L^2_x$-norm proportional to $\sqrt{\epsilon}$ associated with the band $E_n$ for $t \in [t^* + \delta',\tilde{T}]$. Hence the statement of Theorem \ref{th:band_crossing_theorem} is consistent with the description of our results given in Section \ref{sec:introduction}. 
\end{remark}
\begin{remark}
To leading order in $\sqrt{\epsilon}$, the center of mass of the wavepacket `excited' at the crossing is given by $\mathfrak{q}_-(t)$, which, for $|t - t^*|$ small enough, evolves according to \eqref{eq:-_band_classical_system}. The center of mass of the \emph{incoming} wavepacket is given by (again to leading order in $\sqrt{\epsilon}$) $\mathfrak{q}_+(t)$, which evolves (again for $|t - t^*|$ small enough) according to \eqref{eq:+_band_classical_system}. Since $\dot{\mathfrak{q}}_+(t^*) = \de_p E_+(p^*) > 0$ and $\dot{\mathfrak{q}}_-(t^*) = \de_p E_-(p^*) < 0$ we have that the velocities of the centers of mass of each wavepacket have \emph{opposite signs}: see Figure \ref{fig:centers_of_mass}.
\end{remark}
\begin{remark} \label{rem:leading_order} 
The $\mathcal{O}(\sqrt\epsilon)$ reflected wave is required only to describe the solution with an order 
of size $o(\sqrt\epsilon)$. Indeed, 
by dropping terms of $o(1)$ in $L^2_x$ in \eqref{eq:incoming_single_band_ansatz}, \eqref{eq:statement_of_main_theorem}, and in the asymptotic solution which we construct for $t \in (t^* - \epsilon^{\xi'},t^* + \epsilon^{\xi'})$, we have that, under the assumptions of Theorem \ref{th:band_crossing_theorem}, for all $t \in [0,\tilde{T}]$:
\begin{equation} \label{eq:statement}
	\psi^\epsilon(x,t) = \text{\emph{WP}}^{0,\epsilon}[\mathfrak{S}_+(t),\mathfrak{q}_+(t),\mathfrak{p}_+(t),\mathfrak{a}^0_+(y,t),\mathfrak{X}_+(z;\mathfrak{p}_+(t))](x,t) + o_{L^2_x}(1).
\end{equation}
Equation \eqref{eq:statement} may be summarized as follows: to leading order in $\sqrt{\epsilon}$, the wavepacket propagates as if associated with an isolated band, with band function given by the smooth extension of $E_n$ through the crossing. This is consistent with previous results obtained by Hagedorn \cite{hagedorn} and Jecko et al. \cite{2003Jecko,2009DuyckaertsFermanian-KammererJecko} on dynamics at codimension 1 eigenvalue band crossings.
\end{remark}

\section{Sketch of proof of Theorem \ref{th:limit_of_single_band_approximation} on blow-up of error in single-band approximation as $t$ approaches the crossing time $t^*$} \label{sec:proof_of_single_band_failure_theorem}
\subsection{Strategy for estimating the corrector}
In this section we recall the simple Lemma  which we use in the proofs of Theorem \ref{th:limit_of_single_band_approximation} and Theorem \ref{th:band_crossing_theorem} to estimate the corrector
to a wave-packet approximate solution. A similar strategy was followed in \cite{carles_sparber,2017WatsonWeinsteinLu}

\begin{lemma} \label{lem:key_lemma} 
For $0<T\le\infty$, let $\psi^\epsilon\in C^0([t_0,T); L^2(\mathbb{R}))$ denote the unique solution of the initial value problem \eqref{eq:original_equation} with initial data $\psi^\epsilon_0(x)$ given at $t = t_0$:
\begin{equation} 
\begin{split}
	&i \epsilon \de_t \psi^\epsilon = {H}^\epsilon \psi^\epsilon 	\\
	&\psi^\epsilon(x,t_0) = \psi^\epsilon_0(x) \ .
\end{split}
\end{equation}

Furthermore, let $\psi_{app}^\epsilon(x,t) \in C^0([t_0,T); L^2(\mathbb{R}))$, $r^\epsilon(x,t)$ be such that:
\begin{equation} \label{eq:definition_of_residual}
\begin{split}
	&i \epsilon \de_t \psi_{app}^\epsilon = {H}^\epsilon \psi_{app}^\epsilon + r^\epsilon 	\\
	&\psi_{app}^\epsilon(x,t_0) = \psi^\epsilon_{app,0}(x).
\end{split}
\end{equation}

Introduce $\eta^\epsilon(x,t)$ defined by: 
\begin{equation} \label{eq:def_of_eta}
	\eta^\epsilon(x,t) := \psi^\epsilon(x,t) - \psi^\epsilon_{app}(x,t).
\end{equation}
Then, 
\begin{equation} \label{eq:bound_on_eta}
	\| \eta^\epsilon(\cdot,t) \|_{L^2} \leq \| \psi^\epsilon_0(\cdot) - \psi^\epsilon_{app,0}(\cdot) \|_{L^2} + \frac{1}{\epsilon} \inty{t_0}{t}{ \| r^\epsilon(\cdot,t') \|_{L^2} }{t'}.
\end{equation}
\end{lemma}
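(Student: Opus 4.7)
The plan is a standard energy estimate for the Schrödinger equation, exploiting the self-adjointness of $H^\epsilon$ on $L^2(\mathbb{R})$. First I would subtract the two evolution equations for $\psi^\epsilon$ and $\psi^\epsilon_{app}$ to obtain the equation governing the error $\eta^\epsilon$:
\begin{equation*}
    i\epsilon \partial_t \eta^\epsilon = H^\epsilon \eta^\epsilon - r^\epsilon, \qquad \eta^\epsilon(x,t_0) = \psi^\epsilon_0(x) - \psi^\epsilon_{app,0}(x).
\end{equation*}

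Next, I would compute the time derivative of $\|\eta^\epsilon(\cdot,t)\|^2_{L^2}$. Using the error equation above,
\begin{equation*}
    \frac{d}{dt}\|\eta^\epsilon\|_{L^2}^2 = 2\,\mathrm{Re}\langle \eta^\epsilon, \partial_t \eta^\epsilon\rangle_{L^2} = \frac{2}{\epsilon}\,\mathrm{Im}\langle \eta^\epsilon, H^\epsilon \eta^\epsilon\rangle_{L^2} - \frac{2}{\epsilon}\,\mathrm{Im}\langle \eta^\epsilon, r^\epsilon\rangle_{L^2}.
\end{equation*}
Since $H^\epsilon = -\frac{\epsilon^2}{2}\partial_x^2 + V(x/\epsilon) + W(x)$ is self-adjoint on its natural domain in $L^2(\mathbb{R})$, the quantity $\langle \eta^\epsilon, H^\epsilon\eta^\epsilon\rangle_{L^2}$ is real, so the first term on the right-hand side vanishes. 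Applying Cauchy-Schwarz to the second term then yields
\begin{equation*}
    \frac{d}{dt}\|\eta^\epsilon\|_{L^2}^2 \leq \frac{2}{\epsilon}\|\eta^\epsilon\|_{L^2}\|r^\epsilon\|_{L^2}.
\end{equation*}

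Finally, I would convert this differential inequality into the claimed integral bound. Formally, at any time where $\|\eta^\epsilon\|_{L^2} > 0$ the chain rule gives $\frac{d}{dt}\|\eta^\epsilon\|_{L^2} \leq \frac{1}{\epsilon}\|r^\epsilon\|_{L^2}$, and integrating from $t_0$ to $t$ yields the stated inequality. To handle possible zeros of $t \mapsto \|\eta^\epsilon(\cdot,t)\|_{L^2}$ cleanly, I would either regularize by working with $(\|\eta^\epsilon\|_{L^2}^2 + \delta)^{1/2}$ and pass to $\delta \downarrow 0$, or, more invariantly, use Duhamel's formula
\begin{equation*}
    \eta^\epsilon(\cdot,t) = U^\epsilon(t,t_0)\,\eta^\epsilon(\cdot,t_0) + \frac{i}{\epsilon}\int_{t_0}^{t} U^\epsilon(t,t')\, r^\epsilon(\cdot,t')\, dt',
\end{equation*}
where $U^\epsilon(t,t')$ denotes the unitary propagator generated by $H^\epsilon$. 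Taking $L^2$-norms, invoking unitarity of $U^\epsilon$, and applying Minkowski's integral inequality then produces \eqref{eq:bound_on_eta} directly. There is no substantive obstacle here: the statement is a purely abstract stability estimate, and the only ingredient beyond elementary calculus is the self-adjointness of $H^\epsilon$.
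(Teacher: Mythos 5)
Your proof is correct and follows essentially the same route as the paper: derive the equation for $\eta^\epsilon$, use self-adjointness of $H^\epsilon$ to kill the Hamiltonian term in $\partial_t\|\eta^\epsilon\|_{L^2}^2$, apply Cauchy--Schwarz, and integrate the resulting differential inequality. Your extra care about zeros of $t\mapsto\|\eta^\epsilon(\cdot,t)\|_{L^2}$ (via regularization or Duhamel with the unitary propagator) is a welcome refinement of the paper's brief remark that the inequality is trivially true when $\|\eta^\epsilon\|_{L^2}=0$, but it does not change the substance of the argument.
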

\begin{remark}
We shall apply Lemma \ref{lem:key_lemma}  with  $\psi^\epsilon_{app}(x,t)$ equal to an \emph{approximate solution} of \eqref{eq:original_equation} and 
$r^\epsilon(x,t)=(i\partial_t-H^\epsilon)\psi^\epsilon_{app}$, a sufficiently high order in $\sqrt\epsilon$ \emph{residual}.
\end{remark}
\begin{proof}
The function $\eta^\epsilon(x,t)$ satisfies the initial value problem:
\begin{equation} \label{eq:equation_for_eta} 
\begin{split}
	&i \epsilon \de_t \eta^\epsilon = {H}^\epsilon \eta^\epsilon + r^\epsilon 	\\
	&\psi_{app}^\epsilon(x,t_0) = \psi^\epsilon_0(x) - \psi^\epsilon_{app,0}(x)
\end{split}	
\end{equation}
 Multiplying both sides of \eqref{eq:equation_for_eta} by $\overline{\eta^\epsilon}$, taking the imaginary part and using self-adjointness of $H^\epsilon$ yields:
$	\epsilon \partial_{t} \| \eta^\epsilon \|^2_{L^2} = - i \ip{\eta^\epsilon}{r^\epsilon}_{L^2} + i \ip{r^\epsilon}{\eta^\epsilon}_{L^2}. 
$
This implies, using the Cauchy-Schwarz inequality, that
$
	2 \epsilon \| \eta^\epsilon \|_{L^2}\ \partial_t \| \eta^\epsilon \|_{L^2} \leq 2 \| r^\epsilon \|_{L^2} \| \eta^\epsilon \|_{L^2}.
$
Cancelling common factors from both sides (note that the inequality is trivially true if $\| \eta^\epsilon \|_{L^2} = 0$) and integrating from $t_0$ to $t$ gives \eqref{eq:bound_on_eta}.
\end{proof}

We now estimate the error in the single-band approximation as $t \uparrow t^*$, as measured by the $L^2_x$-norm of the corrector function $\eta^\epsilon(x,t)$ which appears in \eqref{eq:WP_and_corrector}. We start by recalling the strategy of the proof of Theorem \ref{th:isolated_band_theorem_2}; the proof of Theorem \ref{th:isolated_band_theorem_1} is similar. Let $\psi^\epsilon(x,t)$ denote the exact solution of \eqref{eq:original_equation} with approximate `Bloch wavepacket' initial data \eqref{eq:approximate_bloch_wp_data} specified at $t = t_0$. Then by Lemma \ref{lem:key_lemma}, if we can find an approximate solution of \eqref{eq:original_equation}, $\psi^\epsilon_{app}(x,t)$, such that \eqref{eq:definition_of_residual} holds with:
\begin{align}
	&\| \psi^\epsilon_0(\cdot) - \psi^\epsilon_{app}(\cdot,t)\vert_{t=0} \|_{L^2} \leq C \epsilon	\\
	&\text{ and } \| r^\epsilon(\cdot,t) \|_{L^2} \leq C e^{c t} \epsilon^2, \label{eq:residual_bound}
\end{align}
where the constants $C > 0, c > 0$ are independent of $\epsilon, t$, then it follows from \eqref{eq:def_of_eta} and \eqref{eq:bound_on_eta} that:
\begin{equation}
	\| \psi^\epsilon(\cdot,t) - \psi^\epsilon_{app}(\cdot,t) \|_{L^2} \leq C \epsilon e^{c t}.
\end{equation}
If  in addition we have that:
\begin{equation}
	\|  \psi^\epsilon_{app}(\cdot,t)\ -\ \text{WP}^{1,\epsilon}[S(t),q(t),p(t),a^0(y,t),a^1(y,t),\chi_n(z;p(t))](\cdot,t)  \|_{L^2} \leq C e^{c t} \epsilon,
\end{equation}
where $q(t), p(t)$ and so on are as in the statement of Theorem \ref{th:isolated_band_theorem_2}, then the conclusions of Theorem \ref{th:isolated_band_theorem_2} follow immediately by the triangle inequality. The details of how to construct such a $\psi^\epsilon_{app}(x,t)$ were presented in \cite{2017WatsonWeinsteinLu}. 

Theorem \ref{th:isolated_band_theorem_2}  implies, in particular, that the solution $\psi^\epsilon(x,t)$ of \eqref{eq:original_equation} with initial data \eqref{eq:bloch_wavepacket_initial_data} satisfies:
\begin{equation}
\begin{split}
	&\text{for $t \in [0,t^* - \delta]$, } \\
	&\|  \psi^\epsilon(\cdot,t)\ -\ \text{WP}^{1,\epsilon}[S(t),q(t),p(t),a^0(y,t),a^1(y,t),\chi_n(z;p(t))](\cdot,t)  \|_{L^2} = O_{L^2_x}(\epsilon).
\end{split}
\end{equation}

Due to the band crossing at $p^*$, the Isolated Band Property \ref{isolated_band_assumption} does not hold as $t \uparrow t^*$. As a result
  the proof of Theorem \ref{th:isolated_band_theorem_2} fails as follows:
\begin{enumerate}
\item As $t \uparrow t^*$, the $L^2_x$-norm of the residual $r^\epsilon(x,t)$ defined by \eqref{eq:definition_of_residual}, diverges.
\item The integral $\frac{1}{\epsilon} \inty{0}{t}{ \| r^\epsilon(\cdot,t') \|_{L^2} }{t'}$, and hence the bound \eqref{eq:bound_on_eta} on the $L^2$-norm of the corrector function $\eta^\epsilon(x,t)$ diverges as $t \uparrow t^*$. 
\end{enumerate}

Theorem \ref{th:limit_of_single_band_approximation} is proved by analyzing the  rates of blow-up of singular terms in $r^\epsilon(x,t)$ and then deducing the resulting rate of blow-up of the bound \eqref{eq:bound_on_eta}. In Section \ref{sec:estimate_of_representative_term} we explain the strategy by studying the term which contributes the first term on the right-hand side of the estimate \eqref{eq:bound_on_eta_near_the_crossing}, which is the dominant term when $t = t^* - \epsilon^\xi$ with $0 < \xi < 1/2$ and controls the time interval of validity of the single-band ansatz (Corollary \ref{cor:limit_of_single_band_approximation}). For a sketch of the general argument, see Appendix B.3 of \cite{thesis}. 

\subsection{Estimation of representative term demonstrating blow-up as $t \uparrow t^*$} \label{sec:estimate_of_representative_term}
Let $t \in [t^* - \delta,t^*]$ where $\delta > 0$ is as in Proposition \ref{prop:smooth_continuation} so that:
\begin{equation}
\begin{split}
	&\text{WP}^{1,\epsilon}[\mathfrak{S}_+(t),\mathfrak{q}_+(t),\mathfrak{p}_+(t),\mathfrak{a}^0_+(y,t),\mathfrak{a}^1_+(y,t),\mathfrak{X}_+(z;\mathfrak{p}_+(t))](x,t)	 	\\
	&= \text{{WP}}^{1,\epsilon}[S_+(t),q_+(t),p_+(t),a^0_+(y,t),a^1_+(y,t),\chi_+(z;p(t))](x,t)
\end{split}
\end{equation}
Here, $q_+(t), p_+(t)$ are as in \eqref{eq:+_band_classical_system}, $S_+(t), a^0_+(y,t), a^1_+(y,t)$ are as in Definition \ref{def:S_a_b_extensions}, and $E_+(p), \chi_+(z;p)$ are as in \eqref{eq:def_smooth_bands}. The representative term which appears in the residual $r^\epsilon(x,t)$ \eqref{eq:definition_of_residual} which we now consider is the following: 
\begin{equation} \label{eq:R}
\begin{split}
	&R^\epsilon(x,t) := \epsilon^{-1/4} e^{i \phi^\epsilon_+(y,t)/\epsilon} \left[ \vphantom{\epsilon^{-1/2}} \right. \\
	&\left. \left. \epsilon^2 (- i \de_t)\ \left( - i \de_q W(q_+(t))\ {a}^0_+(y,t) \mathcal{R}_+(p_+(t))\ P^\perp_+(p_+(t))\ \de_p \chi_+(z;p_+(t))\ \right) \vphantom{\epsilon^{-1/2}} \right] \right|_{y = \frac{x - q_+(t)}{\epsilon^{1/2}}, z = \frac{x}{\epsilon} }	\\
	&\text{where } \phi^\epsilon_+(y,t) := S_+(t) + \epsilon^{1/2} p_+(t) y 
\end{split}
\end{equation}
Here, $P^\perp_+(p)$ denotes the projection operator onto the orthogonal complement of the subspace of $L^2_{per}$ spanned by $\chi_+(z;p)$, and:
\begin{equation} \label{eq:resolvent}
	\mathcal{R}_+(p) := ( H(p) - E_+(p) )^{-1}
\end{equation}
denotes the resolvent operator where $H(p)$ is as in \eqref{eq:reduced_eigenvalue_problem}. Because of the band crossing at $p^*$, the operator $\mathcal{R}_+(p) P^\perp_+(p)$ is singular as $p \rightarrow p^*$ on the `resonant'  subspace of $L^2_{per}$ spanned by $\chi_-(z;p)$. 
The operator $\mathcal{R}_+(p) P^\perp_\pm(p)$ however, where $P^\perp_\pm(p)$ is defined as the projection onto the orthogonal complement of $\chi_+(z;p)$ and $\chi_-(z;p)$ in $L^2_{per}$ is regular for all $p \in U$ by \emph{\ref{other_bands_isolated_away}}  of Property \ref{band_crossing_assumption} (see Corollary \ref{cor:away_from_crossing_regular}). 

We isolate the singular part of \eqref{eq:R} as follows. 
Expressing  $\de_p \chi_+(z;p_+(t))$ in terms of its projections onto both $\chi_+(z;p_+(t))$ and $\chi_-(z;p_+(t))$, 
 and their orthogonal complement, the range of $P^\perp_\pm(p_+(t))$, we have
\begin{equation}\label{resolve-exp}
\begin{split}
	&\mathcal{R}_+(p_+(t)) P^\perp_+(p_+(t)) \de_p \chi_+(z;p_+(t)) = \mathcal{R}_+(p(t)) P^\perp_\pm(p_+(t)) \de_p \chi_+(z;p_+(t))   \\
	&+ (E_-(p_+(t)) - E_+(p_+(t)))^{-1} \ip{\chi_-(\cdot;p_+(t))}{\de_p \chi_+(\cdot ;p_+(t))} \chi_-(z;p(t))\ .
\end{split}
\end{equation}
Since $p_+(t)\to p^*$ as $t\uparrow t_*$, $E_+(p^*)=E_-(p^*)$, the singular behavior is isolated in the latter term of \eqref{resolve-exp}. 
We decompose  $R^\epsilon(x,t)$ into its corresponding regular and singular parts:
\begin{equation} \label{eq:R_again}
	R^\epsilon(x,t) := R^\epsilon_{regular}(x,t) + R^\epsilon_{singular}(x,t) 
\end{equation}
where:
\begin{equation}
\begin{split}
	&R^\epsilon_{regular}(x,t) = \epsilon^{-1/4} e^{i \phi^\epsilon_+(y,t) / \epsilon} \left[ \vphantom{\epsilon^{-1/4}} \right.  	\\
	&\left. \left. \epsilon^2 (- i \de_t) \left( - i \de_q W(q_+(t)) {a}^0_+(y,t) \mathcal{R}_+(p_+(t)) P^\perp_\pm(p_+(t)) \de_p \chi_+(z;p_+(t)) \vphantom{\epsilon^{-1/4}}  \right) \right] \right|_{y = \frac{x - q_+(t)}{\epsilon^{1/2}}, z = \frac{x}{\epsilon} }	\\
	&R^\epsilon_{singular}(x,t) = \epsilon^{-1/4} e^{i \phi^\epsilon_+(y,t) / \epsilon} \left[ \vphantom{\epsilon^{-1/4}} \epsilon^2 (- i \de_t) \left( \vphantom{\epsilon^{-1/4}}  - i \de_q W(q_+(t)) {a}^0_+(y,t) \right. \right. 	\\
	&\times \left. \left. \left. (E_-(p_+(t)) - E_+(p_+(t)))^{-1}\ \ip{\chi_-(z;p_+(t))}{\de_p \chi_+(z;p_+(t))} \vphantom{\epsilon^{-1/4}}\ \chi_-(z;p_+(t)) \right) \right] \right|_{y = \frac{x - q_+(t)}{\epsilon^{1/2}}, z = \frac{x}{\epsilon} }.
\end{split}
\end{equation}
It follows from the techniques detailed in \cite{2017WatsonWeinsteinLu} that:
\begin{equation} \label{eq:ok_term}
	R^\epsilon_{regular}(x,t) = O_{L^2_x}(\epsilon^2)
\end{equation}
uniformly as $t \uparrow t^*$. On the other hand, $R^\epsilon_{singular}(x,t)$ is explicitly singular, since it depends on $( E_-(p_+(t)) - E_+(p_+(t)) )^{-1}$ which is unbounded as $t \uparrow t^*$. The time derivative of $R^\epsilon_{singular}(x,t)$ yields two terms:
\begin{equation}
	R^\epsilon_{singular}(x,t) = R^{1,\epsilon}_{singular}(x,t) + R^{2,\epsilon}_{singular}(x,t),
\end{equation}
where:
\begin{equation}
\begin{split}
	&R^{1,\epsilon}_{singular}(x,t) := \epsilon^{-1/4} e^{i \phi^\epsilon_+(y,t)/\epsilon} \left[ \vphantom{\epsilon^{-1/2}} \epsilon^2 (E_-(p_+(t)) - E_+(p_+(t)))^{-1} \right. \\
	&\left. \left. \times (- i \de_t) \left( \vphantom{\epsilon^{-1/2}} - i \de_q W(q_+(t)) {a}^0_+(y,t) \ip{\chi_-(z;p_+(t))}{\de_p \chi_+(z;p_+(t))} \chi_-(z;p_+(t)) \right) \right] \right|_{y = \frac{x - q_+(t)}{\epsilon^{1/2}}, z = \frac{x}{\epsilon} }	\\
	&R^{2,\epsilon}_{singular}(x,t) := \epsilon^{-1/4} e^{i \phi^\epsilon_+(y,t)/\epsilon} \left[ \vphantom{\epsilon^{-1/2}} \epsilon^2 \de_t \left( \vphantom{\epsilon^{-1/4}} (E_-(p_+(t)) - E_+(p_+(t)))^{-1} \right)  \right. 	\\
	&\left. \left. \times (- i) \left( - i \de_q W(q_+(t)) {a}^0_+(y,t) \vphantom{\epsilon^{-1/2}} \ip{\chi_-(z;p_+(t))}{\de_p \chi_+(z;p_+(t))} \chi_-(z;p_+(t)) \right) \right] \right|_{y = \frac{x - q_+(t)}{\epsilon^{1/2}}, z = \frac{x}{\epsilon} }.
\end{split}
\end{equation}
We now concentrate on $R^{2,\epsilon}_{singular}(x,t)$ which will turn out to be the dominant term as $t \uparrow t^*$. We first evaluate the time-derivative:
\begin{equation}
\begin{split}
	&\de_t \left( \left( E_-(p_+(t)) - E_+(p_+(t)) \right)^{-1} \right) =	\\
	&\de_q W(q_+(t)) ( \de_p E_-(p_+(t)) - \de_p E_+(p_+(t)) ) \left( E_-(p_+(t)) - E_+(p_+(t)) \right)^{-2}.
\end{split}
\end{equation}
We then follow \cite{carles_sparber,2017WatsonWeinsteinLu} in estimating $R^{2,\epsilon}_{singular}$ in $L^2_x$ by taking the $L^\infty$ norm of all $z$-dependence and the $L^2$-norm of all $y$ dependence:
\begin{equation} \label{eq:R_again_again}
\begin{split}
	&\| R^{2,\epsilon}_{singular}(\cdot,t) \|_{L^2} \leq \epsilon^2 |\de_q W(q_+(t))|^2 \left| \de_p E_-(p_+(t)) - \de_p E_+(p_+(t)) \right|  \\
	&\times \left| E_-(p_+(t)) - E_+(p_+(t)) \right|^{-2} \| {a}^0_+(\cdot,t) \|_{L^2} \left| \ip{\chi_-(\cdot;p_+(t))}{\de_p \chi_+(\cdot;p_+(t))}_{L^2[0,1]} \right| \| \chi_+(\cdot,p_+(t)) \|_{L^\infty[0,1]}.
\end{split}
\end{equation}
Taylor-expanding in $t - t^*$, using the non-degeneracy conditions \eqref{eq:linear_splitting} and \eqref{eq:driven_through_pstar}, we have that as $t \uparrow t^*$:
\begin{equation} \label{eq:E_plus_minus_expanded}
\begin{split}
	&\left| \left( E_+(p_+(t)) - E_-(p_+(t)) \right)^{-2} \right| \leq  \\
	&\left| \frac{ 1 }{ \de_q W(q^*) \left( \de_p E_+(p^*) - \de_p E_-(p^*) \right) } \right|^2 \left( \frac{1}{ |t - t^*|^2 } \right) + O\left(\frac{1}{|t - t^*|}\right).
\end{split}
\end{equation}
Substituting \eqref{eq:E_plus_minus_expanded} into \eqref{eq:R_again_again} and Taylor-expanding all other terms gives:
\begin{equation} \label{eq:term}
\begin{split}
	&\| R^{2,\epsilon}_{singular}(\cdot,t) \|_{L^2} \leq 	\\
	&\left| \frac{ \ip{\chi_-(\cdot;p^*)}{\de_p \chi_+(\cdot;p^*)}_{L^2[0,1]} \| {a}^0_+(\cdot,t^*) \|_{L^2} \| \chi_+(\cdot,p^*) \|_{L^\infty[0,1]} }{ \de_p E_-(p^*) - \de_p E_+(p^*) }\right| \left( \frac{ \epsilon^2 }{ | t - t^* |^2 } \right)  + O\left(\frac{\epsilon^2}{|t - t^*|}\right)
\end{split}
\end{equation}
Similar analysis shows that:
\begin{equation} \label{eq:less_bad_term}
	\| R^{1,\epsilon}_{singular}(\cdot,t) \|_{L^2} = O\left(\frac{\epsilon^2}{|t - t^*|}\right).
\end{equation}
Recall the relationship between the residual $r^\epsilon(x,t)$ and the bound on the solution error $\eta^\epsilon(x,t) := \psi^\epsilon(x,t) - \psi^\epsilon_{app}(x,t)$ \eqref{eq:bound_on_eta}. Putting everything (\eqref{eq:R_again}, \eqref{eq:ok_term}, \eqref{eq:term}, and \eqref{eq:less_bad_term}) together, then integrating once in time and dividing by $\epsilon$, we see that the term contributed by $R^\epsilon(x,t)$ to the solution error $\eta^\epsilon(x,t)$ may be bounded by: 
\begin{equation}\label{total-error}
\begin{split}
	&\frac{1}{\epsilon} \inty{0}{t}{ \| R^\epsilon(\cdot,t') \|_{L^2} }{t'} \leq \\
	&\left| \frac{ \ip{\chi_-(\cdot;p^*)}{\de_p \chi_+(\cdot;p^*)}_{L^2[0,1]} \| a^0_+(\cdot,t^*) \|_{L^2} \| \chi_+(\cdot,p^*) \|_{L^\infty[0,1]} }{ \de_p E_-(p^*) - \de_p E_+(p^*) }\right| \left( \frac{ \epsilon }{ | t - t^* | } \right)\\
	& + O\left( \epsilon , \epsilon \ln |t - t^*| \right).
\end{split}
\end{equation}
It follows that the right hand side of \eqref{total-error}, which is the bound on the corrector to the wave-packet ansatz, is of the same size as the ${O}(\epsilon^{1/2})$ term in the wave-packet approximate solution for $|t-t_*|\sim \epsilon^{1/2}$. 

\section{Proof of Theorem \ref{th:band_crossing_theorem} on coupled band dynamics when $t \sim t^*$}
We now turn to the proof of Theorem \ref{th:band_crossing_theorem}, on the dynamics through the crossing time $t^*$. We follow the strategy adopted by Hagedorn, see in particular the proof of Theorem 5.1 in \cite{hagedorn}. Theorem \ref{th:limit_of_single_band_approximation} and Corollary \ref{cor:limit_of_single_band_approximation} give a description of the exact solution $\psi^\epsilon(x,t)$ of \eqref{eq:original_equation} with initial data given by \eqref{eq:bloch_wavepacket_initial_data} which is valid with errors of $o_{L^2_x}(\epsilon^{1/2})$ up to $t = t^* - \epsilon^\xi$ for any $\xi \in (0,1/2)$:
\begin{equation} \label{eq:description}
\begin{split}
	&\text{For all $t \in [0,t^* - \epsilon^\xi)$,} 	\\
	&\psi^\epsilon(x,t) = \text{WP}^{1,\epsilon}[\mathfrak{S}_+(t),\mathfrak{q}_+(t),\mathfrak{p}_+(t),\mathfrak{a}^0_+(y,t),\mathfrak{a}^1_+(y,t),\mathfrak{X}_+(z;\mathfrak{p}_+(t))](x,t) + o_{L^2_x}(\epsilon^{1/2}). 
\end{split}
\end{equation}
We seek to extend \eqref{eq:description} to a description of $\psi^\epsilon(x,t)$ up to errors of $o_{L^2_x}(\epsilon^{1/2})$ over the entire interval $t \in [0,\tilde{T}]$ where $\tilde{T}$ is chosen such that Propositions \ref{prop:smooth_continuation} and \ref{prop:smooth_continuation_again} hold with $T = \tilde{T}$ and $T' = \tilde{T}$. We first claim that the proof of Theorem \ref{th:band_crossing_theorem} may be reduced to (a) the construction of an `inner solution'  $\psi^\epsilon_{app,inner}(x,t)$ satisfying certain properties and (b) an application of Lemma \ref{lem:key_lemma}: 
\begin{proposition} \label{prop:idea_of_proof}
Let $\xi, \xi' \in (0,1/2)$ be such that $\xi' < \xi$ so that $(t^* - \epsilon^\xi,t^* + \epsilon^\xi) \subset (t^* - \epsilon^{\xi'},t^* + \epsilon^{\xi'})$. Assume \eqref{eq:description} for an incoming wave-packet. Consider an approximate solution $\psi^\epsilon_{app,inner}(x,t)$ which satisfies the following three properties:
\begin{enumerate}[label=\emph{(P\arabic*)}]
\item $\psi^\epsilon_{app,inner}(x,t)$ is equal to the single-band ansatz in the `incoming' overlap region $t \in (t^* - \epsilon^{\xi'},t^* - \epsilon^\xi)$ up to errors of $o(\epsilon^{1/2})$ in $L^2(\field{R})$. That is,
\begin{equation} \label{eq:incoming_match}
\begin{split}
	&\text{for all $t \in (t^* - \epsilon^{\xi'},t^* - \epsilon^\xi)$},	\\
	&\left\| \psi^\epsilon_{app,inner}(\cdot,t) - \text{\emph{WP}}^{1,\epsilon}[\mathfrak{S}_+(t),\mathfrak{q}_+(t),\mathfrak{p}_+(t),\mathfrak{a}^0_+(y,t),\mathfrak{a}^1_+(y,t),\mathfrak{X}_+(z;\mathfrak{p}_+(t))](\cdot,t) \right\|_{L^2} = o(\epsilon^{1/2}),
\end{split}
\end{equation}
\item $\psi^\epsilon_{app,inner}(x,t)$ is an approximate solution to \eqref{eq:original_equation} $(i \epsilon \de_t - H^\epsilon) \psi^\epsilon = 0$:
\begin{equation} \label{eq:equation_for_psi_inner}
	i \epsilon \de_t \psi^\epsilon_{app,inner} - H^\epsilon \psi^\epsilon_{app,inner} = r^\epsilon_{inner} 	
\end{equation}
with residual satisfying the bound: 
\begin{equation} \label{eq:inner_residual}
	\frac{1}{\epsilon} \inty{t^* - \epsilon^{\xi'}}{t^* + \epsilon^{\xi'}}{ \| r_{inner}^\epsilon(\cdot,t') \|_{L^2} }{t'} = o(\epsilon^{1/2}),
\end{equation}
\item $\psi^\epsilon_{app,inner}(x,t)$ matches the `two-band' ansatz of \eqref{eq:statement_of_main_theorem} in the `outgoing' overlap region $t \in (t^* + \epsilon^{\xi},t^* + \epsilon^{\xi'})$ up to errors of $o(\epsilon^{1/2})$ in $L^2(\field{R})$. That is,
\begin{equation} \label{eq:outgoing_match}
\begin{split}
	&\text{for all $t \in (t^* + \epsilon^{\xi},t^* + \epsilon^{\xi'})$},	\\
	&\left\| \psi^\epsilon_{app,inner}(\cdot,t) - \text{\emph{WP}}^{1,\epsilon}[\mathfrak{S}_+(t),\mathfrak{q}_+(t),\mathfrak{p}_+(t),\mathfrak{a}^0_+(y,t),\mathfrak{a}^1_+(y,t),\mathfrak{X}_+(z;\mathfrak{p}_+(t))](\cdot,t) \right.	\\
	&\left. - \epsilon^{1/2} \text{\emph{WP}}^{0,\epsilon}[\mathfrak{S}_-(t),\mathfrak{q}_-(t),\mathfrak{p}_-(t),\mathfrak{a}^0_-(y,t),\mathfrak{X}_-(z;\mathfrak{p}_-(t))](\cdot,t) \right\|_{L^2} = o(\epsilon^{1/2}).
\end{split}
\end{equation}
\end{enumerate}
Then, under conditions (P1), (P2), and (P3), Theorem \ref{th:band_crossing_theorem} holds. 
\end{proposition}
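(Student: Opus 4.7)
The plan is to partition $[0,\tilde T]$ into three overlapping subintervals
\[
I_{\text{pre}} := [0, t^* - \epsilon^{\xi'}], \quad I_{\text{inner}} := [t^* - \epsilon^{\xi'}, t^* + \epsilon^{\xi'}], \quad I_{\text{post}} := [t^* + \epsilon^{\xi'}, \tilde T],
\]
to construct a dedicated approximate solution on each, and to patch them together by two applications of Lemma \ref{lem:key_lemma}, using the matching hypotheses (P1) and (P3) to transfer error estimates across the interior matching times $t^-\in(t^* - \epsilon^{\xi'}, t^* - \epsilon^\xi)$ and $t^+\in(t^* + \epsilon^\xi, t^* + \epsilon^{\xi'})$, which by construction lie in the incoming and outgoing overlap regions respectively.

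The first step, on $I_{\text{pre}}$, is essentially given to us: Corollary \ref{cor:limit_of_single_band_approximation} applied with exponent $\xi'\in(0,1/2)$ asserts that the single-band ansatz \eqref{eq:incoming_single_band_ansatz} approximates $\psi^\epsilon$ to $o_{L^2_x}(\sqrt\epsilon)$ throughout $I_{\text{pre}}$, which also establishes conclusion (1) of Theorem \ref{th:band_crossing_theorem}. Next, to handle $I_{\text{inner}}$, I would apply Lemma \ref{lem:key_lemma} with $\psi^\epsilon_{app}=\psi^\epsilon_{app,\text{inner}}$ and initial time $t^-$. The triangle inequality, combined with the pre-crossing estimate and (P1), gives an initial error of $o(\sqrt\epsilon)$ at $t^-$, while the integrated residual contribution is $o(\sqrt\epsilon)$ by (P2). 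Lemma \ref{lem:key_lemma} then delivers the claimed inner-region expansion of $\psi^\epsilon$ on all of $I_{\text{inner}}$.

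The third step is to treat $I_{\text{post}}$ by introducing the outer two-band ansatz
\[
\psi^\epsilon_{\text{post}}(x,t) := \text{WP}^{1,\epsilon}[\mathfrak{S}_{+},\mathfrak{q}_{+},\mathfrak{p}_{+},\mathfrak{a}^0_{+},\mathfrak{a}^1_{+},\mathfrak{X}_+](x,t) + \sqrt\epsilon\,\text{WP}^{0,\epsilon}[\mathfrak{S}_{-},\mathfrak{q}_{-},\mathfrak{p}_{-},\mathfrak{a}^0_{-},\mathfrak{X}_-](x,t)
\]
and applying Lemma \ref{lem:key_lemma} with initial time $t^+$. The initial error is again $o(\sqrt\epsilon)$ by the triangle inequality using the inner-region estimate and (P3). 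Each summand of $\psi^\epsilon_{\text{post}}$ is a smoothly-continued isolated-band wavepacket on $[t^+,\tilde T]$, since along the trajectories $\mathfrak{p}_\pm(t)$ the bands $E_\pm$ are isolated from the remainder of the spectrum by Property \ref{band_crossing_assumption}(A2) and from each other by the linear gap opening ensured by Property \ref{band_crossing_assumption}(A4) together with the driving condition \eqref{eq:driven_through_pstar}. Consequently the residuals can be estimated by the same computations underlying Theorems \ref{th:isolated_band_theorem_1} and \ref{th:isolated_band_theorem_2}; dividing by $\epsilon$ and integrating over $[t^+,\tilde T]$ then yields an $o(\sqrt\epsilon)$ contribution, giving conclusion (2).

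The main technical obstacle lies in this last step: the minimum inter-band gap along $\mathfrak{p}_\pm(t)$ on $I_{\text{post}}$ is only of size $\mathcal{O}(\epsilon^{\xi'})$ near the endpoint $t^+$, and the constants in the standard isolated-band residual bounds degrade as this gap shrinks. Closing the estimate therefore requires carefully tracking the spectral-gap dependence of those constants, verifying that their growth is tame enough that the residual contribution remains $o(\sqrt\epsilon)$ uniformly in $\epsilon$; this quantitative book-keeping, together with the parallel considerations arising in the separate construction of $\psi^\epsilon_{app,\text{inner}}$ in Section \ref{sec:inner_solution}, is what ultimately dictates the admissible range of exponents $\xi'$.
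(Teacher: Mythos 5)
Your proposal is correct and follows the same matched-asymptotics architecture as the paper: establish the incoming single-band description up to the left edge of the inner interval, use (P1) and (P2) with Lemma \ref{lem:key_lemma} to propagate the $o(\sqrt\epsilon)$ agreement across $[t^*-\epsilon^{\xi'},t^*+\epsilon^{\xi'}]$, and use (P3) to hand off to the two-band ansatz. The one genuine structural difference is in the post-crossing step. You apply Lemma \ref{lem:key_lemma} directly forward from $t^+$ with the two-band ansatz as $\psi^\epsilon_{app}$, which forces you to bound its residual on $[t^+,\tilde T]$ including the region where the inter-band gap is only $\mathcal{O}(|t-t^*|)$. The paper instead introduces an auxiliary \emph{exact} solution $\psi^\epsilon_{outgoing}$ seeded with the two-band data at a fixed time $\tilde T_0>t^*$ independent of $\epsilon$; it compares $\psi^\epsilon_{outgoing}$ with the two-band ansatz forward on $[\tilde T_0,\tilde T]$ using the isolated-band theory verbatim (the gap there is bounded below independently of $\epsilon$), and backward on $[t^*+\epsilon^\xi,\tilde T_0]$ by rerunning the blow-up analysis of Theorem \ref{th:limit_of_single_band_approximation} in reversed time; agreement of $\psi^\epsilon$ with $\psi^\epsilon_{outgoing}$ on the outgoing overlap region then extends to all of $(t^*+\epsilon^\xi,\tilde T]$ by unitarity (Lemma \ref{lem:key_lemma} with zero residual). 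Both routes rest on the same quantitative fact, and the ``technical obstacle'' you flag at the end is not actually open: the needed gap-dependent bookkeeping is precisely the content of Theorem \ref{th:limit_of_single_band_approximation} and Corollary \ref{cor:limit_of_single_band_approximation}, whose bound $C\epsilon^{1-\xi}=o(\sqrt\epsilon)$ for $\xi<1/2$ (applied in reversed time from $\tilde T_0$ toward $t^*+\epsilon^\xi$) is exactly what closes your estimate; you should cite it rather than defer it. The paper's detour through an exact solution has the modest advantage of confining all singular-gap estimates to the short interval $[t^*+\epsilon^\xi,\tilde T_0]$ where the explicit blow-up rates have already been computed.
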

\begin{proof}
We apply Lemma \ref{lem:key_lemma} with $t_0 = t^* - \epsilon^{\xi'}$, $t_1 = t^* + \epsilon^{\xi'}$, $\psi^\epsilon_{app}(x,t) = \psi^\epsilon_{app,inner}(x,t)$, and $r^\epsilon(x,t) = r^\epsilon_{app,inner}(x,t)$. It then follows from (P1) and (P2) that:
\begin{equation} \label{eq:psi_app_and_exact}
\begin{split}
	&\text{For all $t \in (t^* - \epsilon^{\xi'}, t^* + \epsilon^{\xi'})$},	\\
	&\| \psi^\epsilon(\cdot,t) - \psi^\epsilon_{app,inner}(\cdot,t) \|_{L^2} = o(\epsilon^{1/2}).
\end{split}
\end{equation}
Combining \eqref{eq:psi_app_and_exact} with (P3), we have that:
\begin{equation} \label{eq:outgoing_match_again}
\begin{split}
	&\text{For all $t \in (t^* + \epsilon^{\xi},t^* + \epsilon^{\xi'})$},	\\
	&\left\| \psi^\epsilon(\cdot,t) - \text{{WP}}^{1,\epsilon}[\mathfrak{S}_+(t),\mathfrak{q}_+(t),\mathfrak{p}_+(t),\mathfrak{a}^0_+(y,t),\mathfrak{a}^1_+(y,t),\mathfrak{X}_+(z;\mathfrak{p}_+(t))](\cdot,t) \right.	\\
	&\left. + \epsilon^{1/2} \text{{WP}}^{0,\epsilon}[\mathfrak{S}_-(t),\mathfrak{q}_-(t),\mathfrak{p}_-(t),\mathfrak{a}^0_-(y,t),\mathfrak{X}_-(z;\mathfrak{p}_-(t))](x,t) \right\|_{L^2} = o(\epsilon^{1/2}).
\end{split}
\end{equation}
We claim that the main statement \eqref{eq:statement_of_main_theorem} of Theorem \ref{th:band_crossing_theorem} then follows from the isolated band theory. For any $\tilde{T}_0$ fixed independent of $\epsilon$ such that $t^* < \tilde{T}_0 < \tilde{T}$, the Isolated Band Property \ref{isolated_band_assumption} holds for the bands $E_n(p)$ and $E_{n+1}(p)$ and trajectories $p_n(t)$ and $p_{n+1}(t)$ defined by \eqref{eq:n_system}, \eqref{eq:n+1_system} with $t_0 = \tilde{T}_0, t_1 = \tilde{T}$. By linearity, the two-band wavepacket ansatz agrees (modulo errors of $o_{L^2_x}(\epsilon^{1/2})$) with the exact solution $\psi^\epsilon_{outgoing}(x,t)$ of the full equation \eqref{eq:original_equation} over the interval $t \in [\tilde{T}_0,\tilde{T}]$ with initial data given at $\tilde{T}_0$ by:
\begin{equation}
\begin{split}
	&\psi^\epsilon_{outgoing}(x,\tilde{T}_0) = \\
	&\text{{WP}}^{1,\epsilon}[\mathfrak{S}_+(\tilde{T}_0),\mathfrak{q}_+(\tilde{T}_0),\mathfrak{p}_+(\tilde{T}_0),\mathfrak{a}^0_+(y,\tilde{T}_0),\mathfrak{a}^1_+(y,\tilde{T}_0),\mathfrak{X}_+(z;\mathfrak{p}_+(\tilde{T}_0))](x,\tilde{T}_0) 	\\
	&+ \epsilon^{1/2} \text{{WP}}^{0,\epsilon}[\mathfrak{S}_-(\tilde{T}_0),\mathfrak{q}_-(\tilde{T}_0),\mathfrak{p}_-(\tilde{T}_0),\mathfrak{a}^0_-(y,\tilde{T}_0),\mathfrak{X}_-(z;\mathfrak{p}_-(\tilde{T}_0))](x,\tilde{T}_0).
\end{split}
\end{equation}
By performing the same analysis as in the proof of Theorem \ref{th:limit_of_single_band_approximation} backwards in time towards $t^*$, we have that:
\begin{equation} \label{eq:outgoing_agrees_with_2_band_ansatz}
\begin{split}
	&\text{For all $t \in (t^* + \epsilon^{\xi},\tilde{T}]$},	\\
	&\left\| \psi_{outgoing}^\epsilon(\cdot,t) - \text{{WP}}^{1,\epsilon}[\mathfrak{S}_+(t),\mathfrak{q}_+(t),\mathfrak{p}_+(t),\mathfrak{a}^0_+(y,t),\mathfrak{a}^1_+(y,t),\mathfrak{X}_+(z;\mathfrak{p}_+(t))](\cdot,t) \right.	\\
	&\left. + \epsilon^{1/2} \text{{WP}}^{0,\epsilon}[\mathfrak{S}_-(t),\mathfrak{q}_-(t),\mathfrak{p}_-(t),\mathfrak{a}^0_-(y,t),\mathfrak{X}_-(z;\mathfrak{p}_-(t))](x,t) \right\|_{L^2} = o(\epsilon^{1/2}).
\end{split}
\end{equation}
But now combining the triangle inequality with \eqref{eq:outgoing_match_again}, we have that:
\begin{equation} \label{eq:outgoing_match_again_again}
\begin{split}
	&\text{For all $t \in (t^* + \epsilon^{\xi},t^* + \epsilon^{\xi'})$},	\\
	&\left\| \psi^\epsilon(\cdot,t) - \psi^\epsilon_{outgoing}(\cdot,t) \right\|_{L^2} = o(\epsilon^{1/2}).
\end{split}
\end{equation}
Since $\psi^\epsilon(x,t)$ and $\psi^\epsilon_{outgoing}(x,t)$ are both exact solutions of \eqref{eq:original_equation}, applying Lemma \ref{lem:key_lemma} one more time with $\psi^\epsilon_{app}(x,t) = \psi^\epsilon_{outgoing}(x,t)$ gives that:
\begin{equation} \label{eq:outgoing}
\begin{split}
	&\text{For all $t \in (t^* + \epsilon^{\xi},\tilde{T}]$},	\\
	&\left\| \psi^\epsilon(\cdot,t) - \psi^\epsilon_{outgoing}(\cdot,t) \right\|_{L^2} = o(\epsilon^{1/2}).
\end{split}
\end{equation}
The main statement of Theorem \ref{th:band_crossing_theorem} \eqref{eq:statement_of_main_theorem} then follows from combining \eqref{eq:outgoing} and \eqref{eq:outgoing_agrees_with_2_band_ansatz}. 
\end{proof}

This brings us to the core construction of the paper.

\subsection{Derivation of $\psi^\epsilon_{app,inner}$ satisfying hypotheses of Proposition \ref{prop:idea_of_proof}} \label{sec:inner_solution}
We make the following ansatz for $\psi^\epsilon_{app,inner}(x,t)$, which incorporates both $+$ and $-$ bands, and a new `fast' timescale: 
\begin{equation}
	s = \frac{ t - t^* }{ \epsilon^{1/2} },
\end{equation}
which was motivated by the preceding  single-band analysis:
\begin{equation} \label{eq:psi_inner_ansatz}
	\psi^\epsilon_{app,inner}(x,t) = \epsilon^{-1/4} \sum_{\sigma = \pm} \left. e^{i \{ S_\sigma(t) + \epsilon^{1/2} p_\sigma(t) y_\sigma \} / \epsilon } f^\epsilon_{\sigma,inner}\left(y_\sigma,z,t,s\right) \right|_{y_\sigma = \frac{x - q_\sigma(t)}{\epsilon^{1/2}}, z = \frac{x}{\epsilon}, s = \frac{t - t^*}{\epsilon^{1/2}} }. 
\end{equation}
The new time scale has been introduced in the envelope functions $f^\epsilon_{\sigma,inner}$. 
We take $(q_\sigma(t),p_\sigma(t)), \sigma = \pm$ as in \eqref{eq:+_band_classical_system} and \eqref{eq:-_band_classical_system}, $S_\sigma(t), \sigma = \pm$, as in Definitions \ref{def:S_a_b_extensions} and \ref{def:-_band_functions}, and assume that $f^\epsilon_{\sigma,inner}(y_\sigma,z,t,s)$ may be expanded in powers of $\epsilon^{1/2}$:
\begin{equation} \label{eq:f_as_a_series}
	f^\epsilon_{\sigma,inner}(y_\sigma,z,t,s) = f^0_{\sigma,inner}(y_\sigma,z,t,s) + \epsilon^{1/2} f^1_{\sigma,inner}(y_\sigma,z,t,s) + ...
\end{equation}

Then,  $\psi^\epsilon_{app,inner}$, given by \eqref{eq:psi_inner_ansatz}, satisfies the non-homogeneous Schroedinger equation \eqref{eq:equation_for_psi_inner} with residual:
{\footnotesize{
\begin{equation} \label{eq:equation_for_r_inner}
\begin{split}
	&r^\epsilon_{inner}(x,t) = \epsilon^{-1/4} \sum_{\sigma = \pm} e^{i \left\{ S_\sigma(t) + \epsilon^{1/2} p_\sigma(t) y_\sigma \right\} / \epsilon}\left\{\vphantom{\frac{1}{2}} \epsilon^{3/2} \left[ y_\sigma^3 \inty{0}{1}{ \frac{(\tau - 1)^3}{3!} \de_x^3W\left(q_\sigma(t) + \tau \epsilon^{1/2} y_\sigma \right)}{\tau}\right] \right.  \\
	&+ \epsilon \left[ \frac{1}{2} (- i \de_{y_\sigma})^2 + \frac{1}{2} \de_x^2 W(q_\sigma(t)) y_\sigma^2 - i \de_t \right] + \epsilon^{1/2} \left[ \vphantom{\frac{1}{2}} \left( \vphantom{\epsilon^{-1/2}} p_\sigma(t) - i \de_z - \de_p E_\sigma(p_\sigma(t)) \right)(- i \de_{y_\sigma}) - i \de_s \right]   \\
	&\left. \left. + \left[ \vphantom{\frac{1}{2}} H(p_\sigma(t)) - E_\sigma(p_\sigma(t)) \right] \vphantom{\frac{1}{2}} \right\} \left\{ f^0_{\sigma,inner}(y_\sigma,z,t,s) + \epsilon^{1/2} f^1_{\sigma,inner}(y_\sigma,z,t,s) + ... \right\} \right|_{y_\sigma = \frac{x - q_\sigma(t)}{\epsilon^{1/2}}, z = \frac{x}{\epsilon}, s = \frac{t - t^*}{\epsilon^{1/2}}} \\
	&=\ r^\epsilon_{inner,0}(x,t)\ +\ \epsilon^{1/2}\ r^\epsilon_{inner,1}(x,t)\ +\ (\epsilon^{1/2})^2\ r^\epsilon_{inner,2}(x,t)\ +\ \dots\ +\ (\epsilon^{1/2})^m\ r^\epsilon_{inner,3}(x,t)\ +\ \dots
\end{split}
\end{equation}
}}
Here, $H(p)=-\frac12(p-i\partial_z)^2+V(z)$; see \eqref{eq:reduced_eigenvalue_problem}. 
\medskip

In the coming subsections we construct the functions $f^j_{\sigma,inner}$ so that $\psi^\epsilon_{app,inner}(x,t)$ satisfies the properties (P1), (P2) and (P3) of Proposition \ref{prop:idea_of_proof}.  

\subsubsection{Terms in $r^\epsilon_{inner}(x,t)$ with $L^2_x$ norm of order $\epsilon^0$}
The terms with $L^2_x$ norm proportional to $\epsilon^0=1$ in \eqref{eq:equation_for_r_inner} are of the form:
{\footnotesize{
\begin{equation} \label{eq:order_1_terms}
\begin{split}
 r^\epsilon_{inner,0}(x,t)\ =\ 	\epsilon^{-1/4} \sum_{\sigma = \pm} e^{i \left\{ S_\sigma(t) + \epsilon^{1/2} p_\sigma(t) y_\sigma \right\} / \epsilon} \left. \left[ \vphantom{\frac{1}{2}} H(p_\sigma(t)) - E_\sigma(p_\sigma(t)) \right] \vphantom{\frac{1}{2}} f^0_{\sigma,inner}(y_\sigma,z,t,s) \right|_{y_\sigma = \frac{x - q_\sigma(t)}{\epsilon^{1/2}}, z = \frac{x}{\epsilon}, s = \frac{t - t^*}{\epsilon^{1/2}}}.
\end{split}
\end{equation}
}}
We may set these two terms individually to zero by defining:
\begin{equation} \label{eq:f_0_inner}
	f^0_{\sigma,inner}(y_\sigma,z,t,s) = a^0_{\sigma,inner}(y_\sigma,t,s)\ \chi_\sigma(z;p_\sigma(t)),\quad
	\sigma = \pm\qquad .
\end{equation}
The functions $a^0_{\sigma,inner}(y_\sigma,t,s)$,\ $\sigma=\pm$ are left arbitrary for now and will be determined at a later stage. 

\subsubsection{Terms in $r^\epsilon_{inner}$ with $L^2_x$ norm of order $\epsilon^{1/2}$}
The terms with $L^2_x$ norm proportional to $\epsilon^{1/2}$ in \eqref{eq:equation_for_r_inner} are of the form $\epsilon^{1/2}$ times the following expression which is $O_{L^2}(1)$: 
\begin{equation} \label{eq:order_eps_one_half_terms}
\begin{split}
	& r^\epsilon_{inner,1}(x,t) = \epsilon^{-1/4} \sum_{\sigma = \pm} e^{i \left\{ S_\sigma(t) + \epsilon^{1/2} p_\sigma(t) y_\sigma \right\} / \epsilon}\left\{ \vphantom{\frac{1}{2}} \right.	\\
	&\left[ \vphantom{\frac{1}{2}} \left( \vphantom{\epsilon^{-1/2}} p_\sigma(t) - i \de_z - \de_p E_\sigma(p_\sigma(t)) \right)(- i \de_{y_\sigma}) - i \de_s \right] f^0_{\sigma,inner}(y_\sigma,z,t,s)   \\
	&\left. \left. + \left[ \vphantom{\frac{1}{2}} H(p_\sigma(t)) - E_\sigma(p_\sigma(t)) \right] \vphantom{\frac{1}{2}} f^1_{\sigma,inner}(y_\sigma,z,t,s) \right\} \right|_{y_\sigma = \frac{x - q_\sigma(t)}{\epsilon^{1/2}}, z = \frac{x}{\epsilon}, s = \frac{t - t^*}{\epsilon^{1/2}}}.
\end{split}
\end{equation}

\begin{proposition}
Substituting the expression \eqref{eq:f_0_inner} for $f^0_{\sigma,inner}(y_\sigma,z,t,s)$   into \eqref{eq:order_eps_one_half_terms} yields the following equivalent expression for  \eqref{eq:order_eps_one_half_terms}: 
\begin{equation} \label{eq:better_form}
\begin{split}
	&\epsilon^{-1/4} \sum_{\sigma = \pm} e^{i \left\{ S_\sigma(t) + \epsilon^{1/2} p_\sigma(t) y_\sigma \right\} / \epsilon}\left\{ \vphantom{\frac{1}{2}} \left[ \vphantom{\frac{1}{2}} H(p_\sigma(t)) - E_\sigma(p_\sigma(t)) \right] \left( \vphantom{\frac{1}{2}} f^1_{\sigma,inner}(y_\sigma,z,t,s) \right. \right.   \\
	&\left. \vphantom{\frac{1}{2}} - (- i \de_{y_\sigma}) a^0_{\sigma,inner}(y_\sigma,t,s) \de_{p_\sigma} \chi_\sigma(z;p_\sigma(t)) \right) \left. \left. \vphantom{\frac{1}{2}} - i \de_s a^0_{\sigma,inner}(y_\sigma,s) \chi_\sigma(z;p_\sigma(t)) \right\} \right|_{y_\sigma = \frac{x - q_\sigma(t)}{\epsilon^{1/2}}, z = \frac{x}{\epsilon}, s = \frac{t - t^*}{\epsilon^{1/2}}}.
\end{split}
\end{equation}
\end{proposition}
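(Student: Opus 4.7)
The proposition is essentially an algebraic manipulation that isolates the contribution of $f^1_{\sigma,inner}$ from the forcing produced by $f^0_{\sigma,inner}$. My plan is to substitute the ansatz $f^0_{\sigma,inner}=a^0_{\sigma,inner}(y_\sigma,t,s)\,\chi_\sigma(z;p_\sigma(t))$ directly into \eqref{eq:order_eps_one_half_terms} and then use a ``Hellmann--Feynman-like'' identity obtained by differentiating the Bloch eigenvalue problem in the quasi-momentum parameter.

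The key identity is derived by applying $\partial_p$ to $H(p)\chi_\sigma(z;p)=E_\sigma(p)\chi_\sigma(z;p)$. Since $H(p)=\tfrac12(p-i\partial_z)^2+V(z)$, we have $\partial_p H(p)=p-i\partial_z$, and rearrangement yields
\begin{equation} \label{eq:HF}
\bigl(p-i\partial_z-\partial_p E_\sigma(p)\bigr)\chi_\sigma(z;p) \;=\; -\bigl(H(p)-E_\sigma(p)\bigr)\,\partial_p\chi_\sigma(z;p).
\end{equation}
This is exactly what is needed: in \eqref{eq:order_eps_one_half_terms} the operator $\bigl(p_\sigma(t)-i\partial_z-\partial_p E_\sigma(p_\sigma(t))\bigr)$ acts only on the $z$ variable, while $(-i\partial_{y_\sigma})$ and $a^0_{\sigma,inner}(y_\sigma,t,s)$ act only on $y_\sigma$ and $s$; therefore all three commute, and applying the first bracket to $f^0_{\sigma,inner}$ produces
\[
\bigl(-i\partial_{y_\sigma}a^0_{\sigma,inner}\bigr)\,\bigl(p_\sigma(t)-i\partial_z-\partial_p E_\sigma(p_\sigma(t))\bigr)\chi_\sigma(z;p_\sigma(t)),
\]
which by \eqref{eq:HF} equals $-\bigl(H(p_\sigma(t))-E_\sigma(p_\sigma(t))\bigr)\bigl[(-i\partial_{y_\sigma})a^0_{\sigma,inner}\,\partial_p\chi_\sigma(z;p_\sigma(t))\bigr]$. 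The $-i\partial_s$ piece simply yields $-i\,\partial_s a^0_{\sigma,inner}\,\chi_\sigma$, since $\chi_\sigma(z;p_\sigma(t))$ is independent of~$s$.

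Combining this with the remaining $[H(p_\sigma(t))-E_\sigma(p_\sigma(t))]f^1_{\sigma,inner}$ term and using linearity of $H(p_\sigma(t))-E_\sigma(p_\sigma(t))$ (pulling it outside the sum) gives exactly the right-hand side of \eqref{eq:better_form}. There is no real obstacle here beyond care with notation: one must verify that every factor commutes with $H(p_\sigma(t))-E_\sigma(p_\sigma(t))$ in the way needed, which is immediate since that operator acts only in~$z$ at fixed~$p_\sigma(t)$, while $a^0_{\sigma,inner}$ and its $y_\sigma$-derivatives act only in~$y_\sigma,t,s$. Hence the proposition follows by a one-line substitution once \eqref{eq:HF} is in hand.
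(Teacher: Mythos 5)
Your proof is correct and is essentially identical to the paper's: the paper also derives the identity $\left(p_\sigma(t)-i\de_z-\de_p E_\sigma(p_\sigma(t))\right)\chi_\sigma = -\left(H(p_\sigma(t))-E_\sigma(p_\sigma(t))\right)\de_p\chi_\sigma$ by differentiating the Bloch eigenvalue problem in $p$, then substitutes \eqref{eq:f_0_inner} into \eqref{eq:order_eps_one_half_terms} and uses that the $z$-operators commute with the $y_\sigma$- and $s$-dependent factors. Nothing further is needed.
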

\begin{proof} Differentiating the eigenvalue problem \eqref{eq:reduced_eigenvalue_problem} satisfied by $(E_\sigma, \chi_\sigma)$ with respect to $p$, we obtain the following pair of identities for $\sigma =\pm$:
\begin{equation} \label{eq:derivative_identity}
\begin{split}
	 &\left( p_\sigma(t) - i \de_z - \de_{p_\sigma} E_\sigma(p_\sigma(t)) \right) \chi_\sigma(z;p_\sigma(t)) 	\\
	&= - \left( H(p_\sigma(t)) - E_\sigma(p_\sigma(t)) \right) \de_p \chi_\sigma(z;p_\sigma(t)).
\end{split}
\end{equation}
Relation \eqref{eq:better_form} now follows from substituting \eqref{eq:f_0_inner} into \eqref{eq:order_eps_one_half_terms} and using \eqref{eq:derivative_identity}.
\end{proof}
We may therefore set the expression in \eqref{eq:order_eps_one_half_terms} equal to zero by setting each term in the sum individually to zero. To do this, we first take:
\begin{equation} \label{eq:a_0_ind_of_s}
	\de_s a^0_{\sigma,inner}(y_\sigma,t,s) = 0,\qquad \sigma = \pm\quad .
\end{equation}
We then require,  for $\sigma \in \pm$,  that $f^1_{\sigma,inner}(y_\sigma,z,t,s)$ satisfy:
\begin{equation}
	\left[ \vphantom{\frac{1}{2}} H(p_\sigma(t)) - E_\sigma(p_\sigma(t)) \right] \left( \vphantom{\frac{1}{2}} f^1_{\sigma,inner}(y_\sigma,z,t,s) - (- i \de_{y_\sigma}) a^0_{\sigma,inner}(y_\sigma,t,s) \de_{p_\sigma} \chi_\sigma(z;p_\sigma(t)) \right) = 0
\end{equation}
Therefore, for $\sigma = \pm$, 
\begin{equation} \label{eq:f_1_inner}
	 f^1_{\sigma,inner}(y_\sigma,z,t,s) = a^1_{\sigma,inner}(y_\sigma,t,s) \chi_\sigma(z;p_\sigma(t)) + (- i \de_{y_\sigma}) a^0_{\sigma,inner}(y_\sigma,t) \de_p \chi_\sigma(z;p_\sigma(t)),
\end{equation}
where the functions $a^1_{\sigma,inner}(y_\sigma,t,s)$ are thus far arbitrary and to be determined. 

\subsubsection{Terms in $r^\epsilon_{inner}$ with $L^2_x$ norm of order $\epsilon^1$}
The terms in $r^\epsilon_{inner}$ with $L^2_x$ norm proportional to $\epsilon^1$ in \eqref{eq:equation_for_r_inner} are of the form:
$\epsilon$ times the following expression which is $O_{L^2}(1)$:
\begin{equation} \label{eq:epsilon_terms}
\begin{split}
	& r^\epsilon_{inner,2}(x,t) = \epsilon^{-1/4} \sum_{\sigma = \pm} e^{i \left\{ S_\sigma(t) + \epsilon^{1/2} p_\sigma(t) y_\sigma \right\} / \epsilon}\left\{\vphantom{\frac{1}{2}} \right.  \\
	&+ \left[ \frac{1}{2} (- i \de_{y_\sigma})^2 + \frac{1}{2} \de_x^2 W(q_\sigma(t)) y_\sigma^2 - i \de_t \right] f^0_{\sigma,inner}(y_\sigma,z,t,s) \\
	&+ \left[ \vphantom{\frac{1}{2}} \left( \vphantom{\epsilon^{-1/2}} p_\sigma(t) - i \de_z - \de_p E_\sigma(p_\sigma(t)) \right)(- i \de_{y_\sigma}) - i \de_s \right] f^1_{\sigma,inner}(y_\sigma,z,t,s)   \\
	&\left. \left. + \left[ \vphantom{\frac{1}{2}} H(p_\sigma(t)) - E_\sigma(p_\sigma(t)) \right] \vphantom{\frac{1}{2}} f^2_{\sigma,inner}(y_\sigma,z,t,s) \right\} \right|_{y_\sigma = \frac{x - q_\sigma(t)}{\epsilon^{1/2}}, z = \frac{x}{\epsilon}, s = \frac{t - t^*}{\epsilon^{1/2}}}.
\end{split}
\end{equation}
Recall (Proposition \ref{prop:idea_of_proof}, (P2)) that we must choose the $f^j_{\sigma,inner}$ in \eqref{eq:equation_for_r_inner} such that $\frac{1}{\epsilon} \inty{t^* - \epsilon^{\xi'}}{t^* + \epsilon^{\xi'}}{ \| r^\epsilon_{inner}(\cdot,t') \|_{L^2} }{t'} =o(\epsilon^{1/2})$ for all $t = (t^* - \epsilon^{\xi'},t^* + \epsilon^{\xi'})$. It follows that we need to choose the undetermined functions so that $r^\epsilon_{inner,2}(x,t)$ in \eqref{eq:epsilon_terms} satisfies:
\begin{equation}\label{eq:smallness_condition}
	\inty{t^* - \epsilon^{\xi}}{t^* + \epsilon^{\xi'}}{\| r^\epsilon_{inner,2}(\cdot,t)\|_{L^2}}{t} = o(\epsilon^{1/2}).
\end{equation}

In contrast to considerations at previous orders in $\epsilon^{1/2}$, we will not be able to satisfy \eqref{eq:smallness_condition} by choosing each summand of \eqref{eq:epsilon_terms} to satisfy the smallness condition \eqref{eq:smallness_condition}. To see this and to see how to proceed, we first simplify the expression \eqref{eq:epsilon_terms} using the expressions for $f^0_{\sigma,inner}(y_\sigma,z,t,s)$ \eqref{eq:f_0_inner} and $f^1_{\sigma,inner}(y_\sigma,z,t,s)$ \eqref{eq:f_1_inner} derived above. 

\begin{proposition}\label{yet-better}
The expression  \eqref{eq:epsilon_terms} may be written in the following form: 
\begin{equation} \label{eq:better_form_again}
\begin{split}
	& r^\epsilon_{inner,2}(x,t) \\
	& =\  \epsilon^{-1/4} \sum_{\sigma = \pm} e^{i \left\{ S_\sigma(t) + \epsilon^{1/2} p_\sigma(t) y_\sigma \right\} / \epsilon}\left\{ \vphantom{\frac{1}{2}} \right. \left[ \vphantom{\frac{1}{2}} H(p_\sigma(t)) - E_\sigma(p_\sigma(t)) \right] \left( \vphantom{\frac{1}{2}} f^2_{\sigma,inner}(y_\sigma,z,t,s) \right. \\
	&\left. - (- i \de_{y_\sigma}) a^1_{\sigma,inner}(y_\sigma,t,s) \de_{p_\sigma} \chi_\sigma(z;p_\sigma(t)) - \frac{1}{2} (- i \de_{y_\sigma})^2 a^0_{\sigma,inner}(y_\sigma,t) \de^2_{p_\sigma} \chi_\sigma(z;p_\sigma(t)) \vphantom{\frac{1}{2}} \right) \\
	&- i \de_s a^1_{\sigma,inner}(y_\sigma,t,s) \chi_\sigma(z;p_\sigma(t))
	 -\left[\ i \de_t  -\  \mathscr{H}_\sigma(t)\ \right]\ a^0_{\sigma,inner}(y_\sigma,t) \chi_\sigma(z;p_\sigma(t)) 	\\
	&+ i \de_{q_\sigma} W(q_\sigma(t)) a^0_{\sigma,inner}(y_\sigma,t,s) \ip{ \chi_{-\sigma}(\cdot;p_\sigma(t))}{\de_{p_\sigma} \chi_\sigma(\cdot;p_\sigma(t)) } \chi_{-\sigma}(z;p_\sigma(t)) \\
	&\left. \left. + i \de_{q_\sigma} W(q_\sigma(t)) a^0_{\sigma,inner}(y_\sigma,t,s) P^\perp_{\pm}(p_\sigma(t)) \de_{p_\sigma} \chi_\sigma(z;p_\sigma(t)) \vphantom{\frac{1}{2}} \right\} \right|_{y_\sigma = \frac{x - q_\sigma(t)}{\epsilon^{1/2}}, z = \frac{x}{\epsilon}, s = \frac{t - t^*}{\epsilon^{1/2}}}.
\end{split}
\end{equation}

Here, we recall that $H(p)=-\frac12(p-i\partial_z)^2+V(z)$ and $\mathscr{H}_\sigma(t)$  denotes the time-dependent harmonic oscillator Hamiltonian defined in \eqref{eq:envelope_equation}, where we replace $p(t), q(t), E_n, \chi_n, y$, respectively,  by $p_\sigma(t), q_\sigma(t), E_\sigma, \chi_\sigma, y_\sigma$. Finally,  $P^\perp_\pm(p_\sigma(t))$ denotes the orthogonal projection operator given by:
\begin{equation}
	P^\perp_\pm(p_\sigma(t)) f(z) := f(z) - \sum_{\sigma' = \pm} \ip{\chi_{\sigma'}(\cdot;p_\sigma(t))}{f(\cdot)} \chi_{\sigma'}(z;p_\sigma(t)).
\end{equation}
\end{proposition}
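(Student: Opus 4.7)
The plan is to substitute the formulas \eqref{eq:f_0_inner} and \eqref{eq:f_1_inner} for $f^0_{\sigma,inner}$ and $f^1_{\sigma,inner}$ into the summands in braces of \eqref{eq:epsilon_terms} and then reorganize the result so that (i) a clean $[H(p_\sigma)-E_\sigma(p_\sigma)]$-factor appears in front of the $f^2_{\sigma,inner}$ correction, (ii) the envelope operator $\mathscr{H}_\sigma(t)$ governing $a^0_{\sigma,inner}$ emerges, and (iii) the explicit inter-band coupling term $\langle\chi_{-\sigma},\partial_p\chi_\sigma\rangle\,\chi_{-\sigma}$ is isolated. The three algebraic tools needed are the first-order identity \eqref{eq:derivative_identity}, its second-order analogue obtained by differentiating the eigenvalue equation \eqref{eq:reduced_eigenvalue_problem} twice in $p$, and the pointwise-in-$p$ resolution of the identity on $L^2_{per}$ in the orthonormal set $\{\chi_+(\cdot;p),\chi_-(\cdot;p)\}$ together with the range of $P^\perp_\pm(p)$.

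First I would process the $(p_\sigma-i\partial_z-\partial_p E_\sigma(p_\sigma))(-i\partial_{y_\sigma})$ piece acting on $f^1_{\sigma,inner}$. Acting on the summand $a^1_{\sigma,inner}\chi_\sigma$, identity \eqref{eq:derivative_identity} produces $-[H(p_\sigma)-E_\sigma(p_\sigma)]\,(-i\partial_{y_\sigma})a^1_{\sigma,inner}\,\partial_p\chi_\sigma$. Acting on the second summand $(-i\partial_{y_\sigma})a^0_{\sigma,inner}\,\partial_p\chi_\sigma$, one differentiates \eqref{eq:derivative_identity} once more to obtain
\[
(p-i\partial_z-\partial_p E_\sigma(p))\partial_p\chi_\sigma \;=\; \tfrac12\bigl(\partial^2_p E_\sigma(p)-1\bigr)\chi_\sigma \;-\; \tfrac12\,[H(p)-E_\sigma(p)]\,\partial^2_p\chi_\sigma,
\]
which contributes $-\tfrac12(-i\partial_{y_\sigma})^2 a^0_{\sigma,inner}(1-\partial^2_p E_\sigma)\chi_\sigma$ plus a second $[H-E_\sigma]$-range piece. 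The $-i\partial_s f^1_{\sigma,inner}$ piece collapses to $-i\partial_s a^1_{\sigma,inner}\chi_\sigma$ since $\partial_s a^0_{\sigma,inner}=0$ by \eqref{eq:a_0_ind_of_s}.

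Second I would combine the result with $[\tfrac12(-i\partial_{y_\sigma})^2+\tfrac12\partial^2_x W(q_\sigma)y_\sigma^2-i\partial_t]f^0_{\sigma,inner}$. The $-\tfrac12(-i\partial_{y_\sigma})^2 a^0_{\sigma,inner}\chi_\sigma$ produced above cancels the free-kinetic contribution, leaving $\tfrac12\partial^2_p E_\sigma(p_\sigma)(-i\partial_{y_\sigma})^2 a^0_{\sigma,inner}\chi_\sigma$; together with $\tfrac12\partial^2_x W(q_\sigma)y_\sigma^2 a^0_{\sigma,inner}\chi_\sigma$ this is the non-Berry part of $\mathscr{H}_\sigma(t)a^0_{\sigma,inner}$. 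The derivative $-i\partial_t$ acting on $\chi_\sigma(z;p_\sigma(t))$ produces $-i\dot p_\sigma\,\partial_p\chi_\sigma=i\partial_q W(q_\sigma)\,\partial_p\chi_\sigma$ via the Hamilton equation in \eqref{eq:+_band_classical_system}. Expanding $\partial_p\chi_\sigma$ in the three-component decomposition and using $\langle\chi_\sigma,\partial_p\chi_\sigma\rangle=-i\mathcal{A}_\sigma(p_\sigma)$ from \eqref{eq:berry_connection} yields three contributions: the Berry term $\partial_q W\,\mathcal{A}_\sigma a^0_{\sigma,inner}\chi_\sigma$, which completes $\mathscr{H}_\sigma(t)a^0_{\sigma,inner}\chi_\sigma$; the inter-band coupling $i\partial_q W\,\langle\chi_{-\sigma},\partial_p\chi_\sigma\rangle\chi_{-\sigma}\,a^0_{\sigma,inner}$; and the orthogonal remainder $i\partial_q W\,P^\perp_\pm\partial_p\chi_\sigma\,a^0_{\sigma,inner}$. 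Collecting all $[H(p_\sigma)-E_\sigma(p_\sigma)]$-range pieces in front of $f^2_{\sigma,inner}$, together with the $-i\partial_s a^1_{\sigma,inner}\chi_\sigma$ term, the completed $-(i\partial_t-\mathscr{H}_\sigma(t))a^0_{\sigma,inner}\chi_\sigma$ term, and the two $\partial_q W$-proportional coupling terms, yields exactly \eqref{eq:better_form_again}.

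The step I expect to require the most care is the bookkeeping of $i$'s and signs in the Berry-connection manipulation: one must use that $\|\chi_\sigma(\cdot;p)\|=1$ implies $\langle\chi_\sigma,\partial_p\chi_\sigma\rangle$ is purely imaginary, so that multiplication by $i\partial_q W$ produces the real-valued scalar potential $\partial_q W\,\mathcal{A}_\sigma$ appearing in $\mathscr{H}_\sigma(t)$, and that the two cancellations involving $\tfrac12(-i\partial_{y_\sigma})^2 a^0_{\sigma,inner}\chi_\sigma$ and the $\tfrac12\partial^2_p E_\sigma$ remainder have matching signs. Beyond this, the derivation is a purely algebraic rearrangement with no analytic subtlety.
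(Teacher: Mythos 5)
Your proposal is correct and follows exactly the paper's route: substitute \eqref{eq:f_0_inner} and \eqref{eq:f_1_inner} into \eqref{eq:epsilon_terms}, apply the once- and twice-differentiated eigenvalue identities \eqref{eq:derivative_identity} and \eqref{eq:second_derivative_identity}, and resolve $\de_{p}\chi_\sigma$ into its $\chi_\sigma$, $\chi_{-\sigma}$, and $P^\perp_\pm$ components, with the Berry connection $\mathcal{A}_\sigma$ and the Hamilton equation $\dot p_\sigma=-\de_q W(q_\sigma)$ reassembling $\mathscr{H}_\sigma(t)$. Your bookkeeping of the cancellation of $\tfrac12(-i\de_{y_\sigma})^2 a^0_{\sigma,inner}\chi_\sigma$ and of the signs in the Berry term is accurate, so nothing is missing.
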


\begin{proof}
We begin with the identity, obtained by differentiating the eigenvalue problem \eqref{eq:reduced_eigenvalue_problem}, satisfied by the eigenpair $(E_\sigma, \chi_\sigma)$, \emph{twice} with respect to $p$: 
\begin{equation} \label{eq:second_derivative_identity}
\begin{split}
	&\frac{1}{2} \left( 1 - \de_p^2 E_\sigma(p_\sigma(t)) \right) \chi_\sigma(z;p_\sigma(t)) + \left( p_\sigma(t) - i \de_z - \de_{p_\sigma} E_\sigma(p_\sigma(t)) \right) \de_{p_\sigma} \chi_\sigma(z;p_\sigma(t)) 	\\
	&= - \frac{1}{2} \left( H(p_\sigma(t)) - E_\sigma(p_\sigma(t)) \right) \de^2_{p_\sigma} \chi_\sigma(z;p_\sigma(t)),
	\qquad \sigma = \pm\ .
\end{split}
\end{equation}
To obtain the expression \eqref{eq:better_form_again}, we  first  substitute  expression \eqref{eq:f_0_inner} for $f^0_{\sigma,inner}$  and expression \eqref{eq:f_1_inner} for $f^1_{\sigma,inner}$  into \eqref{eq:epsilon_terms}. We then  simplify using the identity \eqref{eq:second_derivative_identity} and the expansion of $\de_{p_\sigma} \chi_\sigma(z;p_\sigma(t))$ in terms of  its orthogonal components:
\begin{equation}
\begin{split}
	&\de_p \chi_\sigma(z;p_\sigma(t)) = 	\\
	&\sum_{\sigma' = \pm} \ip{\chi_{\sigma'}(\cdot;p_\sigma(t))}{\de_p \chi_\sigma(\cdot;p_\sigma(t))} \chi_{\sigma'}(z;p_\sigma(t)) + P^\perp_\pm(p_\sigma(t)) \chi_\sigma(z;p_\sigma(t)).
\end{split}
\end{equation}
\end{proof}
\medskip

By Proposition \ref{yet-better} the smallness condition \eqref{eq:smallness_condition} may be studied with the expression 
\eqref{eq:better_form_again} in place of \eqref{eq:epsilon_terms}.  We proceed in two steps. 
\begin{itemize}
\item[(A)] We first use certain degrees of freedom to eliminate `in-band' contributions to \eqref{eq:better_form_again}. \item[(B)] We will then be left with contributions which relate to the coupling of bands revealed in analysis of the breakdown of the single-band approximation.
\end{itemize}

{\it Step A:}  We first choose $a^0_{\sigma,inner}(y_\sigma,t)$  so that:
\begin{equation} \label{eq:envelope_equations}
i \de_t a^0_{\sigma,inner}(y_\sigma,t) = \mathscr{H}_\sigma(t) a^0_{\sigma,inner}(y_\sigma,t),\qquad \sigma=\pm\ .
\end{equation}

We also require that  $z\mapsto f^2_{\sigma,inner}(y_\sigma,z,t,s)$ be a $1-$ periodic solution of:
\begin{equation}\label{f2-eqn}
\begin{split}
	&\left[ \vphantom{\frac{1}{2}} H(p_\sigma(t)) - E_\sigma(p_\sigma(t)) \right] \left( \vphantom{\frac{1}{2}} f^2_{\sigma,inner}(y_\sigma,z,t,s) \right. \\
	&\left. - (- i \de_{y_\sigma}) a^1_{\sigma,inner}(y_\sigma,t,s) \de_{p_\sigma} \chi_\sigma(z;p_\sigma(t)) - \frac{1}{2} (- i \de_{y_\sigma})^2 a^0_{\sigma,inner}(y_\sigma,t) \de^2_{p_\sigma} \chi_\sigma(z;p_\sigma(t)) \vphantom{\frac{1}{2}} \right) \\
	&= - i \de_{q_\sigma} W(q_\sigma(t)) a^0_{\sigma,inner}(y_\sigma,t,s) P^\perp_{\pm}(p_\sigma(t)) \de_{p_\sigma} \chi_\sigma(z;p_\sigma(t))\ .
\end{split}
\end{equation}
Equation \eqref{f2-eqn} is solvable, with a uniform bound in time on the inverse, for all $t$ near $t^*$ by Corollary \ref{cor:away_from_crossing_regular}. Hence we have for $\sigma = \pm$:
\begin{equation} \label{eq:f_pm_inner}
\begin{split}
	&\ f^2_{\sigma,inner}(y_\sigma,z,t,s) = a^2_{\sigma,inner}(y_\sigma,t,s) \chi_{\sigma}(z;p_\sigma(t)) \\
	&+ (- i \de_{y_\sigma}) a^1_{\sigma,inner}(y_\sigma,s,t) \de_p \chi_\sigma(z;p_\sigma(t)) + \frac{1}{2} (- i \de_{y_\sigma})^2 a^0_{\sigma,inner}(y_\sigma,t) \de_p^2 \chi_\sigma(z;p_\sigma(t))	\\
	&- i \de_q W(q_\sigma(t))\ a^0_{\sigma,inner}(y_\sigma,t,s)\ \mathcal{R}_\sigma(p_\sigma(t)) P^\perp_\pm(p_\sigma(t)) \de_p \chi_\sigma(z;p_\sigma(t)).
\end{split}
\end{equation}
Here, $a^2_{\sigma,inner}(y_\sigma,t,s)$ is presently arbitrary and can be determined at higher order in $\epsilon^{1/2}$. \medskip

The initial data for equations \eqref{eq:envelope_equations} is fixed by the requirement that $\psi^\epsilon_{app,inner}$ satisfy (P1) of Proposition \ref{prop:idea_of_proof}. By inspection of the incoming solution, we see that this is equivalent to requiring that:
\begin{equation} \label{eq:requirement_for_P1}
	\text{for $t \in (t^* - \epsilon^{\xi'},t^* - \epsilon^\xi)$}: \text{ } a_{+,inner}^0(y,t) = a_+^0(y,t) \text{ and } a_{-,inner}(y,t) = 0.
\end{equation}
The only choice of initial data $a^0_{+,inner}(y,t^*)$, $a^0_{-,inner}(y,t^*)$ for \eqref{eq:envelope_equations} consistent with \eqref{eq:requirement_for_P1} are :
\[ a^0_{+,inner}(y,t^*) = a^0_+(y,t^*) = a^{0,*}(y),\quad a^0_{-,inner}(y,t^*) = 0\ .\]
Here, $a^{0,*}(y) := \lim_{t \uparrow t^*} a^0(y,t)$ \eqref{eq:defs_of_stars}. Indeed, for all $t \in (t^* - \epsilon^{\xi'},t^* + \epsilon^{\xi'})$: 
\begin{equation} \label{eq:inner_envelopes}
	a^0_{+,inner}(y_+,t) = a_+(y_+,t), \text{ and } a^0_{-,inner}(y_-,t) = 0.
\end{equation}
The choices \eqref{eq:envelope_equations}, \eqref{eq:f_pm_inner}, and \eqref{eq:inner_envelopes} simplify  \eqref{eq:better_form_again} to:
\begin{equation} \label{eq:better_form_again_simplified}
\begin{split}
	& r^\epsilon_{inner,2}(x,t) = \epsilon^{-1/4} e^{i \left\{ S_+(t) + \epsilon^{1/2} p_+(t) y_+ \right\} / \epsilon}\left\{ \vphantom{\frac{1}{2}} - i \de_s a^1_{+,inner}(y_+,t,s) \chi_+(z;p_+(t)) \right.	\\
	&\left. + i \de_{q_+} W(q_+(t)) a^0_{+,inner}(y_+,t,s) \ip{ \chi_{-}(\cdot;p_+(t))}{\de_{p_+} \chi_+(\cdot;p_+(t)) } \chi_{-}(z;p_+(t)) \vphantom{\frac{1}{2}} \right\}	\\ 
	&\left. \epsilon^{-1/4} e^{i \left\{ S_-(t) + \epsilon^{1/2} p_-(t) y_- \right\} / \epsilon}\left\{ \vphantom{\frac{1}{2}} - i \de_s a^1_{-,inner}(y_-,t,s) \chi_-(z;p_-(t)) \right\} \right|_{y_\sigma = \frac{x - q_\sigma(t)}{\epsilon^{1/2}}, z = \frac{x}{\epsilon}, s = \frac{t - t^*}{\epsilon^{1/2}}}.
\end{split}
\end{equation}
 
 We find that at this order in $\epsilon^{1/2}$ that there is no loss in taking the functions $a^1_{\sigma,inner}$, $\sigma = \pm$ to be independent of $t$:
 \[ a^1_{+,inner}(y_+,t,s)=a^1_{+,inner}(y_+,s),\qquad a^1_{-,inner}(y_-,t,s)=a^1_{-,inner}(y_-,s).\]
 From \eqref{eq:better_form_again_simplified} it is natural to set:
\begin{equation} \label{eq:a_+}
\begin{split}
	&\de_s a^1_{+,inner}(y_+,s) = 0 	\\
	&a^1_{+,inner}(y_+,0) = a^1_{+,inner,0}(y_+).
\end{split}
\end{equation}
and to choose $a^1_{-,inner}(y_+,s)$ to eliminate the projection of  $r^\epsilon_{inner,2}(x,t)$ onto  the vector $\chi_{-}(z;p_+(t))$. The function $a^1_{+,inner,0}(y_+)$ is at this point arbitrary, it will be fixed below by enforcing (P1) of Proposition \ref{prop:idea_of_proof}. 

Taking $a^1_{+,inner}(y_+,s)$ to satisfy \eqref{eq:a_+} reduces \eqref{eq:better_form_again_simplified} to the following:
\begin{equation} \label{eq:reduced_epsilon_terms}
\begin{split}
	& r^\epsilon_{inner,2}(x,t) = \epsilon^{-1/4} e^{i \left\{ S_+(t) + \epsilon^{1/2} p_+(t) y_+ \right\} / \epsilon} \left[ \vphantom{\ip{}{}_{L^2_x}} \vphantom{a^0_+} \right. \\
	&\left. i \de_q W(q_+(t)) a_{+}(y_+,t) \ip{\chi_-(\cdot;p_+(t))}{\de_p \chi_{+}(\cdot;p_+(t))} \chi_-(z;p_+(t)) \right]  \\
	&\left. + \epsilon^{-1/4} e^{i \left\{ S_-(t) + \epsilon^{1/2} p_-(t) y_- \right\} / \epsilon} \left[ - i \de_s a^1_{-,inner}(y_-,s) \chi_-(z;p_-(t)) \right] \vphantom{\frac{1}{2}} \right|_{y_\sigma = \frac{x - q_\sigma(t)}{\epsilon^{1/2}}, z = \frac{x}{\epsilon}, s = \frac{t - t^*}{\epsilon^{1/2}}}.
\end{split}
\end{equation}

We next determine the evolution of $a^1_{-,inner}(y_-,s)$ to satisfy the smallness condition \eqref{eq:smallness_condition}. 
We find it useful at this point to re-express functions of $t$ and $ y_+$ in terms of the variables $y_-$ and $ s$ using the relations:
\begin{equation}y_+ = y_- + \frac{ q_-(t) - q_+(t) }{\epsilon^{1/2}},\qquad t = t^* + \epsilon^{1/2} s.
\end{equation}
This yields:
\begin{equation} \label{eq:remaining_evaluated}
\begin{split}
	&r^\epsilon_{inner,2}(x,t^*+\epsilon^{1/2}s)\\
	& =\ \epsilon^{-1/4} e^{i \left\{ S_+(t^* + \epsilon^{1/2}s) + \epsilon^{1/2} p_+(t^* + \epsilon^{1/2}s) y_- + p_+(t^* + \epsilon^{1/2}s) ( q_-(t^* + \epsilon^{1/2}s) - q_+(t^* + \epsilon^{1/2} s) ) \right\} / \epsilon} \left[ \vphantom{\frac{1}{2}} \vphantom{ a^0_{+} } \right. \\
	&i \de_q W(q_+(t^* + \epsilon^{1/2} s)) a_{+}\left(y_- + \frac{ q_-(t^* + \epsilon^{1/2}s) - q_+(t^* + \epsilon^{1/2}s) }{\epsilon^{1/2}} ,t^* + \epsilon^{1/2}s\right) \\
	&\left. \times \ip{\chi_-(\cdot;p_+(t^* + \epsilon^{1/2} s))}{\de_p \chi_{+}(\cdot;p_+(t^* + \epsilon^{1/2}s))} \chi_-(z;p_+(t^* + \epsilon^{1/2}s)) \right]  \\
	&+ \epsilon^{-1/4} e^{i \left\{ S_-(t^* + \epsilon^{1/2}s) + \epsilon^{1/2} p_-(t^* + \epsilon^{1/2}s) y_- \right\} / \epsilon} \left[ \vphantom{a^1_{inner} \epsilon^{-1/2}} \right.	\\
	&\left. \left. - i \de_s a^1_{-,inner}(y_-,s) \chi_-(z;p_-(t^* + \epsilon^{1/2}s)) \right] \vphantom{\frac{1}{2}} \right|_{y_- = \frac{x - q_-(t^* + \epsilon^{1/2} s)}{\epsilon^{1/2}}, z = \frac{x}{\epsilon}}.
\end{split}
\end{equation}
In terms of $s$, the condition \eqref{eq:smallness_condition} reads: 
\begin{equation}\label{eq:smallness_condition_again}
	\int_{- \epsilon^{\xi' - 1/2}}^{\epsilon^{\xi' - 1/2}} \| r^\epsilon_{inner,2}(\cdot,t^*+\epsilon^{1/2}s)\|_{_{L^2}}  = o(1).
\end{equation}
We proceed with the construction of $a^1_{-,inner}(y_-,s)$ by seeking the  expression 
 in $r^\epsilon_{inner,2}(x,t^*+\epsilon^{1/2}s)$ which, to leading order, will be balanced (indeed cancelled out by) the term proportional to 
 $\de_s a^1_{-,inner}(y_-,s)$,  for $- \epsilon^{\xi' - 1/2}<s<\epsilon^{\xi' - 1/2}$\ ($0<\xi'<1/2$).

Thus we expand the expression for $r^\epsilon_{inner,2}(x,t^*+\epsilon^{1/2}s)$ in powers of $\epsilon^{1/2}s$,
 making use of the equations governing $(q_\pm(t),p_\pm(t))$ and $S_\pm(t)$ \eqref{eq:+_band_classical_system}, \eqref{eq:-_band_classical_system}, Definitions \ref{def:S_a_b_extensions} and \ref{def:-_band_functions} to compute their derivatives. We first Taylor-expand the expression within square brackets in \eqref{eq:remaining_evaluated}:
 \begin{equation}
\begin{split}
	&r^\epsilon_{inner,2}(x,t^*+\epsilon^{1/2}s)\\ 
	&=\ \epsilon^{-1/4} e^{i \left\{ S_+(t^* + \epsilon^{1/2}s) + \epsilon^{1/2} p_+(t^* + \epsilon^{1/2}s) y_- + p_+(t^* + \epsilon^{1/2}s) ( q_-(t^* + \epsilon^{1/2}s) - q_+(t^* + \epsilon^{1/2} s) ) \right\} / \epsilon} \left[ \vphantom{\frac{1}{2}} \vphantom{ a^0_{+} } \right. \\
	&\left. \vphantom{\frac{1}{2}} i \de_q W(q^*)\ a_{+}\left(y_- + [ \de_p E_-(p^*) - \de_p E_+(p^*) ] s ,t^*\right)\ \ip{\chi_-(\cdot;p^*)}{\de_p \chi_{+}(\cdot;p^*)} \chi_-(z;p^*) \right]  \\
	&+ \epsilon^{-1/4} e^{i \left\{ S_-(t^* + \epsilon^{1/2}s) + \epsilon^{1/2} p_-(t^* + \epsilon^{1/2}s) y_- \right\} / \epsilon} \left[ - i \de_s a^1_{-,inner}(y_-,s) \chi_-(z;p^*) \right] \left. \vphantom{\frac{1}{2}} \right|_{y_- = \frac{x - q_-(t^* + \epsilon^{1/2} s)}{\epsilon^{1/2}}, z = \frac{x}{\epsilon}} \\
	&+ O_{L^2_x}(\epsilon^{1/2} s, \epsilon^{1/2} s^2).
\end{split}
\end{equation}
Rearranging terms, we obtain:
\begin{equation}
\begin{split}
	&r^\epsilon_{inner,2}(x,t^*+\epsilon^{1/2}s)\\
	&=\ \epsilon^{-1/4} e^{i \left\{ S_-(t^* + \epsilon^{1/2}s) + \epsilon^{1/2} p_-(t^* + \epsilon^{1/2}s) y_- \right\} / \epsilon} i \chi_-(z;p^*)   \\
	&\times \left\{ \vphantom{\frac{1}{2}} e^{i \left\{ S_+(t^* + \epsilon^{1/2}s) - S_-(t^* + \epsilon^{1/2}s) + \epsilon^{1/2} ( p_+(t^* + \epsilon^{1/2}s) - p_-(t^* + \epsilon^{1/2}s) ) y_- + p_+(t^* + \epsilon^{1/2}s) ( q_-(t^* + \epsilon^{1/2}s) - q_+(t^* + \epsilon^{1/2} s) ) \right\} / \epsilon} \right. \\
	&\times \left[ \de_q W(q^*)\ a_{+}\left(y_- + [ \de_p E_-(p^*) - \de_p E_+(p^*) ] s ,t^*\right)\ \ip{\chi_-(\cdot;p^*)}{\de_p \chi_{+}(\cdot;p^*)} \right]  \\
	&\left. - \de_s a^1_{-,inner}(y_-,s) \vphantom{\frac{1}{2}} \right\} \left. \vphantom{\frac{1}{2}} \right|_{y_- = \frac{x - q_-(t^* + \epsilon^{1/2}s)}{\epsilon^{1/2}}, z = \frac{x}{\epsilon}}	\\
	&+ O_{L^2_x}(\epsilon^{1/2} s, \epsilon^{1/2} s^2).
\end{split}
\end{equation}
We next Taylor-expand the exponential:
\begin{equation}
\begin{split}
	&S_+(t^* + \epsilon^{1/2}s) - S_-(t^* + \epsilon^{1/2}s) 	\\
	&= ( \epsilon^{1/2} s ) p^* ( \de_p E_+(p^*) - \de_p E_-(p^*) ) 	\\
	&+ \frac{1}{2} ( \epsilon^{1/2} s )^2 \left( \de_q W(q^*) p^* ( \de_p^2 E_-(p^*) - \de_p^2 E_+(p^*) ) + \de_q W(q^*) ( \de_p E_-(p^*) - \de_p E_+(p^*) ) \right)	\\
	&+ O(\epsilon^{3/2} s^3) 	\\
	&\epsilon^{1/2} ( p_+(t^* + \epsilon^{1/2}s) - p_-(t^* + \epsilon^{1/2}s) ) y_- 	\\
	&= O(\epsilon^{3/2} s^2 y_-) 	\\
	&p_+(t^* + \epsilon^{1/2}s) ( q_-(t^* + \epsilon^{1/2}s) - q_+(t^* + \epsilon^{1/2} s) )	\\
	&= ( \epsilon^{1/2} s )( p^* ( \de_p E_-(p^*) - \de_p E_+(p^*) ) )	\\
	&+ \frac{1}{2} (\epsilon^{1/2} s)^2 \left( - 2 \de_q W(q^*) ( \de_p E_-(p^*) - \de_p E_+(p^*) ) + \de_q W(q^*) p^* (\de_p^2 E_+(p^*) - \de_p^2 E_-(p^*) ) \right)	\\
	&+ O(\epsilon^{3/2} s^3).
\end{split}
\end{equation}
Substituting these expressions and using the fact that $a_+(y,t) \in \mathcal{S}(\field{R})$ gives:
\begin{equation}
\begin{split}
	&r^\epsilon_{inner,2}(x,t^*+\epsilon^{1/2}s)\ =\ \epsilon^{-1/4} e^{i \left\{ S_-(t^* + \epsilon^{1/2}s) + \epsilon^{1/2} p_-(t^* + \epsilon^{1/2}s) y_- \right\} / \epsilon} i \chi_-(z;p^*)   \\
	&\times \left\{ e^{i \frac{1}{2} \de_q W(q^*) ( \de_p E_+(p^*) - \de_p E_-(p^*) ) s^2 } \right. \\
	&\times \left[ \vphantom{\frac{1}{2}} \de_q W(q^*)\ \ip{\chi_-(\cdot;p^*)}{\de_p \chi_{+}(\cdot;p^*)}\ a_{+}\left(y_- + [ \de_p E_-(p^*) - \de_p E_+(p^*) ] s ,t^*\right)\ 
	 \right]  \\
	&\left. - \de_s a^1_{-,inner}(y_-,s) \vphantom{\frac{1}{2}} \right\} \left. \vphantom{\frac{1}{2}} + O_{L^2_x}(\epsilon^{1/2} s, \epsilon^{1/2} s^2, \epsilon^{1/2} s^3) \right|_{y_- = \frac{x - q_-(t)}{\epsilon^{1/2}}, z = \frac{x}{\epsilon}, s = \frac{t - t^*}{\epsilon^{1/2}}}.
\end{split}
\end{equation}
It follows that by taking $a^1_{-,inner}(y_-,s)$ to satisfy:
\begin{equation}
\begin{split}
	&\de_s a^1_{-,inner}(y_-,s) = e^{i \frac{1}{2} \de_q W(q^*) ( \de_p E_+(p^*) - \de_p E_-(p^*) ) s^2 }  \\
	&\times \de_q W(q^*) a_{+}\left(y_- + [ \de_p E_-(p^*) - \de_p E_+(p^*) ] s ,t^*\right) \ip{\chi_-(\cdot;p^*)}{\de_p \chi_{+}(\cdot;p^*)}	\\
	&a^1_{-,inner}(y_-,0) = a^1_{-,inner,0}(y_-).
\end{split}
\end{equation}
We have that $\psi^\epsilon_{app,inner}(x,t)$ satisfies \eqref{eq:smallness_condition_again}, and therefore (P2) of Proposition \ref{prop:idea_of_proof}, provided  $3/8 < \xi'< 1/2$. That is, for  $3/8 < \xi' < 1/2$, we have
\begin{equation}
\begin{split}
	&\inty{- \epsilon^{\xi' - 1/2}}{\epsilon^{\xi' - 1/2}}{ \| O_{L^2_x}(\epsilon^{1/2}s, \epsilon^{1/2}s^2, \epsilon^{1/2}s^3) \|_{L^2_x} }{s} = O(\epsilon^{2 \xi'-1/2},\epsilon^{3\xi' - 1},\epsilon^{4\xi'-3/2})\ =\ o(1)\ .
\end{split}
\end{equation}

The initial data choices $a^1_{+,inner,0}(y_+)$ and $a^1_{-,inner,0}(y_-)$ are forced by the requirement that $\psi^\epsilon_{app,inner}(x,t)$ satisfies the matching condition (P1) of Proposition \ref{prop:idea_of_proof}. Since these terms appear at order $\epsilon^{1/2}$ in the asymptotic expansion, for (P1) to hold it is sufficient that for $s \in (- \epsilon^{\xi' - 1/2},t^* - \epsilon^{\xi - 1/2})$: $a^1_{+,inner}(y_+,s) - a^1_+(y_+,t^* + \epsilon^{1/2}s) = o_{L^2_{y_+}}(1)$ and $a^1_{-,inner}(y_-,s) = o_{L^2_{y_-}}(1)$. 

We claim that we may ensure this by taking:
\begin{equation}
	a^1_{+,inner,0}(y_+) = a^1_+(y_+,t^*) = a^{1,*}(y_+),
\end{equation}
(recall: $a^{1,*}(y) := \lim_{t \uparrow t^*} a^1(y,t)$ \eqref{eq:defs_of_stars}), and:
\begin{equation} \label{eq:choice_for_a_-_0}
\begin{split}
	&a^1_{-,inner,0}(y_-) =\   \de_q W(q^*)\times \ip{\chi_-(\cdot;p^*)}{\de_p \chi_{+}(\cdot;p^*)}\\
&	\times  \int_{-\infty}^{0} e^{i \frac{1}{2} \de_q W(q^*) ( \de_p E_+(p^*) - \de_p E_-(p^*) ) (s')^2 } 	 a_{+}\left(y_- + [ \de_p E_-(p^*) - \de_p E_+(p^*) ] s' ,t^*\right) \text{d}s'.
\end{split}
\end{equation}
This claim follows from Taylor-expansion: 
\begin{equation}
\begin{split}
	&\text{for all $s \in (- \epsilon^{\xi' - 1/2},t^* - \epsilon^{\xi - 1/2})$}: 	\\
	&a^1_+(y_+,t^* + \epsilon^{1/2}s) - a^1_+(y_+,t^*) = O_{L^2_{y_+}}(\epsilon^{1/2}s) = O_{L^2_{y_+}}(\epsilon^{\xi'}) = o_{L^2_{y_+}}(1) 
\end{split}
\end{equation}
since $3/8 < \xi' < 1/2$ and from integration by parts, which shows that for $s \in (- \epsilon^{\xi'-1/2},-\epsilon^{\xi-1/2})$: 
\begin{equation}
\begin{split}
	&a^1_{-,inner}(y_-,s) = \int_{-\infty}^{s} e^{i \frac{1}{2} \de_q W(q^*) ( \de_p E_+(p^*) - \de_p E_-(p^*) ) (s')^2 } 	\\
	&\times \de_q W(q^*) a_{+}\left(y_- + [ \de_p E_-(p^*) - \de_p E_+(p^*) ] s' ,t^*\right) \ip{\chi_-(\cdot;p^*)}{\de_p \chi_{+}(\cdot;p^*)} \text{d}s'	\\
	&= O_{L^2_{y_-}}(\epsilon^{1/2 - \xi}).
\end{split}
\end{equation}
Since $\xi < 1/2$ by assumption, we are done.

 It remains to show (P3) of Proposition \ref{prop:idea_of_proof}. But by an identical argument, for $s \in (\epsilon^{\xi-1/2},\epsilon^{\xi'-1/2})$:
\begin{equation}
\begin{split}
	&a^1_{-,inner}(y_-,s) = \de_q W(q^*) \times \ip{\chi_-(\cdot;p^*)}{\de_p \chi_{+}(\cdot;p^*)} \\
	& \int_{-\infty}^{\infty} e^{i \frac{1}{2} \de_q W(q^*) ( \de_p E_+(p^*) - \de_p E_-(p^*) ) (s')^2 } \  a_{+}\left(y_- + [ \de_p E_-(p^*) - \de_p E_+(p^*) ] s' ,t^*\right) \text{d}s'	\\
	&+ O(\epsilon^{\xi' - 1/2}),
\end{split}
\end{equation}
so that for $\sigma = \pm$: 
\begin{equation}
	a^1_{\sigma,inner}(y_\sigma,t^* + \epsilon^{1/2} s) - a^1_\sigma(y_\sigma,t^*) = O_{L^2_{y_\sigma}}(\epsilon^{\xi'}) = o_{L^2_{y_\sigma}}(1) 
\end{equation}
for $t-t^*=\epsilon^{1/2}s \in (\epsilon^{\xi},\epsilon^{\xi'})$. It follows that $\psi^\epsilon_{app,inner}(x,t)$ so constructed satisfies all hypotheses of Proposition \ref{prop:idea_of_proof}, and so the proof of Theorem \ref{th:band_crossing_theorem} is complete.

\begin{appendices}
\section{Consistency of Theorem \ref{th:band_crossing_theorem} with `Landau-Zener' theory} \label{app:consistency} 
In this Appendix we:
\begin{enumerate}
\item Derive the precise magnitude (in $L^2$) of the `excited' wavepacket from the main statement of Theorem \ref{th:band_crossing_theorem} \eqref{eq:statement_of_main_theorem} in terms of that of the initial, `incident', wavepacket (Section \ref{sec:consistency_part1})
\item Show how this result is consistent with the solution of a simplified `Landau-Zener'-type model (Section \ref{sec:consistency_part2}). 
\end{enumerate}
\subsection{Magnitude of wave `excited' at the crossing} \label{sec:consistency_part1}
Let $t \in (t^*,\tilde{T}]$. According to the main statement of Theorem \ref{th:band_crossing_theorem} \eqref{eq:statement_of_main_theorem}, the wave `excited' at the crossing time $t^*$ is of the form: 
\begin{equation} \label{eq:def_excited_WP}
\begin{split}
	&\text{WP}^\epsilon_{\text{excited}}(x,t) := 	\\
	&\sqrt\epsilon \; \text{WP}^{0,\epsilon}[\mathfrak{S}_-(t),\mathfrak{q}_-(t),\mathfrak{p}_-(t),\mathfrak{a}^0_-(y,t),\mathfrak{X}_-(z;\mathfrak{p}_-(t))](x,t) + o_{L^2_x}(\sqrt{\epsilon})
\end{split}
\end{equation}
where the parameters of the excited wavepacket: $\mathfrak{S}_-(t)$, $\mathfrak{q}_-(t)$, $\mathfrak{p}_-(t)$, $\mathfrak{a}^0_-(y,t)$, $\mathfrak{X}_-(z;\mathfrak{p}_-(t))$ were defined in Proposition \ref{prop:smooth_continuation_again} and Definition \ref{def:-_band_functions}. The notation $\text{WP}^{0,\epsilon}[...](x,t)$ was defined in \eqref{eq:WP_notation_0}. 


Expanding the short-hand notation, \eqref{eq:def_excited_WP} becomes:
\begin{equation} \label{eq:expanded_excited_WP}
\begin{split}
	&\text{WP}_{\text{excited}}^\epsilon(x,t) =	\\
	&\sqrt{\epsilon} \times \epsilon^{-1/4} e^{ i \mathfrak{S}_-(t) / \epsilon } e^{i \mathfrak{p}_-(t) ( x - \mathfrak{q}_-(t) ) / \epsilon} \mathfrak{a}^0_-\left( \frac{x - \mathfrak{q}_-(t)}{\sqrt\epsilon}, t \right) \mathfrak{X}_-\left(\frac{x}{\epsilon};\mathfrak{p}_-(t)\right) + o_{L^2_x}(\sqrt{\epsilon}). 
\end{split}
\end{equation}
Using 1-periodicity in $z$ of the function $\mathfrak{X}_-(z;\mathfrak{p}_-(t))$ for every $t$ and Lemma 4.1 of \cite{2017WatsonWeinsteinLu} we have that:
\begin{equation}
	\| \text{WP}_{\text{excited}}^\epsilon(\cdot,t) \|^2_{L^2} = \epsilon \; \| \mathfrak{a}^0_-(\cdot,t) \|^2_{L^2} \| \mathfrak{X}_-(\cdot;\mathfrak{p}_-(t)) \|^2 + o(\epsilon).  
\end{equation}
Using normalization of $\mathfrak{X}_-(\cdot;\mathfrak{p}_-(t))$ ($\| \mathfrak{X}_-(\cdot;\mathfrak{p}_-(t)) \| = 1$ for every $t$), and preservation of the $L^2_y$-norm by the equations defining $\mathfrak{a}_-(y,t)$ (recall Definition \ref{def:S_a_b_extensions} and \eqref{eq:envelope_equation}), we have that: 
\begin{equation} \label{eq:excited_and_incident_waves_in_terms_envelopes}
	\| \text{WP}_{\text{excited}}^\epsilon(\cdot,t) \|^2_{L^2} = \epsilon \; \| a^{0}_-(\cdot,t^*) \|^2_{L^2} + o(\epsilon). 
\end{equation}
The precise form of $a^0_-(y,t^*)$ was derived in Section \ref{sec:inner_solution} and is displayed in \eqref{eq:L_Z_connection}. Taking the $L^2$-norm and squaring both sides of this expression we have:
\begin{equation} \label{eq:estimate_L_Z_connection}
\begin{split}
	&\| {a}^0_-(\cdot,t^*) \|^2_{L^2} = | \de_q W(q^*) |^2 \times |\ip{\chi_-(\cdot;p^*)}{\de_p \chi_+(\cdot;p^*)} |^2  	\\
	&\times \left\| \int_{-\infty}^{\infty} e^{ i [\de_q W(q^*)][\de_p E_+(p^*) - \de_p E_-(p^*)] \tau^2 / 2 }
	\times  {a}^{0,*}(y - [\de_p E_+(p^*) - \de_p E_-(p^*)] \tau) \, \mathrm{{d}}\tau\ \right\|^2_{L^2_y}.
\end{split}
\end{equation}
By changing variables (recall Property \ref{band_crossing_assumption} (A4) that $\de_p E_+(p^*) - \de_p E_-(p^*) \neq 0$) we can write the integral as a convolution: 
\begin{equation} \label{eq:estimate_L_Z_connection_again}
\begin{split}
	&\| {a}^0_-(\cdot,t^*) \|^2_{L^2} = \frac{ | \de_q W(q^*) |^2 \times |\ip{\chi_-(\cdot;p^*)}{\de_p \chi_+(\cdot;p^*)} |^2 }{ | \de_p E_+(p^*) - \de_p E_-(p^*) |^2 }  	\\
	&\times \left\| \int_{-\infty}^\infty e^{ \frac{ i [\de_q W(q^*)] \tau^2 }{ 2 [\de_p E_+(p^*) - \de_p E_-(p^*)] } } \times {a}^{0,*}(y - \tau) \, \mathrm{{d}}\tau\ \right\|^2_{L^2_y}.
\end{split}
\end{equation}
Applying Plancharel's theorem and then the convolution theorem yields that:
\begin{equation} \label{eq:estimate_L_Z_connection_again_again}
\begin{split}
	&\| {a}^0_-(\cdot,t^*) \|^2_{L^2} = \frac{ | \de_q W(q^*) |^2 \times |\ip{\chi_-(\cdot;p^*)}{\de_p \chi_+(\cdot;p^*)} |^2 }{ | \de_p E_+(p^*) - \de_p E_-(p^*) |^2 }  	\\
	&\times \left\| \mathcal{F}_\tau \left\{ e^{ \frac{ i [\de_q W(q^*)] \tau^2 }{ 2 [\de_p E_+(p^*) - \de_p E_-(p^*)] } } \right\}(\xi) \times \mathcal{F}_y \left\{ {a}^{0,*}(y) \right\}(\xi) \right\|^2_{L^2_\xi},
\end{split}
\end{equation}
where $\mathcal{F}_\tau$ denotes the Fourier transform with respect to the variable $\tau$:
\begin{equation}
	\mathcal{F}_\tau \{ f(\tau) \}(\xi) := \inty{- \infty}{\infty}{ e^{- 2 \pi i \xi \tau} f(\tau) }{\tau}.
\end{equation}
A standard calculation shows that:
\begin{equation} \label{eq:gaussian_FT}
\begin{split}
	&\mathcal{F}_\tau \left\{ e^{ \frac{ i [\de_q W(q^*)] \tau^2 }{ 2 [\de_p E_+(p^*) - \de_p E_-(p^*)] } } \right\}(\xi) =  \\
	&\frac{ \sqrt{\pi} (1 + i) \sqrt{ [\de_p E_+(p^*) - \de_p E_-(p^*)] }}{ \sqrt{ [\de_q W(q^*)] } } e^{\frac{ - i 2 \pi^2 [\de_p E_+(p^*) - \de_p E_-(p^*)] \xi^2 }{ [\de_q W(q^*)] } }. 
\end{split}
\end{equation}
Substituting \eqref{eq:gaussian_FT} into \eqref{eq:estimate_L_Z_connection_again_again} and again using Plancharel's theorem yields:
\begin{equation} \label{eq:L_Z_con}
	\| {a}^0_-(\cdot,t^*) \|^2_{L^2} = \frac{ 2 \pi | \de_q W(q^*) | |\ip{\chi_-(\cdot;p^*)}{\de_p \chi_+(\cdot;p^*)} |^2 \| a^{0,*}(\cdot) \|^2_{L^2} }{ | \de_p E_+(p^*) - \de_p E_-(p^*) | }.
\end{equation}
Recall that $a^{0,*}(y)$ is defined as the limit: $a^{0,*}(y)$ $:= \lim_{t \uparrow t^*} a^0(y,t)$ where $a^0(y,t)$ is the leading-order envelope of the incident wavepacket and satisfies \eqref{eq:envelope_equation}. By the $L^2_y$-norm conserving property of \eqref{eq:envelope_equation}, we have that:
\begin{equation} \label{eq:a_star_equal_a_0}
	\| a^{0,*}(\cdot) \|^2_{L^2} = \lim_{t \uparrow t^*} \| a^0(\cdot,t) \|_{L^2} = \| a^0(\cdot,0) \|^2_{L^2} = \| a^0_0(\cdot) \|^2_{L^2}.
\end{equation}
Substituting \eqref{eq:a_star_equal_a_0} into \eqref{eq:L_Z_con} we obtain:
\begin{equation} \label{eq:excited_wv_magnitude}
	\| a^{0}_-(\cdot,t^*) \|^2_{L^2} = \frac{ 2 \pi | \de_q W(q^*) | |\ip{\chi_-(\cdot;p^*)}{\de_p \chi_+(\cdot;p^*)} |^2 \| a_0^0(\cdot) \|^2_{L^2} }{ | \de_p E_+(p^*) - \de_p E_-(p^*) | }.
\end{equation}
Combining \eqref{eq:excited_wv_magnitude} with \eqref{eq:excited_and_incident_waves_in_terms_envelopes} we have that:
\begin{equation} \label{eq:LZ}
\begin{split}
	&\| \text{WP}_{\text{excited}}^\epsilon(\cdot,t) \|^2_{L^2} = 	\\
	&\qquad \frac{ 2 \pi | \de_q W(q^*) | |\ip{\chi_-(\cdot;p^*)}{\de_p \chi_+(\cdot;p^*)} |^2 \| a^0_0(\cdot) \|^2_{L^2} }{ | \de_p E_+(p^*) - \de_p E_-(p^*) | } \times \epsilon + o(\epsilon).
\end{split}
\end{equation}
For the purposes of comparing our result with `Landau-Zener' theory, we seek an expression for the magnitude of the excited wavepacket \eqref{eq:LZ} in terms of that of the initial wavepacket. First, recall that the initial wavepacket has the form \eqref{eq:bloch_wavepacket_initial_data}:
\begin{equation}
	\text{WP}_{\text{initial}}^\epsilon(x) := \text{WP}^{1,\epsilon}[S_0,q_0,p_0,a^0_0(y),a^1_0(y),\chi_n(z;p_0)](x). 
\end{equation}
By an identical calculation to that given above \eqref{eq:expanded_excited_WP}-\eqref{eq:excited_and_incident_waves_in_terms_envelopes}, we see that:
\begin{equation}
	\| \text{WP}_{\text{initial}}^\epsilon(\cdot) \|^2_{L^2} = \| a^0_0(\cdot) \|^2_{L^2} + o(\sqrt\epsilon),
\end{equation}
and hence:
\begin{equation} \label{eq:LZ_again}
\begin{split}
	&\| \text{WP}_{\text{excited}}^\epsilon(\cdot,t) \|^2_{L^2} = 	\\
	&\qquad \frac{ 2 \pi | \de_q W(q^*) | |\ip{\chi_-(\cdot;p^*)}{\de_p \chi_+(\cdot;p^*)} |^2 }{ | \de_p E_+(p^*) - \de_p E_-(p^*) | } \times \epsilon \times \| \text{WP}^\epsilon_{\text{initial}} (\cdot) \|^2_{L^2} + o(\epsilon).
\end{split}
\end{equation}
Using $\dot{p}_+(t) = - \de_q W(q_+(t))$ \eqref{eq:+_band_classical_system}, we may re-write \eqref{eq:LZ_again} in the equivalent form: 
\begin{equation} \label{eq:LZ_time_derivatives}
\begin{split}
	&\| \text{WP}_{\text{excited}}^\epsilon(\cdot,t) \|^2_{L^2} = 	\\
	&\qquad \left. \frac{ 2 \pi |\ip{\chi_-(\cdot;p_+(\tau))}{\fdf{t} \chi_+(\cdot;p_+(\tau))} |^2 }{ | \fdf{\tau} \left( E_+(p_+(\tau)) - E_-(p_+(\tau)) \right) | } \right|_{\tau = t^*} \times \epsilon \times \| \text{WP}_{\text{initial}}^\epsilon(\cdot) \|^2_{L^2} + o(\epsilon)\ .
\end{split}
\end{equation}
In the next subsection we shall compare the expression \eqref{eq:LZ_time_derivatives} with 
 \eqref{eq:inter_band_transition_prob} of a Landau-Zener type model.

\subsection{Agreement with `Landau-Zener' theory} \label{sec:consistency_part2} 
In this section we show that the expression \eqref{eq:LZ_time_derivatives} may be derived from a Landau-Zener type model, a simple finite dimensional model which incorporates key spectral features of the exact problem.
 We consider the following Schr\"{o}dinger equation: 
\begin{equation} \label{eq:abstract_schro}
	i \epsilon \dot{\psi}^\epsilon = H(t) \psi^\epsilon, \quad t \geq 0. 
\end{equation}
Here, $H(t)$ denotes a time-dependent Hamiltonian whose spectrum consists of two time-dependent eigenvalue `bands' which are degenerate at some $t^* > 0$. Throughout this section, dots will denote derivatives with respect to $t$ hence $\dot{f}(t) := df(t)/dt$. More precisely, we assume the following on the operator $H(t)$:
\begin{equation}
	\text{Spec}[H(t)] = E_+(t) \cup E_-(t),
\end{equation}
where $E_+(t), E_-(t)$ denote real, smooth functions which satisfy:
\begin{equation}
\begin{split}
	&E_+(t) < E_-(t) \text{ for } t \in [0,t^*), \quad E_-(t) < E_+(t) \text{ for } t > t^*,	\\
	&\text{and } E_+(t^*) = E_-(t^*).
\end{split}
\end{equation}
We assume further that the crossing is \emph{linear} in the sense that: 
\begin{equation} \label{eq:linear_assumption}
	\left. \left( \dot{E}_+(t) - \dot{E}_-(t) \right) \right|_{t = t^*} > 0. 
\end{equation}
Let $\chi_\sigma(t)$, $\sigma = \pm$ denote smoothly varying and orthonormal eigenfunctions corresponding to the eigenvalues $E_\sigma(t)$, so that:
\begin{equation}
	H(t) \chi_\sigma(t) = E_\sigma(t) \chi_\sigma(t), \quad t \geq 0, 
\end{equation}
and $\ip{\chi_\sigma(t)}{\chi_{\sigma'}(t)} = \delta_{\sigma \sigma'}$. We may assume without loss of generality (by making an appropriate gauge transformation) that:
\begin{equation}
	\ip{ \chi_\sigma(t) }{ \dot{\chi}_\sigma(t) } = 0, \quad t \geq 0.
\end{equation}
This choice of gauge is sometimes known as the \emph{adiabatic gauge} (see Remark \ref{rem:adiabatic_gauge}). We now seek a solution of \eqref{eq:abstract_schro} of the form:
\begin{equation} \label{eq:ansatz}
	\psi^\epsilon(t) = \sum_{\sigma = \pm} c_\sigma(t) \chi_\sigma(t) e^{ \frac{ - i \inty{t^*}{t}{ E_\sigma(\tau) }{\tau} }{ \epsilon } }.
\end{equation}
Substituting \eqref{eq:ansatz} into \eqref{eq:abstract_schro} and projecting onto the subspaces spanned by each eigenfunction yields a system for the coefficients $c_\sigma(t)$, $\sigma = \pm$, which is equivalent to the original PDE \eqref{eq:abstract_schro}: 
\begin{equation} \label{eq:c_system}
\begin{split}
	&\fd{c_+}{t}(t) = \ip{\chi_+(t)}{\dot{\chi}_-(t)} e^{ \frac{ i \inty{t^*}{t}{ E_+(\tau) - E_-(\tau) }{\tau} }{ \epsilon } } c_-(t)	\\
	&\fd{c_-}{t}(t) = \ip{\chi_-(t)}{\dot{\chi}_+(t)} e^{ \frac{ i \inty{t^*}{t}{ E_-(\tau) - E_+(\tau) }{\tau} }{ \epsilon } } c_+(t).
\end{split}
\end{equation}
By differentiating $\ip{\chi_+(t)}{\chi_-(t)} = 0$ we obtain the identity:
\begin{equation}
	\ip{\chi_+(t)}{\dot{\chi}_-(t)} = - \overline{ \ip{\chi_-(t)}{ \dot{\chi}_+(t) } }, \quad t \geq 0.
\end{equation}
We may then write the system \eqref{eq:c_system} in the compact form:
\begin{equation} \label{eq:system_with_F}
\begin{split}
	&\fd{c_+}{t}(t) = F^\epsilon(t) c_-(t), \quad \fd{c_-}{t}(t) = - \overline{ F^\epsilon(t) } c_+(t),	\\
	&F^\epsilon(t) := - \overline{ \ip{ \chi_-(t) }{ \dot{\chi}_+(t) } } e^{ \frac{ i \inty{t^*}{t}{ E_+(\tau) - E_-(\tau) }{ \tau } }{ \epsilon } }.  
\end{split}
\end{equation}
Integrating \eqref{eq:system_with_F} in time gives:
\begin{equation} \label{eq:c_system_again}
\begin{split}
	&c_+(t) = c_+(0) + \inty{0}{t}{ F^\epsilon(\tau) c_-(\tau) }{\tau}	\\
	&c_-(t) = c_-(0) - \inty{0}{t}{ \overline{ F^\epsilon(\tau) } c_+(\tau) }{\tau}.
\end{split}
\end{equation}
At this point, fix $t > t^*$. A standard calculation using oscillations in $F^\epsilon$ \eqref{eq:system_with_F} and the linear crossing assumption \eqref{eq:linear_assumption} implies that: 
\begin{equation}
\begin{split}
	&c_+(t) = c_+(0) - \left. \frac{ \sqrt{\pi} (1 + i) \overline{ \ip{ \chi_-(\tau) }{ \dot{\chi}_+(\tau) } } c_-(\tau) }{ \sqrt{ \left( \dot{E}_+(\tau) - \dot{E}_-(\tau) \right) } } \right|_{\tau = t^*} \times \sqrt{\epsilon} + o(\sqrt{\epsilon})	\\
	&c_-(t) = c_-(0) + \left. \frac{ \sqrt{\pi} (1 - i) \ip{ \chi_-(\tau) }{ \dot{\chi}_+(\tau) } c_+(\tau) }{ \sqrt{ \left( \dot{E}_+(\tau) - \dot{E}_-(\tau) \right) } } \right|_{\tau = t^*} \times \sqrt{\epsilon} + o(\sqrt{\epsilon}).
\end{split}
\end{equation}
A similar calculation yields that $| c_\sigma(t^*) - c_\sigma(0) | = o(1)$, $\sigma = \pm$, and hence:
\begin{equation}
\begin{split}
	&c_+(t) = c_+(0) - \left. \frac{ \sqrt{\pi} (1 + i) \overline{ \ip{ \chi_-(\tau) }{ \dot{ \chi }_+(\tau) } } }{ \sqrt{ \left( \dot{E}_+(\tau) - \dot{E}_-(\tau) \right) } } \right|_{\tau = t^*} \times \sqrt{\epsilon} \times c_-(0) + o(\sqrt{\epsilon})	\\
	&c_-(t) = c_-(0) + \left. \frac{ \sqrt{\pi} (1 - i) \ip{ \chi_-(\tau) }{ \dot{\chi}_+(\tau) } }{ \sqrt{ \left( \dot{E}_+(\tau) - \dot{E}_-(\tau) \right) } } \right|_{\tau = t^*} \times \sqrt{\epsilon} \times c_+(0) + o(\sqrt{\epsilon}).
\end{split}
\end{equation}
To compare with \eqref{eq:LZ_time_derivatives}, we now set $c_-(0) = 0$ and square the absolute value of the expression for $c_-(t)$ to obtain:
\begin{equation} \label{eq:inter_band_transition_prob}
	| c_-(t) |^2 = \left. \frac{ 2 \pi | \ip{ \chi_-(\tau) }{ \dot{\chi}_+(\tau) } |^2 }{ \left| \left( \dot{E}_+(\tau) - \dot{E}_-(\tau) \right) \right| } \right|_{\tau = t^*} \times \epsilon \times | c_+(0) |^2 + o(\epsilon).
\end{equation}
When the initial data is normalized: $|c_+(0)|^2 = 1$, expressions \eqref{eq:inter_band_transition_prob} and \eqref{eq:LZ_time_derivatives} may be interpreted as the `inter-band transition probability'.

\section{Proof that the `inter-band coupling coefficient' vanishes for trivial crossings} \label{app:coupling_coefficient_zero}

\subsection{Formula for $\ip{\chi_-(\cdot;p^*)}{\de_p \chi_+(\cdot;p^*)}$ from symmetry of Bloch band}
Let $E(p)$, $\chi(z;p)$ denote an eigenpair of \eqref{eq:reduced_eigenvalue_problem}. Then:
\begin{equation}
\begin{split}
	&H(2\pi - p) e^{- 2 \pi i z} \overline{ {\chi}(z;p) } = e^{- 2 \pi i z} H(- p) \overline{ {\chi}(z;p) }  	\\
	&= e^{- 2 \pi i z} \overline{ H(p) {\chi}(z;p) } = {E}(p) e^{- 2 \pi i z} \overline{\chi(z;p)}	\\
	&e^{- 2 \pi i (z + 1)} \overline{ {\chi}(z + 1;p) } = e^{- 2 \pi i z} \overline{ {\chi}(z;p) }.
\end{split}
\end{equation}
Hence, for $p \in \mathcal{B}$ such that the eigenvalue $E(p)$ is non-degenerate, ${E}(p)$ and ${\chi}(z;p)$ obey the symmetry (after possibly multiplying $\chi(z;p)$ by a constant): 
\begin{align} \label{eq:symmetree}
	&{E}(2 \pi - p) = {E}(p) && {\chi}(z;2\pi - p) = e^{- 2 \pi i z} \overline{{\chi}(z;p)}.
\end{align}
Note further that the symmetry \eqref{eq:symmetree} implies that the eigenvalue $E(2\pi - p)$ is non-degenerate if and only if $E(p)$ is; if it were not, we could use \eqref{eq:symmetree} to generate two linearly independent eigenfunctions with eigenvalue $E(p)$ from those with eigenvalue $E(2\pi - p)$. 

Now, let $E_{n}(p), E_{n+1}(p)$ denote eigenvalue bands of \eqref{eq:reduced_eigenvalue_problem} which cross at $p = \pi$ (Without loss of generality: the case where the crossing takes place at $p = 0$ is similar and these are the only possibilities (Theorem \ref{th:all_crossings_smooth})), and fix the Brillouin zone: $\mathcal{B} = [0,2\pi]$. Let $E_+(p)$, $E_-(p)$ and $\chi_+(z;p)$, $\chi_-(z;p)$ denote the smooth eigenpairs defined in a neighborhood $U$ of $\pi$ by \eqref{eq:def_smooth_bands}. It follows from \eqref{eq:symmetree} that for $p$ away from the degeneracy at $\pi$, $\chi_+(z;p)$ and $\chi_-(z;p)$ obey the symmetry: 
\begin{equation} \label{eq:pm_symmetree}
	\chi_-(z;p) = e^{- 2 \pi i z} \overline{ \chi_+(z;p) }, \quad p \in U \setminus \{ \pi \}.
\end{equation}
But now recall that the maps $\chi_+(z;p), \chi_-(z;p)$ are smooth at $p = \pi$, hence:
\begin{equation}
	\chi_-(z;\pi) = \lim_{p \uparrow \pi} \chi_-(z;p) = \lim_{p \uparrow \pi} e^{- 2 \pi i z} \overline{ \chi_+(z;p) } = e^{- 2 \pi i z} \overline{ \chi_+(z;\pi) }.
\end{equation}
It follows that \eqref{eq:pm_symmetree} holds for every $p \in U$:
\begin{equation} \label{eq:chi_-}
	\chi_-(z;p) = e^{- 2 \pi i z} \overline{ \chi_+(z;p) }, \quad p \in U.
\end{equation}
Substituting \eqref{eq:chi_-} into the formula for the `inter-band coupling coefficient' \eqref{eq:coupling_coefficient} gives:
\begin{equation} \label{eq:formula_for_coupling_coeff}
	\ip{\chi_-(z;\pi)}{\de_p \chi_+(z;\pi)} = \ip{e^{- 2 \pi i z} \overline{ \chi_+(z;p) }}{\de_p \chi_+(z;\pi)} = \inty{0}{1}{ e^{2 \pi i z} \chi_+(z;\pi) \de_p \chi_+(z;\pi) }{z}.
\end{equation}

\subsection{Proof that coefficient vanishes for trivial crossings} 
Now, suppose that $E_n(p)$ and $E_{n+1}(p)$ cross \emph{trivially} in the sense that $V(z) = V_{1/2}(z)$, where $V_{1/2}(z)$ denotes a $1/2$-periodic function, and the smooth band functions $E_+(p), E_-(p)$ and associated eigenfunctions $\chi_+(z;p)$, $\chi_-(z;p)$ defined in a neighborhood of $p = \pi$ satisfy (all equality of eigenfunctions understood as holding up to a constant phase):
\begin{align}
	&E_+(p) = \tilde{E}(p) &&\chi_+(z;p) = \tilde{\chi}(z;p) \nonumber	\\
	&E_-(p) = \tilde{E}(2\pi + p) &&\chi_-(z;p) = \tilde{\chi}(z;2\pi + p) \label{eq:smooth_bands_from_folding}
\end{align}
where $\tilde{E}(p)$ is an eigenvalue band of the Bloch eigenvalue problem \eqref{eq:reduced_eigenvalue_problem} with potential $V(z) = V_{1/2}(z)$ and \underline{$1/2$ -periodic} boundary conditions, considered on the Brillouin zone $[0,4\pi]$ (see Figure \ref{fig:folded_bands}). 

\eqref{eq:smooth_bands_from_folding} in particular implies that $\chi_+(z;p)$, $\de_p \chi_+(z;p)$, and the function:
\begin{equation} \label{eq:one_half_periodic}
	\chi_+(z;p) \de_p \chi_+(z;p),	
\end{equation}
are all $1/2$ -periodic for all $p \in U$. It follows that the function \eqref{eq:one_half_periodic} has for all $p \in U$ a convergent Fourier series with only \emph{even index} modes:
\begin{equation} \label{eq:even_index_FS}
\begin{split}
	&\chi_+(z;p) \de_p \chi_+(z;p) = \sum_{m \in 2 \field{Z}} c_m e^{2\pi i m z},	\\
	&c_m := \inty{0}{1}{ e^{- 2 \pi i m z'} \chi_+(z';p) \de_p \chi_+(z';p) }{z'}.
\end{split}
\end{equation}
Substituting \eqref{eq:even_index_FS} into \eqref{eq:formula_for_coupling_coeff} we have that: 
\begin{equation}
	\ip{\chi_-(z;\pi)}{\de_p \chi_+(z;\pi)} = \sum_{m \in 2 \field{Z}} c_m \inty{0}{1}{ e^{2 \pi i z} e^{2 \pi i m z} }{z},
\end{equation}
where the $c_m$ are as in \eqref{eq:even_index_FS}. But:
\begin{equation}
	\inty{0}{1}{ e^{2 \pi i (m + 1) z} }{z} = 0, \quad m \in 2 \field{Z} \, .
\end{equation}
Hence $\ip{\chi_-(z;\pi)}{\de_p \chi_+(z;\pi)} = 0$ as required. 

\end{appendices}

\printbibliography

\end{document}